\DeclareMathOperator{\Ker}{Ker}
\DeclareMathOperator{\Id}{Id}
\newcommand{\V}{\mathbb V}
\newcommand{\W}{\mathbb W}
\newcommand{\symb}{\it{symb}}
\newcommand{\Hom}{\mathrm{Hom}}
\newcommand{\End}{\mathrm{End}}
\newcommand{\SO}{\mathrm{SO}}
\newcommand{\Spin}{\mathrm{Spin}}
\newcommand{\CSpin}{\mathrm{CSpin}}
\newcommand{\Cl}{C\ell}
\newcommand{\stab}{\mathfrak{stab}}
\newcommand{\fso}{\mathfrak{so}}
\newcommand{\fsu}{\mathfrak{su}}
\newcommand{\fg}{\mathfrak{g}}
\newcommand{\fh}{\mathfrak{h}}
\newcommand{\fk}{\mathfrak{k}}
\newcommand{\fp}{\mathfrak{p}}
\newcommand{\fa}{\mathfrak{a}}
\newcommand{\fD}{\mathfrak{D}}
\newcommand{\ZZ}{\mathbb{Z}}
\newcommand{\CC}{\mathbb{C}}
\newcommand{\be}{\boldsymbol{e}}
\newcommand{\vol}{\operatorname{vol}}
\begin{document}

\title*{Remarks on highly supersymmetric backgrounds of $11$-dimensional supergravity}
% Use \titlerunning{Short Title} for an abbreviated version of
% your contribution title if the original one is too long
\author{Andrea Santi}
% Use \authorrunning{Short Title} for an abbreviated version of
% your contribution title if the original one is too long
\institute{Andrea Santi \at University of Padova, Mathematics Department ``Tullio Levi-Civita'', 35121 Padova (Italy), \email{asanti.math@gmail.com}
%\and Name of Second Author \at Name, Address of Institute \email{name@email.address}
}
%
% Use the package "url.sty" to avoid problems with special characters used in your e-mail or web address
\maketitle
%\abstract*{Each chapter should be preceded by an abstract (no more than 200 words) that summarizes the content. The abstract will appear \textit{online} at \url{www.SpringerLink.com} and be available with unrestricted access. 
%Please use the 'starred' version of the \texttt{abstract} command for typesetting the text of the online abstracts (cf. source file of this chapter template \texttt{abstract}) and include them with the source files of your manuscript. Use the plain \texttt{abstract} command if the abstract is also to appear in the printed version of the book.}
%\abstract{{\color{red}(no more than 200 words). The abstract will appear \textit{online} at \url{www.SpringerLink.com} and be available with unrestricted access. 
%Please use the 'starred' version of the \texttt{abstract} command for typesetting the text of the online abstracts (cf. source file of this chapter template \texttt{abstract}) and include them with the source files of your manuscript. Use the plain \texttt{abstract} command if the abstract is also to appear in the printed version of the book.}}
\abstract{This note focuses on some properties and uses of filtered deformations in the context of $D=11$ supergravity. We define the concept of abstract symbol and
give a strong version of the Reconstruction Theorem, namely a bijective correspondence 
from the space of highly supersymmetric supergravity backgrounds to the space of abstract symbols.
We propose a general strategy to construct highly supersymmetric supergravity backgrounds
and present an example in detail, which includes the computation of the ideal generated by the Killing spinors of two known
pp-wave backgrounds with $N=24$ supersymmetry. 
Finally, we give an alternative proof, based on the isotropy algebra of a supergravity background, of 
a classical supersymmetry gap result of Gran, Gutowski, Papadopoulos and Roest. 
}
\section{Introduction}
\label{sec:introduction}
An important line of research in string and M-theory is the construction of backgrounds of their low-energy effective counterparts, i.e., supergravity, since this is where 
strings and higher-dimensional branes can propagate.
%the quantization of string theory can be carried out. %in certain backgrounds.
Backgrounds preserving maximal or near to maximal supersymmetry are important classes of such solutions.

In particular, the classical limit of M-theory is supergravity in $D=11$ dimensions. 
From a geometric perspective, a background of $D=11$ supergravity consists 
of an $11$-dimensional Lorentzian spin manifold $(M,g)$ with a closed $4$-form $F\in\Omega^4(M)$, satisfying the following
coupled system of PDEs:
\begin{equation}
    \label{eq:bosfieldeqs}
    \begin{split}
      d\star F &= \tfrac12 F\wedge F\;,\\
      \operatorname{Ric}(X,Y) &=\tfrac12 g(\imath_XF,\imath_Y F) - \tfrac16
      \|F\|^2 g(X,Y)\;,
    \end{split}
  \end{equation}
  for all $X,Y\in\mathfrak{X}(M)$. These are %Equations     \eqref{eq:bosfieldeqs} are 
	the bosonic field
  equations of $D=11$ supergravity. 
	
	Let $S(M)\to M$ be the spinor bundle of $(M,g)$, with typical fiber the irreducible module $S\cong\mathbb R^{32}$ of the Clifford algebra $\Cl(V)\cong 2\mathbb R(32)$.
	A spinor field $\varepsilon\in\Gamma(S(M))$ is called a Killing spinor if we have
	\begin{equation}
	\label{eq:Killingspinor-equation}
	\nabla_X \varepsilon-\tfrac1{24} (X \cdot F - 3 F \cdot X)\cdot\varepsilon=0
	\end{equation}
for all $X\in\mathfrak X(M)$, where $\nabla$ is the Levi-Civita connection and $\cdot$ Clifford multiplication.
The connection on $S(M)$ defined by the L.H.S. of 	\eqref{eq:Killingspinor-equation} is called the ``superconnection'' and the amount of spinors it preserves, i.e., Killing spinors, is an important invariant of a supergravity background. 

Backgrounds with $16<N\leq  32$ Killing spinors are called {\it highly supersymmetric} and their classification 
is widely open. Although this might be a distant goal, a number of different
solutions have been proposed in the literature:
%The classification problem of highly-supersymmetric supergravity backgrounds  Although the full classification might still be a distant goal, 
\begin{itemize}
	\item Maximally supersymmetric backgrounds, i.e., with $N = 32$ Killing spinors,
	\item pp-waves backgrounds, i.e., Brinkmann spaces with flat transverse geometry,
	\item G\"odel backgrounds, i.e., solutions admitting closed timelike curves.
\end{itemize}
Maximally supersymmetric backgrounds were classified in \cite{FOFP} up to local isometry (see \cite{FOFS2017} for a purely Lie theoretic proof in the spirit of this note): one has Minkowski spacetime, the Freund-Rubin backgrounds $Ad_4\times S^{7}$ and $Ad_7\times S^{4}$, and the 
Kowalski-Glikman symmetric pp-wave.
%\begin{align*}
 %\begin{array}{|c|c|c|c|} \hline
%\text{Name}  &\text{Background} & \varphi=(F|_o)^\sharp & \text{Killing superalgebra}  \\ \hline\hline
%\text{Freund--Rubin} & Ad_4\times S^{7} & \be_{0123} & \mathfrak{osp}(8|4) \\ \hline
%\text{Freund--Rubin} & Ad_7\times S^4 & \be_{1234} & \mathfrak{osp}(2,6|4) \\ \hline
%\text{Kowalski-Glikman} & \text{Cahen-Wallach symmetric space} & \be_{+123} & \text{solvable Lie superalgebra} \\ \hline
%\text{Flat background} &	\text{Minkowski spacetime}\;  \mathbb R^{1,10} & 0 & \text{Poincar\'e superalgebra}\;\fp \\ \hline
%\end{array}
%\end{align*}
A supersymmetry gap result is also known:
\begin{theorem}
\cite{GGPR}
\label{thm:No-Go31}
If a background of $D=11$ supergravity has at least $31$ Killing spinors, then it is locally isometric to a maximally supersymmetric background.
\end{theorem}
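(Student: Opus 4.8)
The plan is to assume the background has exactly $N=31$ Killing spinors and to study the isotropy representation at a point closely enough to force a $32$nd one; the conclusion then follows from the classification of maximally supersymmetric backgrounds, since none of the four of them has exactly $31$ Killing spinors. \textbf{Step 1.} As $31>16$, invoke the homogeneity theorem: the background is locally homogeneous and the Killing vectors span $T_pM$ at every $p$. Squaring Killing spinors yields the symmetry superalgebra $\fg=\fg_{\bar 0}\oplus\fg_{\bar 1}$ with $\fg_{\bar 1}=\mathcal K$ the $31$-dimensional space of Killing spinors and $\fg_{\bar 0}$ the Lie algebra of symmetries of $(M,g,F)$, acting infinitesimally transitively. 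Fix $p$, let $\fh\subset\fg_{\bar 0}$ be its isotropy subalgebra, so $\fg_{\bar 0}\cong T_pM\oplus\fh$ as $\fh$-modules, and regard $\fh\subseteq\fso(T_pM)\cong\fso(10,1)$ acting on the spinor module $S\cong S_pM$ via the isotropy representation and its spin lift.

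\textbf{Step 2.} Evaluation at $p$ is injective on $\mathcal K$ — a Killing spinor is parallel for the superconnection — so it identifies $\mathcal K$ with a $31$-dimensional subspace $\mathcal K_p\subset S$. The key point is that $\mathcal K_p$ is $\fh$-invariant: for a symmetry $X$ of the background with $X_p=0$, the spinorial Lie derivative $\mathcal L_X$ preserves $\mathcal K$ and acts on $S$ at $p$ exactly as the spin lift of $\nabla X|_p\in\fso(T_pM)$. Since $S$ carries a $\Spin(10,1)$-invariant symplectic form $\beta$, the hyperplane $\mathcal K_p$ is coisotropic and its $\beta$-radical $\ell:=\mathcal K_p^{\perp_\beta}\subset\mathcal K_p$ is an $\fh$-invariant line. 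Writing $\ell=\langle v\rangle$ with $v\neq 0$, we get $\fh\subseteq\stab_{\fso(10,1)}(\langle v\rangle)$.

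\textbf{Step 3.} The spinor $v$ lies in one of the two $\Spin(10,1)$-orbits on $S\setminus\{0\}$, according to whether its Dirac current is timelike or null, so $\stab_{\fso(10,1)}(\langle v\rangle)$ is a small known subalgebra — of type $\fsu(5)$ in the timelike case, of type $\fso(7)\ltimes\RR^{8}$ in the null case, up to an at-most-one-dimensional scaling ambiguity from passing to the line. For each case one then has: the decomposition of $S$ into $\fh$-isotypic pieces; the $\fh$-equivariant squaring map $\mathcal K_p\odot\mathcal K_p\to T_pM$, surjective by homogeneity; and — decisively — the algebraic identities imposed on the symbol of $\fg$ at $p$ by the bosonic field equations \eqref{eq:bosfieldeqs} and by the integrability of \eqref{eq:Killingspinor-equation}. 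Feeding these in, one checks that no $\fh$-invariant hyperplane $\mathcal K_p$ is compatible with the identities unless $\mathcal K_p=S$; equivalently, no admissible abstract symbol has odd part of dimension exactly $31$. By the strong Reconstruction Theorem the background has $N=32$, hence is locally isometric to one of the four maximally supersymmetric backgrounds.

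\textbf{Main obstacle.} I expect the crux to be Step 3: turning the field equations and the Killing-spinor integrability conditions into sharp constraints on the isotropy-graded data, and then eliminating, orbit by orbit, every $\fh$-invariant hyperplane of $S$. The null-current case is the delicate one, since there $\stab_{\fso(10,1)}(\langle v\rangle)$ is largest and acts on $S$ least rigidly, whereas the timelike case — with $\fh$ essentially $\fsu(5)$ acting on $S$ through a sum of a few irreducibles — should be comparatively routine. A minor preliminary is to ensure the homogeneity input is genuinely available at the level of $\fg$, i.e.\ that $\fg_{\bar 0}$ acts transitively and not merely up to closure, so that $\fg_{\bar 0}\cong T_pM\oplus\fh$ is legitimate.
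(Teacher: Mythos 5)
Your Steps 1--2 correctly reproduce the setup the paper uses: evaluation at a point identifies the $31$ Killing spinors with a hyperplane $S'\subset S$, which is necessarily the symplectic orthogonal $(\mathbb R s)^\perp$ of an isotropy-invariant line, and the isotropy algebra $\fh$ lands in $\stab_{\fso(V)}(\mathbb R s)$ --- with the two cases (timelike/null Dirac current of $s$) exactly as in the paper. The genuine gap is Step 3, which is where the entire content of the theorem lives: ``feeding these in, one checks that no $\fh$-invariant hyperplane is compatible'' is a plan, not an argument. You never exhibit the mechanism that actually rules out $N=31$. Worse, the mechanism you propose --- the bosonic field equations plus the integrability conditions of the superconnection --- is essentially the original Gran--Gutowski--Papadopoulos--Roest route, which this paper is explicitly written to avoid; and in the regime $\dim S'>16$ the bosonic field equations are automatically implied by the Killing spinor equations (Theorem \ref{thm:JA}), so ``feeding them in'' adds no constraints beyond what the Killing superalgebra already encodes. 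You would still have to carry out the full curvature/holonomy analysis of \cite{GGPR}.

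What actually closes the proof in the paper (timelike case) is much more economical and purely algebraic: since $\fh=\gamma^\varphi(\fD)$ where $\fD$ is the Dirac kernel of $S'$, and $\fh\subset\fsu(5)\subset\stab_{\fso(V)}(\xi)$, one gets the single scalar identity $\eta(\gamma^\varphi(\omega)v,\xi)=0$ for all $\omega\in\fD$ and $v\in V$. Writing $\varphi=\xi\wedge\phi+\Phi$, this becomes a pairing of $\imath_v\phi$ and $\imath_v\bigstar\Phi$ against the $\Lambda^2$- and $\Lambda^5$-components of $\omega$. The real work is Proposition \ref{Dirac-kernel-properties}: a careful $\fsu(5)$-isotypic decomposition of $\odot^2\mathbb S$ showing that $\fD$ is large enough (it contains, modulo controlled pieces, the modules $\Lambda^{(2,0)}\oplus\Lambda^{(0,2)}$, $\Lambda^{(4,1)}\oplus\Lambda^{(1,4)}$ and $\Lambda^{(3,2)}\oplus\Lambda^{(2,3)}$) to force every type component of $\phi$ and $\bigstar\Phi$ to vanish. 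Hence $\varphi=0$, the background is locally Minkowski, and $N=32$, contradicting $N=31$. If you want to complete your proposal along the paper's lines, the missing ingredient is precisely this quantitative lower bound on the Dirac kernel; your intuition that the null case is the delicate one is shared by the paper, which treats only the timelike case in detail.
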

The proof relies on a careful analysis of the integrability conditions on the curvature of the superconnection that arise 
from Killing spinor equations, bosonic field equations, Bianchi identities of the Riemann curvature and $dF=0$. The result was further  extended to $N=30$ in \cite{GGP}. At present, the highest number of Killing spinors
known for a non-maximally supersymmetric background is $N=26$, for 
the pp-wave of \cite{M}.
Highly supersymmetric pp-waves are known with any even number $18\leq N\leq 24$ of Killing spinors \cite{GH}, whereas typically $N=20$  %or $N=18$ 
for G\"odel backgrounds \cite{HT}.

Few general structural results were known until recently:
\begin{theorem} 
\label{thm:local-homogeneity}
\cite{FOFH, FOFMP} Let $(M,g,F)$ be a background of $D=11$ supergravity. Then:
\begin{enumerate}
\item There exists an associated Lie superalgebra $\fk=\fk_{\bar 0}\oplus \fk_{\bar 1}$,
where $\fk_{\bar 0}$ is the space of Killing vectors preserving $F$ and $\fk_{\bar 1}$ the space of Killing spinors;
\item If $(M,g,F)$ is highly supersymmetric then it is locally
homogeneous.
\end{enumerate}
\end{theorem}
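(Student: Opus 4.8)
The plan is to exhibit the three brackets and then verify the $\ZZ_2$-graded Jacobi identities. On $\fk_{\bar 0}$ one uses the Lie bracket of vector fields; since $[\mathcal L_X,\mathcal L_Y]=\mathcal L_{[X,Y]}$, this preserves the space of Killing vectors annihilating $F$, so $\fk_{\bar 0}$ is an ordinary Lie algebra. For $X\in\fk_{\bar 0}$ and $\varepsilon\in\fk_{\bar 1}$ one sets $[X,\varepsilon]:=\mathcal L_X\varepsilon$, the spinorial (Kosmann) Lie derivative: because $X$ preserves $g$ and $F$ it preserves all the data entering \eqref{eq:Killingspinor-equation}, hence commutes with the superconnection and maps Killing spinors to Killing spinors, so that $\fk_{\bar 1}$ becomes a $\fk_{\bar 0}$-module. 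For $\varepsilon_1,\varepsilon_2\in\fk_{\bar 1}$ one defines $[\varepsilon_1,\varepsilon_2]$ to be the Dirac current $V$, characterised by $g(V,X)=\langle\varepsilon_1,X\cdot\varepsilon_2\rangle$ for the (symplectic) $\Spin(1,10)$-invariant admissible pairing $\langle\cdot,\cdot\rangle$ on $S$, which in $D=11$ is symmetric under $\varepsilon_1\leftrightarrow\varepsilon_2$. The substantial point here is that $V\in\fk_{\bar 0}$: differentiating $g(V,-)$ and inserting \eqref{eq:Killingspinor-equation} shows $\nabla V$ is skew, so $V$ is Killing, while $\mathcal L_V F=0$ drops out after using $dF=0$, $d\star F=\tfrac12 F\wedge F$ and a Clifford/Fierz rearrangement --- this is precisely where the bosonic field equations \eqref{eq:bosfieldeqs} enter the construction.

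\textbf{Jacobi identities.} Four components must be checked. The one on $\fk_{\bar 0}^{\otimes 3}$ is Jacobi for vector fields; the one on $\fk_{\bar 0}^{\otimes 2}\otimes\fk_{\bar 1}$ records that $X\mapsto\mathcal L_X$ is a Lie algebra representation on $\fk_{\bar 1}$; the one on $\fk_{\bar 0}\otimes\fk_{\bar 1}^{\otimes 2}$ says the Dirac current is $\fk_{\bar 0}$-equivariant, which follows because $\mathcal L_X$ differentiates both $\langle\cdot,\cdot\rangle$ and Clifford multiplication. The remaining identity on $\fk_{\bar 1}^{\otimes 3}$, namely $[\varepsilon,[\varepsilon,\varepsilon]]=0$ --- equivalently $\mathcal L_{V_\varepsilon}\varepsilon=0$ with $V_\varepsilon=[\varepsilon,\varepsilon]$ --- is the crux, and is obtained by combining \eqref{eq:Killingspinor-equation} with a $D=11$ Fierz identity.

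\textbf{Part (2): local homogeneity.} Assume $M$ connected. A Killing spinor satisfies the linear first-order system $\nabla_X\varepsilon=\tfrac1{24}(X\cdot F-3F\cdot X)\cdot\varepsilon$, so it is determined by its value at a single point; hence $\ev_p\colon\fk_{\bar 1}\to S_p$, $\varepsilon\mapsto\varepsilon(p)$, is injective, and $W:=\ev_p(\fk_{\bar 1})\subseteq S_p$ has $\dim W=N>16=\tfrac12\dim S$. I claim that the squaring map $\mathrm{Sym}^2 W\to T_pM$, $(w_1,w_2)\mapsto\kappa(w_1,w_2)$ with $g(\kappa(w_1,w_2),X)=\langle w_1,X\cdot w_2\rangle$, is onto. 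If not, there is $0\neq\xi\in T_pM$ orthogonal to $\mathrm{span}\,\kappa(W,W)$, i.e.\ $\xi\cdot W\subseteq W^{\perp}$ for the symplectic form on $S$. If $\xi$ is non-null then $\xi\cdot$ is invertible, so $\dim(\xi\cdot W)=\dim W>16>32-\dim W=\dim W^{\perp}$, a contradiction. If $\xi$ is null, then $\xi\cdot$ has a Lagrangian kernel $L\subseteq S$ of dimension $16$, whence $W\cap L\neq 0$; picking $0\neq w_0\in W\cap L$ and using the $D=11$ Fierz fact that the Dirac current of a nonzero spinor is nonzero and causal, one gets $g(\kappa(w_0,w_0),\xi)=\langle w_0,\xi\cdot w_0\rangle=0$, which forces the causal vector $\kappa(w_0,w_0)$ to be a nonzero multiple of $\xi$, and a finer analysis of the $\Spin(1,10)$-bilinears then excludes this case as well. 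Consequently already $[\fk_{\bar 1},\fk_{\bar 1}]\subseteq\fk_{\bar 0}$ spans $T_pM$ at \emph{every} $p\in M$; since these vector fields are Killing and preserve $F$, the background $(M,g,F)$ is locally homogeneous.

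\textbf{Expected main obstacle.} The conceptual skeleton above is routine; the real content sits in two Fierz-theoretic steps particular to $D=11$. In Part (1) it is the identity $\mathcal L_{V_\varepsilon}\varepsilon=0$ together with the verification that $V_\varepsilon$ annihilates $F$ --- the one place where the field equations \eqref{eq:bosfieldeqs} and $dF=0$ are genuinely used. In Part (2) it is closing the null case of the squaring lemma, which requires the finer structure of spinor bilinears rather than a pure dimension count. Both are $\gamma$-matrix manipulations rather than conceptual hurdles, but they are where the argument carries its weight.
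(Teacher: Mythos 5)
This theorem is not proved in the paper: it is quoted from \cite{FOFH, FOFMP}, and the only ingredient the paper itself records is the purely algebraic ``local Homogeneity Theorem'', namely that the Dirac current $\kappa$ restricted to $\odot^2 S'$ is surjective onto $V$ for \emph{any} subspace $S'\subset S$ with $\dim S'>16$. Your skeleton is the standard one from those references, and it is the right one; but as written it has two genuine gaps, both sitting exactly where you yourself locate the weight of the argument. First, in Part (1) the odd--odd bracket landing in $\fk_{\bar 0}$ and the cubic Jacobi identity $[\varepsilon,[\varepsilon,\varepsilon]]=0$ are asserted to follow from ``a Clifford/Fierz rearrangement'' without being carried out; since these are the only non-formal steps of Part (1), the proof of Part (1) is not complete. (For $\mathcal L_VF=0$ the mechanism is that the Killing spinor equation forces $\imath_VF$ to be exact, whence $\mathcal L_VF=d\imath_VF+\imath_VdF=0$ using $dF=0$; this needs to be exhibited, not invoked.)

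Second, and more importantly, the null case of the surjectivity claim in Part (2) is not closed: ``a finer analysis of the $\Spin(1,10)$-bilinears then excludes this case as well'' is a placeholder, and your detour through $W\cap L$ with $L=\Ker(\xi\,\cdot)$ does not lead anywhere --- the identity $\langle w_0,\xi\cdot w_0\rangle=0$ is automatic for $w_0\in L$ and uses nothing. The correct closing move is to use the hypothesis on \emph{all} of $W$: if $K=\kappa(\odot^2W)\neq V$, your non-null argument shows every nonzero element of $K^\perp$ is null, so $K^\perp$ is totally isotropic and $\dim K^\perp\le 1$, i.e.\ $K^\perp=\RR\xi$ with $\xi$ null and $\dim K=10$. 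But then every $\kappa(w,w)$, $w\in W\setminus 0$, is a nonzero causal vector orthogonal to the null vector $\xi$, hence proportional to $\xi$; polarizing gives $K=\kappa(\odot^2W)\subseteq\RR\xi$, contradicting $\dim K=10$. With that paragraph supplied (plus the nondegeneracy and causality of $w\mapsto\kappa(w,w)$, which does require a bilinear computation), Part (2) follows as you say, since the transvection ideal $[\fk_{\bar1},\fk_{\bar1}]$ then evaluates onto $T_pM$ at every point.
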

The Lie superalgebra $\fk=\fk_{\bar 0}\oplus \fk_{\bar 1}$ is called the Killing superalgebra of the background. Its odd part generates an ideal $[\fk_{\bar 1},\fk_{\bar 1}]\oplus\fk_{\bar 1}$, which in this paper call the {\it transvection superalgebra}, following a similar terminology used in the context of symmetric spaces. Clearly $N=\dim\fk_{\bar 1}$ and  $\dim\fk_{\bar 1}>16$ precisely in the highly supersymmetric regime. 
These results paved the way to use Lie algebraic techniques, but unfortunately many
homogeneous supergravity backgrounds are not highly supersymmetric. 

Inspired by the superspace presentation of supergravity theories in terms of Tanaka structures \cite{SSI, SSII}, the precise algebraic structure of Killing superalgebras has been derived and further structural results obtained:
\begin{theorem}\cite{FOFS2017, FOFS2017II}
\label{thm:JA}
  Let $(M,g,F)$ be an $11$-dimensional Lorentzian spin manifold 
  with a closed $4$-form $F\in\Omega^4(M)$.  If $\dim\fk_{\bar 1} > 16$, then:
	\begin{enumerate}			
	\item The bosonic field
  equations    \eqref{eq:bosfieldeqs} of $D=11$ supergravity are automatically satisfied;	
	\item The associated Killing superalgebra is (isomorphic to) a filtered subdeformation of the Poincar\'e superalgebra, the same holds for the transvection superalgebra;
\item The background is fully determined, up to local isometry, by the pair $(\varphi, S')$, where 
\begin{equation}
\label{eq:pair-introduction}
\varphi=F^\sharp|_o\in\Lambda^4 V\quad\text{and}\quad S'=\fk_{\bar 1}|_o\subset S\;.
\end{equation}
In particular if $\varphi=0$ then $(M,g)$ is locally isometric to Minkowski spacetime. 
		\end{enumerate}
\end{theorem}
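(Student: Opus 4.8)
The plan is to pass from the analytic data $(M,g,F)$ to the algebraic data of the Killing superalgebra $\fk=\fk_{\bar 0}\oplus\fk_{\bar 1}$ and to realise $\fk$ as a \emph{filtered deformation} of a graded subalgebra of the $D=11$ Poincar\'e superalgebra $\fp=\fp_{-2}\oplus\fp_{-1}\oplus\fp_{0}$, with $\fp_{-2}=V\cong\RR^{10,1}$, $\fp_{-1}=S$, $\fp_{0}=\fso(V)$, odd bracket $\odot^{2}S\to V$ the Dirac current and $\fso(V)$ acting in the standard way. Filtered deformations of such a fundamental graded Lie superalgebra are governed by the Spencer (Chevalley--Eilenberg) cohomology $H^{d,2}(\fp_-;\fp)$ of the negative part $\fp_-=\fp_{-2}\oplus\fp_{-1}$ with adjoint coefficients, in internal degrees $d\geq1$. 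I would fix $o\in M$, identify $V\cong T_{o}M$ and $S\cong S(M)|_{o}$, and use that a section killed by the superconnection is determined by its value at $o$ to embed $\fk_{\bar 1}\hookrightarrow S$, with image $S'=\fk_{\bar 1}|_{o}$ of dimension $\dim\fk_{\bar 1}>16$; the isotropy subalgebra $\fh=\{X\in\fk_{\bar 0}:X_{o}=0\}$ embeds into $\fso(V)$ via $X\mapsto(\nabla X)_{o}$. With the filtration $\fk=\fk^{-2}\supset\fk^{-1}=\fh\oplus\fk_{\bar 1}\supset\fk^{0}=\fh\supset0$, one checks that $\gr\fk=V\oplus S'\oplus\fh$ is a graded subalgebra of $\fp$: $\fh$ preserves $S'$ because the local flow of an isotropy Killing vector fixes $o$ and preserves $g$ and $F$, hence the superconnection, hence permutes Killing spinors; and the bracket induced on $\odot^{2}S'\to V$ is the restriction of the Poincar\'e Dirac current, since the Dirac current of a Killing superalgebra is pointwise. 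That $\fk_{\bar 0}/\fh\cong V$ is where $\dim\fk_{\bar 1}>16$ enters: it forces the Dirac current restricted to $\odot^{2}S'$ to be onto $V$, so the background is locally homogeneous (Theorem~\ref{thm:local-homogeneity}(2), which I may invoke) and $\gr_{-2}\fk=V$. This already shows $\fk$, and likewise the transvection superalgebra $[\fk_{\bar 1},\fk_{\bar 1}]\oplus\fk_{\bar 1}$, is a filtered subdeformation of $\fp$, which is most of part~(2).

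Next I would unwind the deformed bracket of $\fk$ relative to $\gr\fk\subset\fp$. The only deformation data allowed by the grading of $\fp_-$ are: a map $V\otimes S'\to S'$, which the Clifford term of~\eqref{eq:Killingspinor-equation} forces to be $v\otimes s\mapsto\tfrac1{24}(v\cdot\varphi-3\varphi\cdot v)\cdot s$ with $\varphi=F^{\sharp}|_{o}$ (so $\varphi$ and $S'$ are constrained by the requirement that this action preserve $S'$); a curvature map $\Lambda^{2}V\to\fh$; and a symmetric correction $\odot^{2}S'\to\fh$. Imposing, degree by degree, the graded Jacobi identities of $\fk$ with this ansatz amounts to demanding that the cochain built from $\varphi$ be a Spencer $2$-cocycle and that the first-order deformation it defines integrate to all orders. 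One computes that $H^{d,2}(\fp_-;\fp)$ vanishes for $d\neq2$, while $H^{2,2}(\fp_-;\fp)$ is represented by a free $4$-form $\varphi\in\Lambda^{4}V$; the obstruction to integrating the associated deformation, together with the trace identities on $\varphi$ imposed by the lower-degree Jacobi relations, is exactly the system~\eqref{eq:bosfieldeqs}. Since $\fk$ exists, these identities hold — this is part~(1) — and the vanishing of $H^{d,2}$ for $d\neq2$ shows the deformation carries no moduli beyond $\varphi$: the curvature and correction maps are then determined, and $\fh$ is the joint stabiliser of $\varphi$ and $S'$ in $\fso(V)$, so $\fk$ is recovered from the pair $(\varphi,S')$ alone.

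For part~(3): a locally homogeneous Lorentzian manifold carrying an invariant closed $4$-form is determined up to local isometry by its infinitesimal data $(\fk,\fh)$ together with the invariant metric and $4$-form at $o$, so two backgrounds with $\dim\fk_{\bar 1}>16$ inducing the same $(\varphi,S')$ up to the natural action of $\CSpin(10,1)$ are locally isometric; conversely, running the construction backwards — build $\fk$ from $(\varphi,S')$, integrate to a homogeneous space, and observe that the invariant $4$-form is closed and satisfies~\eqref{eq:bosfieldeqs} by the cocycle relations — gives the Reconstruction Theorem, so the correspondence is bijective. When $\varphi=0$ the translation action on $S'$ is trivial, so $\fk_{\bar 1}$ consists of parallel spinors; the integrability condition $R(X,Y)\cdot\varepsilon=0$, holding on a subspace of $S$ of dimension $>16$, forces $R\equiv0$, since no nonzero element of $\fso(10,1)$ annihilates a subspace of $S$ of dimension greater than $16$; hence $(M,g)$ is flat, i.e.\ locally isometric to Minkowski spacetime, and $\fk$ is a graded subalgebra of $\fp$ itself.

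The main obstacle is the Spencer cohomology computation for $\fp$: decomposing $\Lambda^{\bullet}\fp_-\otimes\fp$ into $\fso(10,1)$-modules and controlling the $\Cl(10,1)$ Fierz identities, and then \emph{recognising} the resulting nonlinear obstruction as the supergravity field equations~\eqref{eq:bosfieldeqs} rather than some a priori unrelated condition. The vanishing $H^{d,2}(\fp_-;\fp)=0$ for $d\geq3$ — which is precisely what makes the single pair $(\varphi,S')$ a complete invariant — is the part I expect to demand the most care, together with checking that the Dirac current restricted to $\odot^{2}S'$ really is surjective whenever $\dim\fk_{\bar 1}>16$.
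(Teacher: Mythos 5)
This theorem is quoted from \cite{FOFS2017,FOFS2017II} rather than proved in the paper, so the only fair comparison is with the cited works and with the machinery reviewed in \S\ref{sec:2}--\S\ref{sec:3}; your sketch follows essentially that route (evaluation at a point, the filtration by the isotropy, surjectivity of the Dirac current on $\odot^2 S'$ for $\dim S'>16$, Spencer cohomology controlling the deformation, and a Nomizu-type reconstruction for uniqueness), and you correctly identify the two genuinely hard computations as the Spencer calculation and the identification of the obstruction with the field equations. Two precisions are worth recording. First, the vanishing needed for part (3) is $H^{4,2}(\fa_-,\fa)=0$ for the graded \emph{subalgebra} $\fa=\fh\oplus S'\oplus V$, not merely for $\fp$ itself (odd internal degrees are already excluded by parity); this is exactly the point the paper flags after Definition~\ref{def:realizable}. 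Second, the mechanism by which $\dim\fk_{\bar 1}>16$ enters part (1) in the references is a zero-divisor lemma: the gamma-trace of the superconnection's curvature is Clifford multiplication by an inhomogeneous element assembled from the Einstein tensor and $d\star F-\tfrac12 F\wedge F$, and any such element whose kernel in $S$ has dimension greater than $16$ must vanish; your phrase ``recognising the nonlinear obstruction as the field equations'' is where this trace-plus-rank argument lives, and without it part (1) is asserted rather than derived. Your flatness argument when $\varphi=0$ is sound and is cleanest via equivariance of the Dirac current: if $A\in\fso(V)$ annihilates $S'$ then it annihilates $\kappa(\odot^2 S')=V$, hence $A=0$.
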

We call the pair $\symb(M,g,F)=(\varphi,S')$ given in \eqref{eq:pair-introduction} the	{\it geometric symbol} of $(M,g,F)$.

This note aims to highlight some aspects and uses of the correspondence between highly supersymmetric supergravity backgrounds and transvection superalgebras. 

We will present an alternative and relatively short proof of Theorem \ref{thm:No-Go31}, which does not use the restrictions on the holonomy of the superconnection but it is based on the geometric symbol. 
More precisely, we do not require the Bianchi identities of the Riemann curvature or $dF=0$ (equivalently, the bosonic field equations)
but resort to one of the first properties satisfied by the isotropy of the transvection superalgebra of the background, and a partial knowledge of the space of bispinors.

The usage of filtered subdeformations to construct backgrounds has so far been hindered by the amount of algebraic data required to reconstruct the deformation from the symbol. We significantly simplify this data in Theorem \ref{thm:main-reduction}, paving the way
to the generation of new highly supersymmetric backgrounds with prescribed symmetries. The proposed method takes 
$\SO(V)$-orbits of four-vectors $\varphi\in\Lambda^4 V$ as starting point and their rank as a useful organizing principle. 
To date, up to my knowledge, there is no known highly supersymmetric background for which the rank is maximal, i.e., $\operatorname{rk}(\varphi)=11$, and all known backgrounds with $\operatorname{rk}(\varphi)=10$ have $N\leq 22$.
On the other hand, those with $N\geq 24$ have rank relatively small: $\operatorname{rk}(\varphi)\leq 8$. 
Roughly speaking, we might say that the more Killing spinors there are, the more the underlying geometry ought to be rigid and the four-vector less intricate.

The paper has been organized as follows.
We first review in \S\ref{sec:2} and \S\ref{sec:3} the basics of filtered subdeformations of the Poincar\'e superalgebra, including the Reconstruction Theorem \cite{FOFS2017II}
and the concepts of Dirac kernel and Lie pair also introduced  in \cite{FOFS2017II}.
A Lie pair is a pair $(\varphi,S')$ satisfying a system of coupled algebraic equations, quadratic on $\varphi$ and cubic on $S'$, which allows to recover the isotropy algebra of the transvection superalgebra.
We then prove Theorem \ref{thm:No-Go31} using Dirac kernels and Lie pairs in \S\ref{sec:4}. 

In \S\ref{sec:class-probl-highly}, we turn to study   the reconstruction problem for supergravity backgrounds: we prove Theorem
\ref{thm:main-reduction}, introduce the concept of abstract symbol in Definition \ref{def:main} and state the strong version of the Reconstruction Theorem in Theorem \ref{thm:strong-version-reconstruction}.

The last sections \S\ref{sec:6} and \S\ref{sec:7} give details of the strategy to construct supergravity backgrounds in an example, in particular we describe the transvection superalgebras of two symmetric pp-waves backgrounds with $N=24$ Killing spinors discovered in \cite{GH}. 
\section{Filtered deformations and the reconstruction theorem}
\label{sec:2}
%\subsection{The Poincar\'e superalgebra and its filtered subdeformations}
Let $(V,\eta)$ be the Lorentzian vector space with ``mostly minus'' signature $(1,10)$
and $S$ the spinor representation of $\fso(V)$.
 %and
%$S\cong\mathbb R^{32}$ is the irreducible module of the Clifford
%algebra $\Cl(V)\cong 2\RR(32)$ for which the volume element $\vol_V\in\Cl(V)$ acts as
%$\vol_V\cdot s=-s$ for all $s\in S$. 
We recall that $S$ has an
$\fso(V)$-invariant symplectic structure $\left<-,-\right>$ such that
$\left<v\cdot s_1, s_2\right> = - \left<s_1, v \cdot s_2\right>$
for all $s_1,s_2 \in S$ and $v \in V$.
%, where $\cdot$ refers to the
%Clifford action. 

The {\it Poincar\'e superalgebra} $\fp$ has underlying vector space
$\fso(V) \oplus S \oplus V$ and nonzero Lie brackets
\begin{equation*}
 %\label{eq:PSA}
    [A,B] = AB - BA\;, \qquad [A,s] = \sigma(A)s\;, \qquad [A,v] = Av\;, \qquad
    [s,s] = \kappa(s,s)\;,
\end{equation*}
where $A,B \in \fso(V)$, $v \in V$ and $s \in S$. Here $\sigma$ is the spinor representation of $\fso(V)$ and
$\kappa: \odot^2 S \to V$ is the so-called \emph{Dirac current},
defined by
\begin{equation*} 
%\label{eq:DiracCurrent}
  \eta(\kappa(s,s),v) = \left<s, v\cdot s\right>,
\end{equation*}
for all $s\in S$ and $v\in V$.  One important property of
$\kappa$ is that its restriction to
$\odot^2 S'$ is surjective on $V$, for any subspace $S'\subset S$ with $\dim S' > 16$ \cite{FOFH}. 
This algebraic fact is usually referred to as the ``local Homogeneity Theorem'', due to the
role it plays in the proof of Theorem \ref{thm:local-homogeneity}. If we grade $\fp$ so that $\fso(V)$, $S$ and $V$ have degrees
$0$, $-1$ and $-2$, respectively, then the above Lie brackets turn
$\fp$ into a $\ZZ$-graded Lie superalgebra
\begin{equation*}
  \fp = \fp_{0} \oplus \fp_{-1} \oplus \fp_{-2}, \qquad \fp_0=\fso(V),
  \qquad \fp_{-1}=S,\qquad\fp_{-2}=V.
\end{equation*}
Note that the parity is consistent with the $\mathbb Z$-grading, since
$\fp_{\bar 0} = \fp_0 \oplus \fp_{-2}$ and $\fp_{\bar 1} = \fp_{-1}$.

Let $\fa=\fh\oplus S'\oplus V$ be a graded subalgebra of $\fp$ with $\dim S'>16$ and $\fg$ a {\it filtered deformation} of $\fa$, i.e., the Lie brackets of $\fg$ take the following general form \cite{CK}:
\begin{equation}
\label{eq:generalbrackets}
  \begin{aligned}[m]
    [A,B]&=AB-BA\\
    [A,s]&=\sigma(A)s\\
    [A,v]&=Av+\delta(A,v)
  \end{aligned}
  \qquad\qquad
  \begin{aligned}[m]
    [s,s]&=\kappa(s,s)+\gamma(s,s)\\
    [v,s]&=\beta(v,s)\\
    [v,w]&=\alpha(v,w)+\rho(v,w)
  \end{aligned}
\end{equation}
where $A,B\in \fh$, $v,w\in V$ and $s\in S'$. Here $\alpha\in\Hom(\Lambda^2 V,V)$,
$\beta\in\Hom(V\otimes S',S')$, $\gamma\in\Hom(\odot^2 S',\fh)$ and
$\delta\in\Hom(\fh\otimes V,\fh)$ are maps of degree $2$, and
$\rho\in\Hom(\wedge^2V,\fh)$ of degree $4$. If we do not  mention the subalgebra $\fa$ explicitly, we simply say that
$\fg$ is a \emph{highly supersymmetric filtered subdeformation} of $\fp$. Finally, we say that 
$\fg=\fg_{\bar 0}\oplus\fg_{\bar 1}$ is {\it odd-generated} if $\fg_{\bar 0}=[\fg_{\bar 1},\fg_{\bar 1}]$. 

The notion of isomorphism $\Phi:\fg\to\widetilde\fg$ of highly supersymmetric filtered subdeformations is given in \cite[Def. 5]{FOFS2017II}: 
%and it won't be repeated here.
it suffices to know that
\begin{equation}
  \label{eq:generaliso}
  \Phi(A)=g\cdot A,\qquad
  \Phi(s)=g\cdot s,\qquad\text{and}\qquad
  \Phi(v)=g\cdot v+ X_v,
\end{equation}
for some $g\in\Spin(V)$, $X:V\to\widetilde\fh$. The notion of embedding is given in \cite[Def. 11]{FOFS2017II}.

As we now explain, we are interested in filtered deformations that are {\it realizable}, that is, corresponding to some highly supersymmetric supergravity background.

Associated to any $\varphi\in\Lambda^4 V$, there are two natural maps $\beta^\varphi:V\otimes S\to S$ and
$\gamma^\varphi:\odot^2 S\to\fso(V)$, defined by
\begin{align}
\label{eq:thetatoo}
\beta^{\varphi}(v,s)&=\tfrac1{24}(v\cdot\varphi-3\varphi\cdot v)\cdot s\;,\\
\label{eq:thetatoo-II}
\gamma^{\varphi}(s,s)v&=-2\kappa(\beta^\varphi(v,s),s)\;,
\end{align}
for all $s\in S$, $v\in V$. We will sometimes use also the notation $\beta^\varphi_v(s) = \beta^\varphi(v,s)$. 
As proven in \cite[Proposition ~7]{FOFS2017}, these maps are characterized by the ``cocyle condition''
\begin{equation}
\label{eq:secondcocyclecondition}
\sigma(\gamma^\varphi(s,s)) s = - \beta^\varphi(\kappa(s,s),s)\;,
\end{equation}
for all $s\in S$, and it turns out that the Spencer cohomology group $H^{2,2}(\fp_-,\fp)\cong\Lambda^4 V$ precisely via \eqref{eq:thetatoo}-\eqref{eq:thetatoo-II}. (In general this cohomology group encodes the Killing spinor equations for supersymmetric field theories, see \cite{FOFS2017, MFSI, MFSII} for details.)

Now, it is well-known that 
$\odot^2 S\cong\Lambda^1 V\oplus \Lambda^2 V\oplus\Lambda^5 V$
as an $\fso(V)$-module. This decomposition is unique, since all 
the three summands are $\fso(V)$-irreducible and inequivalent, so we will
consider $\Lambda^q V$ directly as a
subspace of $\odot^2 S$, for $q=1,2,5$. We decompose any element $\omega\in\odot^2 S$ into
$\omega=-\tfrac{1}{32}\big(\omega^{(1)}+\omega^{(2)}+\omega^{(5)}\big)$ accordingly,
where $\omega^{(q)}\in\Lambda^q V$ for $q=1,2,5$. The overall factor of $-\tfrac{1}{32}$
is introduced so that $\omega^{(1)}$
coincides exactly
with the Dirac current of $\omega$.
We may then re-write \eqref{eq:thetatoo-II} as:
\begin{equation}
\label{definition-gamma-useful}
\begin{aligned}
\eta(\gamma^\varphi(\omega)v,w)&=\tfrac{1}{6}\langle s, \big(2\imath_v\imath_w\varphi-v\wedge w\wedge \varphi\big)\cdot s\rangle   \\
&=\tfrac{1}{3}\eta(\imath_v\imath_w\varphi,\omega^{(2)})
+\tfrac{1}{6}\eta(\imath_v\imath_w\star\varphi,\omega^{(5)})\;,
\end{aligned}
\end{equation}
for all $v,w\in V$, where %$\langle -,-\rangle$ is the natural pairing of forms and polyvectors, and 
musical isomorphisms have been tacitly used. (This will also be the case throughout the whole paper, without further mention.)
Evidently one has $\Ker\gamma^\varphi\supset\Lambda^1 V$.
\begin{definition}\cite{FOFS2017II}
\label{def:realizable}
A highly supersymmetric filtered subdeformation $\fg$ of $\fp$ is {\it realizable} if there exists $\varphi\in\Lambda^4 V$ such that:
\begin{enumerate}
	\item $\varphi$ is $\fh$-invariant;
	\item $\varphi$ is closed, i.e., 
	\begin{equation}
	\label{eq:closed}
	d\varphi(v_0,\ldots,v_4) = \sum_{i<j}(-1)^{i+j}
      \varphi(\alpha(v_i,v_j), v_0, \ldots, \hat v_i,\ldots,\hat
      v_j,\ldots,v_4) = 0
    \end{equation}
    for all $v_0,\ldots, v_4\in V$;
	\item The components of the Lie brackets of $\fg$ of degree $2$ are of the form
	\begin{equation}
  \label{eq:KSAina}
  \begin{aligned}[m]
	 \alpha(v,w) &= X_v w - X_w v\\
   \gamma(s,s) &= \gamma^\varphi(s,s) - X_{\kappa(s,s)}
  \end{aligned}
  \qquad\qquad
  \begin{aligned}[m]  
	\beta(v,s) &= \beta^\varphi(v,s) + \sigma(X_v)s\\
  \delta(A,v) &= [A,X_v] - X_{Av}  
\end{aligned}
\end{equation}
for some linear map $X:V\to\fso(V)$, where $A,B \in \fh$, $v,w \in V$ and $s \in
S'$.
\end{enumerate}
%We call \emph{admissible} any $\varphi\in\Lambda^4 V$ which is
%$\fh$-invariant and satisfies 	\eqref{eq:closed} and   \eqref{eq:KSAina}.
\end{definition}
It is a non-trivial fact that a realizable $\fg$ has a {\it unique} $\varphi\in\Lambda^4 V$ 
that satisfies (1)-(3) of Definition \ref{def:realizable} and that the component $\rho$ of degree $4$ is fully determined by those of degree $2$.
%that satisfies   \eqref{eq:KSAina}. 
(The idea is that $\varphi$ is an element of the cohomology group $H^{2,2}(\fa_-,\fa)$ and, since $H^{4,2}(\fa_-,\fa)=0$, this group fully determines the subdeformation.) 
Changing the map $X:V\to\fso(V)$ by some values in $\fh$ gives isomorphic filtered subdeformations.

For simplicity of exposition, we will call {\it tamed} any filtered subdeformation of $\fp$ that is highly supersymmetric, odd-generated and realizable. 
We also let 
\begin{equation*}
\label{eq:1:1correspondence}
\begin{aligned}
\mathcal{SB}&=\frac{\big\{\text{highly supersymmetric supergravity bkgds}\; (M,g,F)\big\}}{\text{local isometry}}\\
\mathcal{FD}&=\frac{\big\{\text{maximal tamed filtered subdeformations $\mathfrak g$ of $\mathfrak p$}\big\}}{\text{isomorphism}}
\end{aligned}
\end{equation*}
be the moduli spaces of highly supersymmetric backgrounds and
maximal tamed filtered subdeformations. The following result was proved under the assumption that the isotropy group is closed, but it can be easily relaxed (see, e.g., \cite{Patra}).
\begin{theorem}[Reconstruction Theorem.]\cite{FOFS2017II}
\label{thm:reconstruction}
\begin{enumerate}
	\item The assignment	$
	\mathcal SB\longrightarrow\mathcal FD
	$ that sends a highly supersymmetric supergravity background to its transvection superalgebra is a $1:1$ correspondence;
\item The curvature $R:\Lambda^2 V\to\fso(V)$ of the supergravity background associated to a $\fg\in\mathcal{FD}$ is given by
$R(v,w)=\rho(v,w)-[X_v,X_w]+X_{\alpha(v,w)}$, for all $v,w\in V$.
\end{enumerate}
\end{theorem}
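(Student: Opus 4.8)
The plan is to prove part~(1) in the usual three steps — the assignment is well defined, injective, and surjective — and to read the curvature formula of part~(2) off the reconstruction performed in the surjectivity step. For the first step, let $(M,g,F)$ be a highly supersymmetric background. By Theorem~\ref{thm:local-homogeneity} it is locally homogeneous, and by Theorem~\ref{thm:JA} its transvection superalgebra $\fk^{\mathrm{tr}}=[\fk_{\bar 1},\fk_{\bar 1}]\oplus\fk_{\bar 1}$ is (isomorphic to) a filtered subdeformation of $\fp$; it is odd-generated by construction and highly supersymmetric since $\dim\fk_{\bar 1}>16$. It is moreover realizable, with distinguished four-vector $\varphi=F^{\sharp}|_o$: this $\varphi$ is $\fh$-invariant because $F$ is preserved by the Killing vectors in $\fk_{\bar 0}$, it is closed because $dF=0$ and~\eqref{eq:closed} is precisely the value of $dF$ at $o$, and the degree-$2$ brackets take the form~\eqref{eq:KSAina} with $X$ the Nomizu operator of the Levi-Civita connection — this last point being exactly the content of the Killing spinor equation~\eqref{eq:Killingspinor-equation} together with the standard expression for the curvature of the superconnection. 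Enlarging $\fk^{\mathrm{tr}}$ to a maximal tamed filtered subdeformation yields a point of $\mathcal{FD}$; since $\fk^{\mathrm{tr}}$ is defined intrinsically and the only choices entering its identification with a subdeformation of $\fp$ (a base point $o$ and a frame) are absorbed by the $\Spin(V)$-ambiguity in~\eqref{eq:generaliso}, this point depends only on the local-isometry class of $(M,g,F)$.

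\textbf{Injectivity.} Suppose the transvection superalgebras of $(M,g,F)$ and $(\widetilde M,\widetilde g,\widetilde F)$ are isomorphic as filtered subdeformations. By~\eqref{eq:generaliso} the isomorphism acts on the $\fso(V)$- and $S$-parts through some $g_0\in\Spin(V)$, hence it carries the geometric symbol $\symb(M,g,F)=(\varphi,S')$ onto $\symb(\widetilde M,\widetilde g,\widetilde F)=(\widetilde\varphi,\widetilde S')$ up to the action of $\Spin(V)$. By Theorem~\ref{thm:JA}(3) a highly supersymmetric background is determined, up to local isometry, by its geometric symbol; therefore $(M,g,F)$ and $(\widetilde M,\widetilde g,\widetilde F)$ are locally isometric.

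\textbf{Surjectivity and part~(2).} Let $\fg=\fh\oplus S'\oplus V$ be a maximal tamed filtered subdeformation, with structure maps written as in~\eqref{eq:KSAina} for the associated pair $(X,\varphi)$. Let $G_{\bar 0}$ be the simply connected Lie group with Lie algebra $\fg_{\bar 0}=\fh\oplus V$ and $H\subseteq G_{\bar 0}$ the connected subgroup integrating $\fh$; assuming $H$ closed — the general case following as indicated in the statement via \cite{Patra} — set $M=G_{\bar 0}/H$, a reductive homogeneous space with reductive complement $\fm=V\cong T_oM$. As $\eta$ and $\varphi$ are $\fh$-invariant, they define $G_{\bar 0}$-invariant tensors $g$ and $F$ on $M$, and $F$ is closed because $d\varphi=0$ is precisely~\eqref{eq:closed}. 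The canonical Nomizu operator $\Lambda_v:=X_v$ determines a $G_{\bar 0}$-invariant connection that is metric (since $X_v\in\fso(V)$) and torsion-free (its torsion at $o$ is $X_v w-X_w v-\alpha(v,w)$, which vanishes by~\eqref{eq:KSAina}), hence is the Levi-Civita connection of $g$; its curvature at $o$ is Nomizu's formula, which in the conventions of the statement reads $R(v,w)=\rho(v,w)-[X_v,X_w]+X_{\alpha(v,w)}$, with $\rho(v,w)\in\fh\subseteq\fso(V)$ acting via the isotropy representation — this is part~(2). The spin lift of this connection has Nomizu operator $\sigma\circ X$, and a short computation using~\eqref{eq:KSAina} shows that the homogeneous spinor field generated by any $s\in S'=\fg_{\bar 1}$ satisfies $\nabla_v s=\beta^{\varphi}(v,s)$, which by~\eqref{eq:thetatoo} is exactly~\eqref{eq:Killingspinor-equation} with $\varphi=F^{\sharp}|_o$. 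Hence $(M,g,F)$ is highly supersymmetric, its bosonic field equations hold automatically by Theorem~\ref{thm:JA}(1), and its transvection superalgebra — being a tamed filtered subdeformation containing $\fg$ — equals $\fg$ by maximality; so $\fg$ lies in the image.

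\textbf{Main obstacle.} I do not expect any single computation to be the real difficulty, but rather the bookkeeping in the reconstruction step. Two points stand out: the analytic issue that $H$ be closed so that $M=G_{\bar 0}/H$ is a manifold (which is why the theorem is proved first under that hypothesis and only then relaxed via \cite{Patra}); and, more essentially, the verification that the transvection superalgebra of the reconstructed $(M,g,F)$ does not strictly enlarge $\fg$ — which requires keeping careful track of the filtration and of the embedding data of \cite{FOFS2017II}, and is exactly where the maximality condition defining $\mathcal{FD}$ is used.
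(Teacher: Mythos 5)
The paper does not actually prove Theorem~\ref{thm:reconstruction}: it is imported wholesale from \cite{FOFS2017II}, with only the remark that the original proof assumes the isotropy group is closed and that this can be relaxed via \cite{Patra}. So there is no in-paper proof to compare against; judged on its own terms, your sketch follows exactly the route of the cited source — well-definedness via the Killing superalgebra structure of Theorem~\ref{thm:JA}, injectivity via the geometric symbol, surjectivity by building the reductive homogeneous space $G_{\bar 0}/H$ with invariant connection given by the Nomizu operator $X$, and part~(2) read off as Nomizu's curvature formula $R(v,w)=\rho(v,w)-[X_v,X_w]+X_{\alpha(v,w)}$ with $[v,w]_{\fh}=\rho(v,w)$ and $[v,w]_V=\alpha(v,w)$. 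You also correctly isolate the two real difficulties (closedness of $H$, and the identification of the Killing superalgebra of the reconstructed background).

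One step is off, though it is repairable. In the well-definedness paragraph you \emph{enlarge} $\fk^{\mathrm{tr}}$ to a maximal tamed filtered subdeformation in order to land in $\mathcal{FD}$. But the theorem's assignment sends a background to its transvection superalgebra itself, so what must be shown is that $\fk^{\mathrm{tr}}$ is \emph{already} maximal; enlarging it would in general define a different map. The correct argument is the one you only deploy later, in the surjectivity step: any tamed filtered subdeformation strictly containing $\fk^{\mathrm{tr}}$ has the same $\varphi$ (by uniqueness of the four-vector in Definition~\ref{def:realizable}) and a strictly larger odd part $S''\supsetneq S'$, and realizability forces every element of $S''$ to be the value at $o$ of a Killing spinor of the \emph{same} background — impossible, since $S'=\fk_{\bar 1}|_o$ already exhausts them. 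Stating this up front closes the gap and, incidentally, is precisely where the maximality condition defining $\mathcal{FD}$ earns its keep on the injectivity side as well.
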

\section{Dirac kernels and Lie pairs}
\label{sec:3}
Let $\fg$ be a tamed filtered subdeformation of $\fp$, with associated graded Lie algebra $\fa=\fh\oplus S'\oplus V$ and $\varphi\in\Lambda^4 V$. Due to Theorem \ref{thm:reconstruction}, we may call $(\varphi,S')$ the symbol of $\fg$.
Since $S'$ has dimension $\dim S'>16$, then
$\odot^2 S'\subset \odot^2 S$ projects surjectively on $\Lambda^1 V$
via the Dirac current operation.  The embedding
$
  \odot^2 S'\subset \odot^2 S\cong\Lambda^1V\oplus\Lambda^2
  V\oplus\Lambda^5 V
$
is in general diagonal and one cannot expect  $\odot^2 S'$ to contain
$\Lambda^q V$, not even if $q=1$. Restricting the
Dirac current to $\odot^2 S'$ gives rise to
a short exact sequence
\begin{equation*}
  \begin{CD} 0 @>>> \fD @>>> \odot^2 S' @>\kappa>> V @>>> 0\;,
  \end{CD}
\end{equation*}
where 
\begin{equation}
\label{eq:Dirac-kernel-definition}
  \begin{split} \mathfrak{D} &= \odot^2 S'\cap (\Lambda^2
    V\oplus\Lambda^5 V)=\left\{\omega\in\odot^2
      S'~\middle|~\omega^{(1)}=0\right\}
  \end{split}
\end{equation}
is called the \emph{Dirac kernel} of $S'$. The Dirac kernel plays a crucial role in our arguments.

A splitting of the above short exact sequence --- that is, a linear
map $\Sigma:V\to \odot^2 S'$ such that $\Sigma(v)^{(1)}=v$ for all
$v\in V$ --- is called a \emph{section} associated to $S'$. A section
associated to $S'$ always exists and it is unique up to
elements in the Dirac kernel.

\begin{definition}
  %\label{def:envelope}
  \label{def:Lie pair}
  Let $\fh_{(\varphi,S')}$ be the
  subspace of $\fso(V)$ given by
  \begin{equation}
	\label{isotropy-phi}
    \begin{split}
      \fh_{(\varphi,S')} &=\gamma^{\varphi}(\mathfrak{D})= \left\{\gamma^{\varphi}(\omega)~\middle|~\omega\in\odot^2
        S'~\text{with}~\omega^{(1)}=0\right\}.
    \end{split}
  \end{equation}
  The pair $(\varphi,S')$ is called a \emph{Lie pair} if
	\begin{equation}
	\label{eq:Lie-pair}
	\begin{aligned}
A\cdot\varphi&=0\;\\ 
 \sigma(A)S'&\subset S' 
  \end{aligned}
	\end{equation}
	for every $A\in\fh_{(\varphi,S')}$.
\end{definition}
The name ``Lie pair'' is motivated by the fact that the associated \eqref{isotropy-phi} is, in that case, a
  Lie subalgebra of $\fso(V)$.  %as follows by equivariance and \eqref{eq:Lie-pair}.
	The following result gives necessary conditions satisfied by
any tamed %filtered 
subdeformation. 
\begin{proposition}\cite{FOFS2017II}
  \label{prop:LiePair}
Let $\fg$ be a tamed filtered subdeformation,
  with underlying graded Lie algebra $\fa=\fh\oplus S'\oplus V$. Then the symbol $(\varphi,S')$ is a Lie pair and:
  \begin{enumerate}
  \item the isotropy algebra $\fh=\fh_{(\varphi,S')}$;
  \item the map $X:V\to\fso(V)$ is determined, up to elements in
    $\fh$, by the identity
    \begin{equation}
		\label{eq:map-X}
      X=\gamma^\varphi\circ\Sigma\;,
    \end{equation}
    where $\Sigma$ is any section associated to $S'$.
  \end{enumerate}  
  In particular $\fg$ is fully determined, up to isomorphism, by the associated symbol. 
	%of filtered subdeformations.
\end{proposition}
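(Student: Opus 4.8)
The plan is to establish the three assertions in turn, extracting them from the structure of realizable filtered subdeformations as codified in Definition~\ref{def:realizable} together with the cocycle characterization \eqref{eq:secondcocyclecondition} of $\gamma^\varphi$. First I would show that the symbol $(\varphi,S')$ is a Lie pair. Since $\fg$ is tamed, it is realizable, so there is a linear map $X:V\to\fso(V)$ and a (unique) closed, $\fh$-invariant $\varphi\in\Lambda^4V$ with the brackets of degree $2$ given by \eqref{eq:KSAina}; in particular $\beta(v,s)=\beta^\varphi(v,s)+\sigma(X_v)s$ preserves $S'$ for $s\in S'$, because $\fa=\fh\oplus S'\oplus V$ is a subalgebra. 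The key computation is that for any $\omega=ss\in\odot^2 S'$ with $\omega^{(1)}=\kappa(s,s)=0$ (so $\omega\in\fD$), the cocycle condition \eqref{eq:secondcocyclecondition} reads $\sigma(\gamma^\varphi(s,s))s=-\beta^\varphi(\kappa(s,s),s)=0$, and then $\sigma(\gamma^\varphi(\omega))s=\sigma(X_{\kappa(s,s)})s=0$ as well since $\kappa(s,s)=0$; comparing with $\gamma(s,s)=\gamma^\varphi(s,s)-X_{\kappa(s,s)}\in\fh$, one sees that $\gamma^\varphi(\omega)=\gamma(s,s)\in\fh$, whence $\fh_{(\varphi,S')}=\gamma^\varphi(\fD)\subset\fh$. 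Because $\varphi$ is $\fh$-invariant and $\sigma(\fh)S'\subset S'$, the defining conditions \eqref{eq:Lie-pair} of a Lie pair hold on $\fh_{(\varphi,S')}$. This also proves $\fh_{(\varphi,S')}\subset\fh$, half of~(1). I should be a little careful that $\gamma$ and $\gamma^\varphi$ are genuinely defined on the symmetric square and polarize the squaring-map identities, but this is routine.

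For the reverse inclusion $\fh\subset\fh_{(\varphi,S')}$ in~(1), I would use odd-generatedness: $\fh=[\fg_{\bar 1},\fg_{\bar 1}]\cap\fg_0=[S',S']$, and $[s,s]=\kappa(s,s)+\gamma(s,s)$ with the degree-$0$ component $\gamma(s,s)=\gamma^\varphi(s,s)-X_{\kappa(s,s)}$. So every element of $\fh$ is a sum of terms $\gamma^\varphi(s,s)-X_{\kappa(s,s)}$ with $s\in S'$. Writing $X=\gamma^\varphi\circ\Sigma$ (which is exactly what part~(2) asserts, so I prove~(2) first here), the term $X_{\kappa(s,s)}=\gamma^\varphi(\Sigma(\kappa(s,s)))$, and $ss-\Sigma(\kappa(s,s))$ has vanishing $\Lambda^1V$-component, i.e.\ lies in $\fD$; hence $\gamma^\varphi(s,s)-X_{\kappa(s,s)}=\gamma^\varphi\big(ss-\Sigma(\kappa(s,s))\big)\in\gamma^\varphi(\fD)=\fh_{(\varphi,S')}$. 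To see that $X=\gamma^\varphi\circ\Sigma$ up to $\fh$-valued maps, note that the degree-$2$ bracket data $(\alpha,\beta,\gamma,\delta)$ determine $X$ via \eqref{eq:KSAina} only up to adding an arbitrary $\fh$-valued linear map (this is the ``Changing the map $X$ by values in $\fh$'' remark), and that applying $\gamma^\varphi$ to the identity $ss = \Sigma(\kappa(s,s)) + (\text{element of }\fD)$ shows $\gamma^\varphi\circ\Sigma$ differs from $X$ by something landing in $\fh_{(\varphi,S')}\subset\fh$; the independence of $\Sigma$ modulo $\fD$ follows since $\gamma^\varphi(\fD)\subset\fh$. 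The final clause — $\fg$ is determined up to isomorphism by $(\varphi,S')$ — then follows from the parenthetical cohomological remark after Definition~\ref{def:realizable}: $\rho$ is forced by the degree-$2$ data, those in turn are given by \eqref{eq:KSAina} in terms of $\varphi$ and $X$, and $X=\gamma^\varphi\circ\Sigma$ is determined by $(\varphi,S')$ up to $\fh$-valued shifts, which by \eqref{eq:generaliso} produce isomorphic subdeformations.

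The main obstacle I expect is the bookkeeping around the non-uniqueness of $X$ and of the section $\Sigma$: one must check that every choice of $\Sigma$ gives a map $X=\gamma^\varphi\circ\Sigma$ that is an admissible representative of the $\fh$-ambiguity class of $X$, and conversely that the realizability map $X$ can be brought to this form by a shift valued in $\fh=\fh_{(\varphi,S')}$ — which is only legitimate once one already knows $\fh=\fh_{(\varphi,S')}$, so the logical order (Lie pair $\Rightarrow$ $\fh_{(\varphi,S')}\subseteq\fh$; then \eqref{eq:map-X}; then the reverse inclusion) matters. A secondary subtlety is ensuring the cocycle identity \eqref{eq:secondcocyclecondition}, stated for $s\in S$, can be used on the polarized symmetric square $\odot^2 S'$ and in particular that $\gamma^\varphi(\fD)$ is spanned by $\gamma^\varphi(ss)$ with $\kappa(s,s)=0$ rather than needing general $\omega\in\fD$; I would handle this by polarization together with the fact that $\fD$ is cut out by the single linear condition $\omega^{(1)}=0$, so its squaring-type generators suffice after polarizing the quadratic identities in $\gamma^\varphi$ and $\kappa$. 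Everything else is assembling \eqref{eq:KSAina}, \eqref{eq:secondcocyclecondition}, and the definitions.
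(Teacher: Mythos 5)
The paper does not reproduce a proof of this proposition (it is quoted from \cite{FOFS2017II}), so I can only compare your proposal with the standard argument there; your route is essentially that argument and is correct. The core steps are all present and in the right logical order: $\gamma^\varphi(\fD)\subseteq\fh$ because $\gamma(\omega)=\gamma^\varphi(\omega)-X_{\omega^{(1)}}\in\fh$ and $X_{\omega^{(1)}}=X_0=0$ on the Dirac kernel; the Lie-pair conditions \eqref{eq:Lie-pair} then follow from realizability condition (1) and the fact that $\fa$ is a graded subalgebra; \eqref{eq:map-X} follows by evaluating the same identity on $\omega=\Sigma(v)$; and the reverse inclusion $\fh\subseteq\gamma^\varphi(\fD)$ uses odd-generation. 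Two small points to tighten. First, the appeal to the cocycle condition \eqref{eq:secondcocyclecondition} in your opening computation is superfluous (and the intermediate line $\sigma(\gamma^\varphi(\omega))s=\sigma(X_{\kappa(s,s)})s$ is not meaningful as written): all you need is that $\gamma(\omega)\in\fh$ for every $\omega\in\odot^2 S'$ --- polarize \eqref{eq:KSAina} rather than worrying about whether $\fD$ is spanned by null squares $s\odot s$. Second, in the reverse inclusion you substitute $X=\gamma^\varphi\circ\Sigma$ as an \emph{exact} identity, whereas you have only proved it modulo $\fh$; taken literally this leaves an error term known only to lie in $\fh$, which is circular. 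The clean fix is that an element $A\in\fh$ is $\kappa(\omega)+\gamma(\omega)$ for a single $\omega\in\odot^2 S'$ whose degree-$(-2)$ component $\kappa(\omega)=\omega^{(1)}$ must vanish, so $A=\gamma(\omega)=\gamma^\varphi(\omega)$ with $\omega\in\fD$ and no $X$-term ever appears; alternatively, first normalize $X=\gamma^\varphi\circ\Sigma$ exactly, which is legitimate since shifting $X$ by $\fh$-valued maps yields an isomorphic subdeformation with the same $\fa$. With those repairs your argument is complete, including the final uniqueness clause via $H^{4,2}(\fa_-,\fa)=0$.
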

\begin{corollary}
\label{corollary:geometric-symbol}
Any highly supersymmetric supergravity background $(M,g,F)$ is fully determined by its geometric symbol $\symb(M,g,F)$, up to local isometry.
\end{corollary}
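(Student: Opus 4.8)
The plan is to deduce Corollary~\ref{corollary:geometric-symbol} directly by combining Theorem~\ref{thm:JA}(3), Theorem~\ref{thm:reconstruction}, and Proposition~\ref{prop:LiePair}. First I would observe that, by Theorem~\ref{thm:JA}(3), a highly supersymmetric supergravity background $(M,g,F)$ is determined up to local isometry by the pair $\symb(M,g,F)=(\varphi,S')$ with $\varphi=F^\sharp|_o$ and $S'=\fk_{\bar 1}|_o$; so it suffices to recognize this pair as the symbol (in the sense of \S\ref{sec:3}) of the transvection superalgebra of the background. Here I would invoke Theorem~\ref{thm:reconstruction}(1): the assignment sending $(M,g,F)$ to its transvection superalgebra $\fg$ is a bijection $\mathcal{SB}\to\mathcal{FD}$, so $\fg$ is a maximal tamed filtered subdeformation of $\fp$, with some associated graded Lie algebra $\fa=\fh\oplus S'\oplus V$ and some $\varphi\in\Lambda^4 V$ as in Definition~\ref{def:realizable}. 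One then checks that the $\varphi$ and $S'$ appearing in this algebraic data coincide with the geometric ones $F^\sharp|_o$ and $\fk_{\bar 1}|_o$: the graded component $S'=\fa_{-1}$ is precisely the value at $o$ of the space of Killing spinors, and the four-form $\varphi$ governing the degree-$2$ brackets via \eqref{eq:KSAina} is exactly the one built from $F$ through the Killing spinor equation \eqref{eq:Killingspinor-equation} compared with \eqref{eq:thetatoo}-\eqref{eq:thetatoo-II}.

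Next I would apply Proposition~\ref{prop:LiePair}: for a tamed filtered subdeformation the symbol $(\varphi,S')$ is a Lie pair, the isotropy algebra is forced to be $\fh=\fh_{(\varphi,S')}=\gamma^\varphi(\fD)$, the map $X$ is determined up to $\fh$ by $X=\gamma^\varphi\circ\Sigma$ for any section $\Sigma$, and (by the remark after Definition~\ref{def:realizable}) the degree-$4$ component $\rho$ is determined by the degree-$2$ data. Hence all the structure maps $\alpha,\beta,\gamma,\delta,\rho$ of $\fg$ are reconstructed, up to isomorphism, from $(\varphi,S')$ alone, so $\fg$ is determined up to isomorphism by its symbol. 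Combining this with the injectivity half of the bijection in Theorem~\ref{thm:reconstruction}(1) shows that $(M,g,F)$ is determined up to local isometry by $(\varphi,S')=\symb(M,g,F)$.

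The only genuinely delicate point is the identification of the geometric data $(F^\sharp|_o,\fk_{\bar 1}|_o)$ with the algebraic symbol of the transvection superalgebra --- i.e., verifying that the $\varphi$ produced abstractly by realizability in Definition~\ref{def:realizable} is literally the four-form of the background evaluated at the basepoint. This is essentially the content of how Theorem~\ref{thm:JA} and the Reconstruction Theorem are set up, and I would cite \cite{FOFS2017, FOFS2017II} for it rather than reprove it; everything else is a formal chase through Proposition~\ref{prop:LiePair} and Theorem~\ref{thm:reconstruction}. In fact, once that identification is granted, the corollary is an immediate consequence and could be stated as such.
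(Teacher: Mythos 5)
Your argument is correct and is essentially the paper's own: the corollary follows immediately by combining the bijection $\mathcal{SB}\to\mathcal{FD}$ of Theorem~\ref{thm:reconstruction}(1) with the fact, from Proposition~\ref{prop:LiePair}, that a tamed filtered subdeformation is determined up to isomorphism by its symbol, together with the identification of the algebraic symbol of the transvection superalgebra with the geometric symbol $(F^\sharp|_o,\fk_{\bar 1}|_o)$. The only cosmetic remark is that invoking Theorem~\ref{thm:JA}(3) at the outset is redundant (it already asserts the conclusion); your actual derivation via Theorem~\ref{thm:reconstruction} and Proposition~\ref{prop:LiePair} is the intended one and stands on its own.
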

\section{An alternative proof of Theorem \ref{thm:No-Go31}}
\label{sec:4}
Let $(M,g,F)$ be a background with {\it exactly} $31$ Killing spinors (locally), and $
(\varphi,S')$ the associated geometric symbol. Since $S'$ has dimension $31$, we may write
$S'=(\mathbb R s)^\perp$ as the symplectic orthogonal of the line spanned by some spinor $s\in S$ and,
as in the original proof of Theorem \ref{thm:No-Go31} in \cite{GGPR}, there are two cases to examine.

Indeed, as shown, e.g., in \cite{B}, there are two orbits of
$\Spin(V)$ on $\mathbb P(S)$, distinguished by the causal character of the associated Dirac current $\xi=\kappa(s,s)$: either
timelike or null. We give our proof of Theorem \ref{thm:No-Go31} using Dirac kernels only in the timelike case. Although the overall strategy in the null case is very close, the actual details are different and deserve a separate treatment. We leave them to the interested reader.

We first recall some known facts on realizations of the spinor module $S$ \cite{GP, GaGP} (see also e.g. \cite[Proposition 20]{FOFS2017II}, which follows the conventions used in this paper).
We normalise $s\in S$ so that $\eta(\xi,\xi)=1$ and consider the orthogonal decomposition $V=\mathbb R\xi\oplus W$. 
We set $\Omega^{(q)}=\omega^{(q)}(s,s)\in\Lambda^q V$ for $q=1,2,5$; by definition $\Omega^{(1)}=\xi$.
The endomorphism $J=\Omega^{(2)}\in\Lambda^2 V\cong\mathfrak{so}(V)$ acts trivially on $\xi$ and as a complex structure on $W$. Clearly the stabilizer
\begin{equation*}
\begin{aligned}
\stab_{\fso(V)}(s)&=\stab_{\fso(V)}(\Omega^{(1)})\cap\stab_{\fso(V)}(\Omega^{(2)})\cap\stab_{\fso(V)}(\Omega^{(5)})\\
&=\mathfrak u(W,J)\cap\stab_{\fso(V)}(\Omega^{(5)})\;,
\end{aligned}
\end{equation*}
and it turns out that $\Omega=\Omega^{(5)}+\tfrac12\Omega^{(1)}\wedge\Omega^{(2)}\wedge\Omega^{(2)}$ is the real component of a complex volume form $\Omega-i(\imath_\xi\star\Omega)$ on $W$. Ultimately $\stab_{\fso(V)}(s)=\mathfrak{su}(W,J)\cong\mathfrak{su}(5)$.

For our purposes, it is sufficient to work at the complexified level. In particular $\mathbb W=W\otimes\CC$ decomposes as the direct sum of isotropic subspaces $\mathbb W =
W^{10}\oplus W^{01}$, where $  W^{10}=\left<w^{10}=\tfrac{1}{2}(w-iJw)\;\middle |\; 
    w\in W\right>$ and $W^{01}=\overline{W^{10}}$
%\begin{equation*}
%W^{10}=\left<w^{10}=\tfrac{1}{2}(w-iJw)\;\middle |\; 
%    w\in W\right> \qquad\text{and}\qquad
%  W^{01}=\overline{W^{10}}
%\end{equation*}
are  irreducible complex
$\mathfrak{su}(5)$-modules. 
%equivalently sl(5)-modules
The Dirac spinor module $\mathbb S=S\otimes \CC$ is then given by 
\begin{equation}
\label{Dirac-spinor-module}
\mathbb S=\Lambda^{(\bullet,0)}=\bigoplus_{0\leq p\leq 5}\Lambda^{(p,0)}_{[5-2p]}\;,
\end{equation}
where $\Lambda^{(p,0)}=\Lambda^p (W^{10})^*$ is the irreducible $\fsu(5)$-module of
the $(p,0)$-forms on $\mathbb W$ and the square brackets denote the charges of $\sigma(2J)$, with imaginary units removed.

As an aside, we recall that the modules $\Lambda^{(p,q)}$ of $(p,q)$-forms are $\fsu(5)$-irreducible only for $p=0$ or $q=0$ and that for all $0\leq p,q\leq 5$ we have natural isomorphisms
$$\Lambda^{(p,0)}\cong\Lambda^{(0,5-p)}\qquad\text{and}\qquad\Lambda^{(p,q)}\cong \Lambda^{(5-q,5-p)}\;.$$  
For example $\Lambda^{(1,1)}\cong\Lambda^{(1,1)}_o\oplus\Lambda^{(0,0)}$ and $\Lambda^{(4,3)}\cong\Lambda^{(2,1)}\cong\Lambda^{(2,1)}_o\oplus\Lambda^{(1,0)}$ into irreducible $\mathfrak{su}(5)$-modules. For any
$u\in W^{10}$ and $t\in\mathbb S$, the
identities
\begin{equation}
  \label{eq:spin1}
\begin{aligned}
u\cdot t&:=-\sqrt2\, \imath_{u}t~,\\
  \overline u\cdot t&:=\sqrt2\, \overline u^{\flat}\wedge t~,\\
\xi\cdot t&:=\vol_W\cdot t\;,
\end{aligned}
\end{equation}
give an  irreducible representation of the complex Clifford algebra
$\mathbb{C}l(11) \cong2\CC(32)$, where $\vol_W\in\Lambda^{10} W$ is the volume element of $W$.
The only trivial $\fsu(5)$-submodules of $\mathbb S$ are $\Lambda^{(0,0)}_{[5]}$ and $\Lambda^{(5,0)}_{[-5]}$, more precisely we have
\begin{equation*}
\label{eq:trivial-modules}
\Lambda^{(5,0)}_{[-5]}=\big\langle s^{(5,0)}=\tfrac12 (s+i\xi\cdot s)\big\rangle\quad\text{and}\quad \Lambda^{(0,0)}_{[5]}=\big\langle s^{(0,5)}=\tfrac12 (s-i\xi\cdot s)\big\rangle\;,
\end{equation*}
whence the two elements $s=s^{(5,0)}+s^{(0,5)}$ and $\xi\cdot s=-i(s^{(5,0)}-s^{(0,5)})$ do {\it not} have a definite charge. 
The proof of the following lemma is straightforward.
\begin{lemma}
The space of bispinors $\odot^2 \mathbb S$ decomposes into $\fsu(5)$-modules as
\begin{equation}
\label{eq:decomposition-forms-su5-I}
\begin{aligned}
\odot^2 \mathbb S&\cong \Lambda^1 \V\oplus\Lambda^2 \V\oplus \Lambda^5 \V\\
&\cong \big(\Lambda^1 \W\oplus\Lambda^0 \W\big)\oplus\big(\Lambda^2 \W\oplus\Lambda^1 \W\big)\oplus\big(\Lambda^5 \W\oplus\Lambda^4 \W\big)
\end{aligned}
\end{equation}
with
\begin{equation}
\label{eq:decomposition-forms-su5-II}
\begin{aligned}
\Lambda^0 \W&=\CC\Omega^{(1)}\\
\Lambda^1 \W&=\Lambda^{(1,0)}_{[-2]}\oplus\Lambda^{(0,1)}_{[2]}\\
\Lambda^2 \W&=\Lambda^{(2,0)}_{[-4]}\oplus \Lambda^{(0,2)}_{[4]}\oplus{\color{red} \Lambda^{(1,1)}_{o[0]}}\oplus \CC\Omega^{(2)}\\
\Lambda^4 \W&=\Lambda^{(4,0)}_{[-8]}\oplus \Lambda^{(0,4)}_{[8]}\oplus
{\color{red} \Lambda^{(3,1)}_{o[-4]}}\oplus{\color{red} \Lambda^{(1,3)}_{o[4]}}
\oplus\Lambda^{(2,0)}_{[-4]}\oplus \Lambda^{(0,2)}_{[4]}\\
&\;\;\;\oplus {\color{red} \Lambda^{(2,2)}_{o[0]}}\oplus{\color{red} \Lambda^{(1,1)}_{o[0]}}\oplus \mathbb C(\Omega^{(1)}\wedge\Omega^{(2)}\wedge\Omega^{(2)})
\\
\Lambda^5 \W&=\CC\Omega\oplus\CC(\imath_\xi\star\Omega)
\oplus {\color{red} \Lambda^{(4,1)}_{o[-6]}}\oplus{\color{red} \Lambda^{(1,4)}_{o[6]}}\oplus
\Lambda^{(3,0)}_{[-6]}\oplus \Lambda^{(0,3)}_{[6]}\\
&\;\;\;\oplus{\color{red} \Lambda^{(3,2)}_{o[-2]}}\oplus{\color{red} \Lambda^{(2,3)}_{o[2]}}\oplus
{\color{red} \Lambda^{(2,1)}_{o[-2]}}\oplus {\color{red} \Lambda^{(1,2)}_{o[2]}}\oplus
\Lambda^{(1,0)}_{[-2]}\oplus\Lambda^{(0,1)}_{[2]}
\end{aligned}
\end{equation}
into $\mathfrak{su}(5)$-irreducibles. The modules not isomorphic to $\Lambda^{(p,0)}$ for some $0\leq p\leq 5$ are colored red.
The trivial $\fsu(5)$-modules generated by $\Omega^{(1)}$, $\Omega^{(2)}$ and 
$\Omega^{(1)}\wedge\Omega^{(2)}\wedge\Omega^{(2)}$ have charge zero, 
%$\Lambda^{(0,0)}_{[0]}$ 
whereas $\Omega$ together with $\imath_\xi\star\Omega$ generate the sum $\Lambda^{(5,0)}_{[-10]}\oplus \Lambda^{(0,5)}_{[10]}$ of the trivial $\fsu(5)$-modules with charge $\pm 10$.

In particular, the isotypic decomposition of $\odot^2 \mathbb S$ is
\begin{equation*}
\label{isotypic-decomposition-odot2S}
\begin{aligned}
\odot^2 \mathbb S&\cong 5\Lambda^{(0,0)}\oplus 4\left(\Lambda^{(1,0)}\oplus\Lambda^{(0,1)}\right)\oplus 3\left(\Lambda^{(2,0)}\oplus \Lambda^{(0,2)}\right)\oplus 2\Lambda^{(1,1)}_{o}\oplus \Lambda^{(2,2)}_{o}\\
&\;\;\;\oplus\left(\Lambda^{(3,1)}_o\oplus \Lambda^{(1,3)}_o\right)\oplus\left(\Lambda^{(4,1)}_o\oplus\Lambda^{(1,4)}_o\right)
\oplus\left(\Lambda^{(3,2)}_{o}\oplus\Lambda^{(2,3)}_{o}\right)
\oplus\left(\Lambda^{(2,1)}_o\oplus\Lambda^{(1,2)}_o\right),
\end{aligned}
\end{equation*}
where we have grouped conjugated modules with brackets.
\end{lemma}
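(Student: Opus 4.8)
The plan is to establish \eqref{eq:decomposition-forms-su5-I}--\eqref{eq:decomposition-forms-su5-II} by branching in two stages and then computing $\sigma(2J)$-weights. The first line of \eqref{eq:decomposition-forms-su5-I} is just the complexification of the classical $\fso(V)$-isomorphism $\odot^2 S\cong\Lambda^1 V\oplus\Lambda^2 V\oplus\Lambda^5 V$ recalled above, whose three summands are irreducible and pairwise inequivalent already over $\fso(V)$. To obtain the second line I would restrict along $\fsu(5)\cong\fsu(W,J)=\stab_{\fso(V)}(s)$: this subalgebra fixes $\RR\xi=\RR\,\Omega^{(1)}$ and acts on $V$ only through $W$, so the orthogonal splitting $V=\RR\xi\oplus W$ forces $\Lambda^q\V\cong\Lambda^q\W\oplus\Omega^{(1)}\wedge\Lambda^{q-1}\W\cong\Lambda^q\W\oplus\Lambda^{q-1}\W$ as $\fsu(5)$-modules for every $q$. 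The dimension checks $1+10=11$, $45+10=55$, $252+210=462$, together with $5+55+462=528=\dim\odot^2\mathbb S$, confirm that nothing is lost.

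Next I would decompose each block $\Lambda^k\W$ into $\fsu(5)$-irreducibles. First split into Hodge types $\Lambda^k\W=\bigoplus_{p+q=k}\Lambda^{(p,q)}$ with respect to $J$; then apply the Lefschetz/primitive decomposition with respect to the Kähler form $\Omega^{(2)}$, that is $\Lambda^{(p,q)}=\bigoplus_{j\geq 0}(\Omega^{(2)})^{j}\wedge\Lambda^{(p-j,q-j)}_o$, each primitive summand $\Lambda^{(a,b)}_o$ being a $\fu(5)$- and hence $\fsu(5)$-irreducible module. Finally I would use the $\fsu(5)$-isomorphisms induced by the holomorphic volume form on $\W$, recorded above as $\Lambda^{(p,0)}\cong\Lambda^{(0,5-p)}$ and $\Lambda^{(p,q)}\cong\Lambda^{(5-q,5-p)}$, to put the holomorphic-type and trivial pieces into the normalised form displayed in \eqref{eq:decomposition-forms-su5-II}. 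In particular the $J$-invariant lines $\CC\,\Omega^{(1)}$, $\CC\,\Omega^{(2)}$ and $\CC(\Omega^{(1)}\wedge\Omega^{(2)}\wedge\Omega^{(2)})$ are each $\fsu(5)$-trivial, while $\Omega$ and $\imath_\xi\star\Omega$ span the $\fsu(5)$-trivial pair $\Lambda^{(5,0)}\oplus\Lambda^{(0,5)}$, being the real and imaginary parts of the (anti)holomorphic volume form. A dimension count block by block again pins everything down uniquely.

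It then remains to read off the $\sigma(2J)$-charges. By \eqref{Dirac-spinor-module} the submodule $\Lambda^{(p,0)}\subset\mathbb S$ has charge $[5-2p]$, and since $\sigma(2J)$ acts as a derivation on $\odot^2\mathbb S$ a summand lying inside $\Lambda^{(p,0)}\odot\Lambda^{(p',0)}$ has charge $[10-2(p+p')]$; equivalently, one checks directly that an $(a,b)$-type form inside $\Lambda^k\W$ carries charge $[2(b-a)]$, that the three $J$-invariant lines carry charge $0$, and that $\Lambda^{(5,0)}$, $\Lambda^{(0,5)}$ carry charges $[-10]$, $[10]$. Collecting the irreducibles together with their charges, and grouping complex-conjugate pairs, yields the stated isotypic decomposition of $\odot^2\mathbb S$.

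The only real work is combinatorial bookkeeping: the same $\fsu(5)$-irreducible surfaces in several of the blocks $\Lambda^k\W$ --- and, being only an $\fsu(5)$-statement, with different $\fu(5)$-charges --- so one has to keep track of which copy of, say, $\Lambda^{(1,0)}\oplus\Lambda^{(0,1)}$, of $\Lambda^{(2,0)}\oplus\Lambda^{(0,2)}$, or of $\Lambda^{(1,1)}_o$ sits in which $\Lambda^k\W$, and to separate the five trivial summands by origin and charge. Once the Hodge and primitive decompositions are in hand this is routine, if slightly lengthy, and the global and block-wise dimension counts guard against any omission; this is what is meant by the proof being straightforward.
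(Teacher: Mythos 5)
Your proposal is correct and is exactly the intended argument: the paper omits the proof as ``straightforward,'' and the natural route is the one you take --- restrict $\Lambda^q\V\cong\Lambda^q\W\oplus\Lambda^{q-1}\W$ along $V=\RR\xi\oplus W$, split each $\Lambda^k\W$ by Hodge type and the Lefschetz (primitive) decomposition with respect to $\Omega^{(2)}$, normalise via $\Lambda^{(p,q)}\cong\Lambda^{(5-q,5-p)}$, and read off the charge $[2(b-a)]$ of a type-$(a,b)$ summand. Your dimension checks ($11+55+462=528$, and block by block) and the charge cross-check against \eqref{Dirac-spinor-module} confirm the bookkeeping, so nothing is missing.
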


Note that $\mathbb S=\big\langle s,\xi\cdot s\big\rangle\oplus \mathbb{\widetilde S}$ decomposes as the direct sum of 
orthogonal $\fsu(5)$-modules
\begin{equation}
\label{eq:ortoh-decomp}
\big\langle s,\xi\cdot s\big\rangle=\Lambda^{(5,0)}_{[-5]}\oplus\Lambda^{(0,0)}_{[5]}\quad\text{and}\quad\widetilde{\mathbb{S}}=\Lambda^{(4,0)}_{[-3]}\oplus\Lambda^{(3,0)}_{[-1]}\oplus\Lambda^{(2,0)}_{[1]}\oplus\Lambda^{(1,0)}_{[3]}\;.
\end{equation}
We would like to estimate the size of the Dirac kernel of 
$\mathbb S'=(\mathbb C s)^\perp=\mathbb C s \oplus\widetilde{\mathbb{S}}$:
\begin{proposition}
\label{Dirac-kernel-properties}
The Dirac kernel $\mathfrak D$ of $\mathbb S'$ includes at least the following elements:
\begin{enumerate}	
\item Any $\omega\in\Lambda^{(2,0)}\oplus \Lambda^{(0,2)}\subset\Lambda^2 \W$, modulo $\Lambda^4 \W\subset \Lambda^5\V$;
\item Any $\omega\in\Lambda^{(4,1)}\oplus\Lambda^{(1,4)}\subset\Lambda^5\W$, modulo $\Lambda^2\W\oplus\Lambda^4\W\subset\Lambda^2\V\oplus\Lambda^5\V$;
	\item Any $\omega\in\Lambda^{(3,2)}\oplus\Lambda^{(2,3)}\subset\Lambda^5\W$, modulo $\Lambda^4\W\subset\Lambda^5\V$;
\end{enumerate}
\end{proposition}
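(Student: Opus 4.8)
The plan is to produce explicit elements of $\mathfrak D$ and to read off their components in the refined decomposition $\odot^2\mathbb S\cong\big(\Lambda^1\W\oplus\Lambda^0\W\big)\oplus\big(\Lambda^2\W\oplus\Lambda^1\W\big)\oplus\big(\Lambda^5\W\oplus\Lambda^4\W\big)$ of the Lemma, using two bookkeeping devices: the $\sigma(2J)$-charge and the explicit fermionic model \eqref{eq:spin1}. The crucial point is that the identification $\odot^2\mathbb S\cong\Lambda^1\V\oplus\Lambda^2\V\oplus\Lambda^5\V$ is $\fso(V)$-, hence $\sigma(2J)$-equivariant, and that $\Lambda^1\V$ carries only the charges $0,\pm2$; therefore any bispinor whose $\sigma(2J)$-charge lies outside $\{0,\pm2\}$ has vanishing Dirac current, so it belongs to $\mathfrak D$ as soon as it belongs to $\odot^2\mathbb S'$. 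Since $\widetilde{\mathbb S}\subset\mathbb S'$ carries the charges $\pm1,\pm3$ while $s=s^{(5,0)}+s^{(0,5)}$ splits across the charges $\mp5$, the $\odot$-products of these generators provide both the ``leading'' pure-charge elements of $\mathfrak D$ and the off-diagonal error terms appearing in the statement. I will also use the elementary splitting $\mathbb S=\mathbb S'\oplus\mathbb C(\xi\cdot s)$ (valid since $\langle\xi\cdot s,s\rangle=\eta(\xi,\xi)=1\neq0$), which gives $\odot^2\mathbb S=\odot^2\mathbb S'\oplus\big(\mathbb S\odot(\xi\cdot s)\big)$ with $\mathbb S\odot(\xi\cdot s)\cong\mathbb S$ as $\fsu(5)$-modules.

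For (1) I would take $\omega=t_1\odot t_2$ with $t_1\in\Lambda^{(3,0)}_{[-1]}$, $t_2\in\Lambda^{(4,0)}_{[-3]}$ (both inside $\widetilde{\mathbb S}\subset\mathbb S'$), and symmetrically $t_1\in\Lambda^{(2,0)}_{[1]}$, $t_2\in\Lambda^{(1,0)}_{[3]}$ for the conjugate summand. Such $\omega$ has pure charge $\mp4$, a value occurring neither in $\Lambda^1\V$, nor in the $\Lambda^1\W$-summand of $\Lambda^2\V$, nor in the $\Lambda^5\W$-summand of $\Lambda^5\V$; hence $\omega^{(1)}=0$ and the only components of $\omega$ that can survive lie in $\Lambda^{(2,0)}_{[\mp4]}\subset\Lambda^2\W$ (inside $\Lambda^2\V$) and in $\Lambda^{(2,0)}_{[\mp4]}\subset\Lambda^4\W$ (inside $\Lambda^5\V$) — exactly the shape ``modulo $\Lambda^4\W$'' asked for. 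A direct evaluation in \eqref{eq:spin1} — testing the $\Lambda^2\W$-component of $t_1\odot t_2$ against a $(2,0)$-form $e\wedge e'$, $e,e'\in W^{10}$, via $\langle t_1,(e\wedge e')\cdot t_2\rangle$ — shows it is nonzero for suitable $t_1,t_2$; since $\Lambda^{(2,0)}$ is $\fsu(5)$-irreducible and the products $t_1\odot t_2$ span a copy of $\Lambda^{(3,0)}\otimes\Lambda^{(4,0)}$, every element of $\Lambda^{(2,0)}\oplus\Lambda^{(0,2)}\subset\Lambda^2\W$ is realized, which is (1).

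For (2) and (3) I would first note that the primitive modules $\Lambda^{(4,1)}_o$ and $\Lambda^{(3,2)}_o$ each occur in $\odot^2\mathbb S$ with multiplicity one, necessarily inside $\Lambda^5\W$, and are not constituents of $\mathbb S\odot(\xi\cdot s)\cong\mathbb S$ (whose constituents are all of type $\Lambda^{(p,0)}$); hence by the splitting above they already lie in $\odot^2\mathbb S'$, and by Schur they are killed by $\kappa$ (no such type occurs in $\V$), so they lie in $\mathfrak D$ and, being ``pure'', entirely inside $\Lambda^5\W$. The remaining trace contributions in $\Lambda^{(4,1)}\cong\Lambda^{(4,1)}_o\oplus\Lambda^{(3,0)}$ and $\Lambda^{(3,2)}\cong\Lambda^{(3,2)}_o\oplus\Lambda^{(2,1)}$, together with the stated errors, are produced by the mixed-charge bispinors $t_1\odot s$ with $t_1\in\Lambda^{(3,0)}_{[-1]}$ (resp. $t_1\in\Lambda^{(1,0)}_{[3]}$): since $s$ is not a $\sigma(2J)$-eigenvector, $t_1\odot s$ splits into a charge $-6$ (resp. $-2$) part landing in $\Lambda^5\W$, and a complementary charge $+4$ (resp. $+8$) part, coming from $t_1\odot s^{(0,5)}$, which lands in $\Lambda^{(0,2)}_{[4]}\subset\Lambda^2\W\oplus\Lambda^4\W$ (resp. in $\Lambda^{(0,4)}_{[8]}\subset\Lambda^4\W$) — precisely the errors ``modulo $\Lambda^2\W\oplus\Lambda^4\W$'' (resp. ``modulo $\Lambda^4\W$'') of the statement; in case (3) one first subtracts an element of $\odot^2\mathbb S'$ with the same Dirac current so as to land in $\mathfrak D$. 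A final check in \eqref{eq:spin1} confirms the relevant $\Lambda^5\W$-components are nonzero, and $\fsu(5)$-irreducibility spreads the non-vanishing over the full modules; the $\fsu(5)$-type they can hit in $\Lambda^5\W$ is forced by Schur.

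The main obstacle is precisely these explicit verifications: that the bispinor in (1) has nonzero $\Lambda^2\W$-component, and that the bispinors $t_1\odot s$ in (2)–(3) have nonzero $\Lambda^5\W$-component. These are short calculations in the model \eqref{eq:spin1}, but carrying them out alongside the charge bookkeeping across the two overlapping $\Lambda^\bullet\W$-decompositions of $\Lambda^2\V$ and $\Lambda^5\V$ is the fiddly part; the structural steps (the charge constraint on $\Lambda^1\V$, the splitting $\odot^2\mathbb S=\odot^2\mathbb S'\oplus\mathbb S\odot(\xi\cdot s)$, Schur's lemma, and irreducibility) are routine.
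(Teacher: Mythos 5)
Your strategy is essentially the paper's: the same splittings $\mathbb S=\langle s,\xi\cdot s\rangle\oplus\widetilde{\mathbb S}$ and $\odot^2\mathbb S=\odot^2\mathbb S'\oplus\big(\mathbb S\odot(\xi\cdot s)\big)$, the same $\sigma(2J)$-charge bookkeeping (charges outside $\{0,\pm2\}$ kill the Dirac current), and the same test bispinors ($t_1\odot t_2$ with $t_1\in\Lambda^{(3,0)}_{[-1]}$, $t_2\in\Lambda^{(4,0)}_{[-3]}$ for (1); $t_1\odot s$ for the trace parts of (2)--(3)). Parts (1) and (2) and the primitive summands are handled correctly: your remark that $\mathbb S\odot(\xi\cdot s)\cong\mathbb S$ has only constituents of type $\Lambda^{(p,0)}$ is a clean way of placing all multiplicity-one modules not of that type inside $\odot^2\mathbb S'$, and in (2) charge and type really do force $t_1\odot s^{(5,0)}$ into the unique summand $\Lambda^{(3,0)}_{[-6]}\subset\Lambda^5\W$, so no explicit Clifford computation is even needed there.

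The gap is in the non-primitive part of (3). For $t_1\in\Lambda^{(1,0)}_{[3]}$ you assert that the charge $-2$ component $t_1\odot s^{(5,0)}$ of $t_1\odot s$ ``lands in $\Lambda^5\W$''. This is not forced by charge and type: the charge $-2$ isotypic component of type $\Lambda^{(1,0)}$ in $\odot^2\mathbb S$ is $3\Lambda^{(1,0)}_{[-2]}$, with one copy in each of $\Lambda^1\V$, $\Lambda^2\V$ (via $\Lambda^1\W$) and $\Lambda^5\V$ (via $\Lambda^5\W$), and indeed $\kappa(t_1,s^{(5,0)})\neq 0$ in general (pair against $u\in W^{10}$ using \eqref{eq:spin1}: it reduces to the perfect pairing of $\Lambda^{(1,0)}$ with $\Lambda^{(4,0)}$), so $t_1\odot s^{(5,0)}$ has nonzero components outside $\Lambda^5\W$ --- which is precisely why you then need to subtract something to land in $\mathfrak D$. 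But subtracting an element ``with the same Dirac current'' only removes the $\Lambda^1\V$-component; it neither controls the component in $\Lambda^1\W\subset\Lambda^2\V$, which statement (3) does not tolerate as an error term, nor guarantees that the $\Lambda^5\W$-component survives the subtraction. The missing ingredient is the multiplicity count: $\odot^2\mathbb S'$ contains exactly three of the four copies of $\Lambda^{(1,0)}$ in $\odot^2\mathbb S$, namely $2\Lambda^{(1,0)}_{[-2]}\subset\odot^2\widetilde{\mathbb S}$ and $s\odot\Lambda^{(1,0)}_{[3]}$, the absent fourth being $(\xi\cdot s)\odot\Lambda^{(1,0)}_{[3]}$; since the pure-charge-$8$ copy $\Lambda^{(0,4)}_{[8]}\subset\Lambda^4\W$ is not contained in $\odot^2\mathbb S'$, these three copies project onto all of $3\Lambda^{(1,0)}_{[-2]}$ modulo elements of charge $8$. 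One may therefore prescribe the $\Lambda^5\W$-component arbitrarily while simultaneously annihilating the components in $\Lambda^1\V$ and $\Lambda^2\V$, leaving an error only in $\Lambda^4\W$. With this replacement (which is how the paper argues its Claim 3) your proof closes.
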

\begin{proof}
First of all, we note that
\begin{equation*}
\begin{aligned}
\odot^2 \mathbb S&\cong \odot^2 \mathbb S'\oplus\big(\mathbb S'\odot(\xi\cdot s)\big)\oplus\mathbb C\big(\xi\cdot s\odot\xi\cdot s\big)\\
\odot^2 \mathbb S'&\cong \odot^2 \widetilde{\mathbb S}\oplus\big(\widetilde{\mathbb S}\odot s\big)\oplus\CC\big(s\odot s\big)
\end{aligned}
\end{equation*}
into $\fsu(5)$-modules, and a routine computation using \eqref{eq:ortoh-decomp} gives
\begin{equation}
\label{eq:big-part-of-it}
\begin{aligned}
\odot^2 \widetilde{\mathbb S}&\cong 2\Lambda^{(0,0)}_{[0]}\oplus 2\left(\Lambda^{(1,0)}_{[-2]}\oplus\Lambda^{(0,1)}_{[2]}\right)\oplus \left(\Lambda^{(2,0)}_{[-4]}\oplus \Lambda^{(0,2)}_{[4]}\right)\oplus 2\Lambda^{(1,1)}_{o[0]}\oplus \Lambda^{(2,2)}_{o[0]}\\
&\oplus\left(\Lambda^{(3,1)}_{o[-4]}\oplus \Lambda^{(1,3)}_{o[4]}\right)\oplus\left(\Lambda^{(4,1)}_{o[-6]}\oplus\Lambda^{(1,4)}_{o[6]}\right)
\oplus\left(\Lambda^{(3,2)}_{o[-2]}\oplus\Lambda^{(2,3)}_{o[2]}\right)\\
&
\oplus\left(\Lambda^{(2,1)}_{o[-2]}\oplus\Lambda^{(1,2)}_{o[2]}\right)
\end{aligned}
\end{equation}
and $\widetilde{\mathbb S}\odot s\cong \widetilde{\mathbb S}\cong \left(\Lambda^{(1,0)}\oplus\Lambda^{(0,1)}\right)\oplus\left(
\Lambda^{(2,0)}\oplus\Lambda^{(0,2)}\right)$ into $\fsu(5)$-irreducible modules. The latter ones do not have a definite charge, since $s=s^{(5,0)}+s^{(0,5)}\in \Lambda^{(5,0)}_{[-5]}\oplus\Lambda^{(0,0)}_{[5]}$.
 
Comparing \eqref{eq:big-part-of-it} with the isotypic decomposition of $\odot^2 \mathbb S$ says that all the $\fsu(5)$-submodules 
%of $\odot^2 \mathbb S$ 
colored in red in \eqref{eq:decomposition-forms-su5-II} are contained in
$\odot^2 \widetilde{\mathbb S}$. Clearly they also lie in $\mathfrak D$.
It remains to deal with the isotypic components of type $\Lambda^{(p,0)}$, where $0\leq p\leq 4$.
%$$\Lambda^{(0,0)}\;,\quad\left[\Lambda^{(1,0)}\oplus\Lambda^{(0,1)}\right]\quad\text{and}\quad\left[\Lambda^{(2,0)}\oplus\Lambda^{(0,2)}\right]\;.$$
\vskip0.05cm\par\noindent
\underline{\it Claim 1}
Note that the two copies of $\left(\Lambda^{(2,0)}\oplus\Lambda^{(0,2)}\right)$ with charge $\pm 4$ are contained in $\Lambda^2\W$ and $\Lambda^4\W$. Taking an appropriate $t=t^{(4,0)}+t^{(3,0)}\in\Lambda^{(4,0)}_{[-3]}\oplus\Lambda^{(3,0)}_{[-1]}\subset  \widetilde{\mathbb{S}}$ 
and $u_1,u_2\in W^{10}$, we may use the first equation in \eqref{eq:spin1} to get
\begin{equation*}
\begin{aligned}
\langle t,(u_1\wedge u_2)\cdot t\rangle
%-\langle t,(Ju_1\wedge Ju_2)\cdot t\rangle
%&=2\langle t,(u_1\wedge u_2)\cdot t\rangle\\&=4
&=2\langle t^{(4,0)},\underbrace{(u_1\wedge u_2)\cdot t^{(3,0)}}_{\text{element of}\;\Lambda^{(1,0)}_{[3]}}\rangle\neq 0\;.
\end{aligned}
\end{equation*}
\vskip-0.1cm\par\noindent
It follows that $\odot^2 \widetilde{\mathbb S}$ projects surjectively to $\Lambda^{(2,0)}\subset\Lambda^2 \W$
modulo elements in $\Lambda^4\W$, whence to $\Lambda^{(2,0)}\oplus\Lambda^{(0,2)}$ as well (because $\widetilde{\mathbb S}$ is a module of real type).
\vskip0.1cm\par\noindent
\underline{\it Claim 2} 
We already established that $\Lambda^{(4,1)}_{o[-6]}\oplus\Lambda^{(1,4)}_{o[6]}\subset \mathfrak D$. Now
$
\Lambda^{(3,0)}_{[-6]}\oplus \Lambda^{(0,3)}_{[6]}
$
is the only module of $\odot^2 \mathbb S$ isomorphic to $\Lambda^{(2,0)}\oplus \Lambda^{(0,2)}$ and with charge different from $\pm 4$. It follows that $\widetilde{\mathbb S}\odot s$ projects surjectively to it, modulo elements in $\Lambda^2\W\oplus\Lambda^4\W$. 
\vskip0.1cm\par\noindent
\underline{\it Claim 3} 
We already established that the submodules $\Lambda^{(3,2)}_{o[-2]}\oplus\Lambda^{(2,3)}_{o[2]}\oplus
\Lambda^{(2,1)}_{o[-2]}\oplus \Lambda^{(1,2)}_{o[2]}$ of $\Lambda^5\W$ belong to $\mathfrak D$, so it remains to consider
$\Lambda^{(1,0)}_{[-2]}\oplus\Lambda^{(0,1)}_{[2]}\subset \Lambda^5\W$. 

There are four modules isomorphic to $\Lambda^{(1,0)}$ in $\odot^2 \mathbb S$: two copies of $\Lambda^{(1,0)}_{[-2]}\subset \odot^2 \widetilde{\mathbb S}$ and two other two copies
\begin{equation*}
\label{eq:almost-the-end}
\begin{aligned}
s\odot\Lambda^{(1,0)}_{[3]}&=\left\{s^{(5,0)}\odot\alpha+s^{(0,5)}\odot\alpha\mid\alpha\in\Lambda^{(1,0)}_{[3]}\right\}\subset s\odot\widetilde{\mathbb{S}}\;,\\
(\xi\cdot s)\odot\Lambda^{(1,0)}_{[3]}&=\left\{s^{(5,0)}\odot\alpha-s^{(0,5)}\odot\alpha\mid\alpha\in\Lambda^{(1,0)}_{[3]}\right\}\subset (\xi\cdot s)\odot\widetilde{\mathbb{S}}
\end{aligned}
\end{equation*} 
that do not have a definite charge. It follows that $2\Lambda^{(1,0)}_{[-2]}\oplus \big(s\odot\Lambda^{(1,0)}_{[3]}\big)\subset\odot^2 \mathbb S'$ projects surjectively to $3\Lambda^{(1,0)}_{[-2]}\subset \odot^2\mathbb S$, modulo elements of charge $8$.
%In particular, every element of $3\Lambda^{(1,0)}_{[-2]}$ is in $\odot^2 \mathbb S'$ modulo $\Lambda^4\W$, 
%and 
A similar property holds for $3\Lambda^{(0,1)}_{[2]}$ and this readily implies our last claim.
\end{proof}
Let $(\varphi,S')$ be the geometric symbol of a background with exactly $31$ Killing spinors.
%Here $\xi=\kappa(s,s)$ is the Dirac current of $s$, as usual, and it is timelike in our case. 
By Proposition \ref{prop:LiePair} the isotropy $\fh=\fh_{(\varphi,S')}=\gamma^{\varphi}(\mathfrak{D})$ and the symbol $(\varphi,S')$ is a Lie pair, in particular $\fh\subset\stab_{\fso(V)}(S')=\stab_{\fso(V)}(\mathbb R s)^\perp=\stab_{\fso(V)}(\mathbb Rs)$. Since 
$\xi$ is timelike, we actually have $$\fh\subset\stab_{\fso(V)}(s)=\fsu(5)\subset\stab_{\fso(V)}(\xi)\;.$$
%\begin{equation*}
%\begin{aligned}
%\sigma(A)s&=\lambda s\\
%A\xi&=A\kappa(s,s)=2\kappa(\sigma(A)s,s)=2\lambda\xi
%\end{aligned}
%\end{equation*}
%and $\lambda=0$,
The idea is very simple: we write $\varphi=\xi\wedge\phi+\Phi$, for some $\phi\in\Lambda^3 W$ and $\Phi\in\Lambda^4 W$ and use the definition \eqref{definition-gamma-useful} of the map $\gamma^\varphi$ with $w=\xi$ and $\omega\in\mathfrak D$:
\begin{equation}
\label{eq:crucial-equation}
\begin{aligned}
0=\eta(\gamma^\varphi(\omega)v,\xi)
&=\tfrac{1}{3}\eta\big(\imath_v\imath_\xi\varphi,\omega^{(2)}\big)
+\tfrac{1}{6}\eta\big(\imath_v\imath_\xi\star\varphi,\omega^{(5)}\big)\\
&=\tfrac{1}{3}\eta\big(\!\!\!\!\!\!\!\!\!\!\underbrace{\imath_v\phi}_{\text{element of}\;\Lambda^2 W}\!\!\!\!\!\!\!\!\!\!\;,\omega^{(2)}\big)
+\tfrac{1}{6}\eta\big(\!\!\!\!\!\!\!\underbrace{\imath_v\bigstar\Phi}_{\text{element of}\;\Lambda^5 W}\!\!\!\!\!\!\!\!\!\!\;\;,\omega^{(5)}
\big)\;,
\end{aligned}
\end{equation}
for all $v\in W$, with $\bigstar$ the Hodge star on $W$. We now exploit that $\mathfrak D$ is sufficiently big.

Taking $\omega$ as in (1) of Proposition \ref{Dirac-kernel-properties}, equation \eqref{eq:crucial-equation} becomes
$
\eta(\imath_v\phi,\omega^{(2)})=0$
for all $\omega^{(2)}\in\Lambda^{(2,0)}\oplus \Lambda^{(0,2)}\subset\Lambda^2\W$. Decomposing $\phi=\phi^{(3,0)}+\phi^{(2,1)}+\phi^{(1,2)}+\phi^{(0,3)}$ into types   and taking $v\in W^{10}$ gives $\phi^{(3,0)}=\phi^{(1,2)}=0$. Similarly $\phi^{(2,1)}=\phi^{(0,3)}=0$, so $\phi=0$.
If $\omega$ is now as in (2) of Proposition \ref{Dirac-kernel-properties}, then
$\eta(\imath_v\bigstar\Phi,\omega^{(5)})=0$
for all $\omega^{(5)}\in \Lambda^{(4,1)}\oplus\Lambda^{(1,4)}$.
Decomposing $\bigstar\Phi\in\Lambda^6W$ into types yields $(\bigstar\Phi)^{(p,q)}=0$ except for
%$$
%\bigstar\Phi=(\bigstar\Phi)^{(5,1)}+(\bigstar\Phi)^{(4,2)}+(\bigstar\Phi)^{(3,3)}+(\bigstar\Phi)^{(2,4)}+(\bigstar\Phi)^{(1,5)}\;,
%$$
%Taking $v\in W^{10}$ and then $v\in W^{01}$ 
%$$(\bigstar\Phi)^{(5,1)}=(\bigstar\Phi)^{(2,4)}=0\quad\text{and}\quad (\bigstar\Phi)^{(1,5)}=(\bigstar\Phi)^{(4,2)}=0\;.$$
$(\bigstar\Phi)^{(3,3)}$, but this is zero by (3) of Proposition \ref{Dirac-kernel-properties} and a similar argument.
%Hence $\Phi=0$.

Hence $\varphi=0$, so $(M,g)$ is locally isometric to Minkowski spacetime, in particular 
%the background  is locally flat (Killing spinors are parallel w.r.t. the Levi-Civita connection and, via the local homogeneity Theorem, give rise to a basis of parallel vectors), hence 
it is maximally supersymmetric, a contradiction. 
%In other words,
%the background cannot have exactly $31$ Killing spinors (locally) and 
The proof is completed. 
\section{The reconstruction problem for supergravity backgrounds}
\label{sec:class-probl-highly}
It is not true that every Lie pair has a corresponding filtered subdeformation, in other words, it is the geometric symbol of a background. Indeed,
the Lie brackets of a tamed filtered subdeformation $\fg$ are
given by
\begin{equation*}  
%\label{eq:preKSA}
  \begin{aligned}[m]
    [A,B]&=AB-BA\\
    [A,s]&=\sigma(A)s\\
    [A,v]&=Av+[A, X_v] - X_{Av}
  \end{aligned}
  \;\;
  \begin{aligned}[m]
    [s,s]&=\kappa(s,s) + \gamma^{\varphi}(s,s) - X_{\kappa(s,s)}\\
    [v,s]&=\beta^{\varphi}(v,s) + \sigma(X_v)s\\
    [v,w]&=X_v w - X_w v + [X_v,X_w] - X_{X_v w - X_w v} + R(v,w)
  \end{aligned}
\end{equation*}
for all $A,B\in\fh$, $v,w\in V$, $s\in S'$. %The graded subalgebra underlying $\fg$ is $\fa = \fh \oplus S' \oplus V$
Here $\fh=\fh_{(\varphi,S')}$ and $X:V\to\fso(V)$ is as in \eqref{eq:map-X}.
The rest of the data $R : \Lambda^2 V \to \fso(V)$ also depends on the Lie pair, as we now recall.

First of all, it bears reminding
that the right-hand sides of the above Lie brackets %in \eqref{eq:preKSA}
take values in $\fa=\fh \oplus S' \oplus V$, but the individual
terms may not.  
Explicitly, 
\begin{align}
\label{eq:set-theoretic-I}
\beta^{\varphi}(v,s) + \sigma(X_v)s&\in S'\\
\label{eq:set-theoretic-II}
[A,X_v] - X_{Av} &\in \fh\\
\label{eq:set-theoretic-III}
\gamma^{\varphi}(s,s) - X_{\kappa(s,s)}&\in\fh\\
\label{eq:set-theoretic-IV}
[X_v,X_w] - X_{X_v w - X_w v} + R(v,w)&\in\fh
\end{align}
for all $A\in\fh$, $v,w\in V$, $s\in S'$. Moreover the Lie brackets are subject to the Jacobi identities. There are ten components of them, and five are simply satisfied by equivariance and the fact that $\fh\subset\stab_{\fso(V)}(S')\cap\stab_{\fso(V)}(\varphi)$ \cite[\S 4.1]{FOFS2017II}. The Jacobi identity with three odd elements is the cocycle condition \eqref{eq:secondcocyclecondition}.

The remaining four components of the Jacobi identities are more involved:
\begin{itemize}
	\item The $[\fh V V]$ Jacobi identity is satisfied if and only if
\begin{equation}
  \label{eq:Requiv}
 R : \Lambda^2 V \to \fso(V) \quad\text{is $\fh$-equivariant};
\end{equation}
\item The $[S'S'V]$ Jacobi is
equivalent to
\begin{equation}
  \label{eq:ssvJac}
  \begin{split}
    \tfrac12 R(v,\kappa(s,s))w &=\kappa((X_v\beta^{\varphi})(w,s),s)+
    \gamma^\varphi(\beta^{\varphi}_{v}(s),s)w\\
		&=\kappa((X_v\beta^{\varphi})(w,s),s)
	-\kappa(\beta^\varphi_v(s) , \beta^\varphi_w(s)) -
    \kappa(\beta^\varphi_w \beta^\varphi_v(s),s)
		%&=
		%-\tfrac12\gamma^{(X_v\cdot\varphi)}(s,s)w+
    %\gamma^\varphi(\beta^{\varphi}_{v}(s),s)w
  \end{split}
    \end{equation}
for all $s \in S'$, $v \in V$ and $w \in V$;
\item The $[S'VV]$ Jacobi identity expands to the following condition
\begin{equation}
  \label{eq:svvJac}
  R(v,w)s = (X_v\beta^{\varphi})(w,s) - (X_w\beta^{\varphi})(v,s) +
  [\beta^\varphi_v,\beta^\varphi_w](s),
\end{equation}
for all $s \in S'$ and $v,w \in V$; 
\item Finally, the $[VVV]$ Jacobi identity expands to the algebraic and differential Bianchi identities
\begin{align}  
\label{eq:Bianchi-I}
R(u,v)w + R(v,w) u + R(w,u) v &= 0\;,\\
\label{eq:Bianchi-II}
(X_u R)(v,w) + (X_v R)(w,u) + (X_w R)(u,v) &= 0\;,
\end{align}
for all $u,v,w \in V$. 
\end{itemize}
Using the local Homogeneity Theorem, it is not difficult to see that each of equations \eqref{eq:ssvJac}-\eqref{eq:svvJac}
determines uniquely  the curvature tensor in terms of the Lie pair.

In summary, if we aim to construct backgrounds via filtered subdeformations and Theorem \ref{thm:reconstruction}, a long task awaits us: find a Lie pair $(\varphi,S')$, compute the isotropy $\fh$ and the map $X$ via
the Dirac kernel, check that equations \eqref{eq:set-theoretic-I}-\eqref{eq:Bianchi-II} are satisfied, for a putative curvature tensor to be determined, and don't forget also $d\varphi=0$. 

The following result drastically simplifies the situation:
\begin{theorem}
\label{thm:main-reduction}
Assume $(\varphi,S')$ is a Lie pair satisfying \eqref{eq:set-theoretic-I} and there is $R:\Lambda^2 V \to \fso(V)$ so that
\eqref{eq:ssvJac} and \eqref{eq:svvJac} hold. Then the identities \eqref{eq:set-theoretic-II}-\eqref{eq:Requiv} and \eqref{eq:Bianchi-I}-\eqref{eq:Bianchi-II} automatically hold. 
\end{theorem}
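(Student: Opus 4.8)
The plan is to read the six identities \eqref{eq:set-theoretic-II}--\eqref{eq:Requiv} and \eqref{eq:Bianchi-I}--\eqref{eq:Bianchi-II} as the requirement that the brackets displayed just before Theorem \ref{thm:main-reduction} — built from the canonical data $\fh=\fh_{(\varphi,S')}=\gamma^\varphi(\fD)$ and $X=\gamma^\varphi\circ\Sigma$ of Proposition \ref{prop:LiePair} together with the given $R$ — are well defined and satisfy the Jacobi identities with at least two entries in $V$. Two of them are purely formal and use neither $R$ nor \eqref{eq:set-theoretic-I}. Indeed, since $(\varphi,S')$ is a Lie pair we have $A\cdot\varphi=0$ for every $A\in\fh$, so $\gamma^\varphi$ is $\fh$-equivariant; combined with \eqref{eq:Dirac-kernel-definition} this yields \eqref{eq:set-theoretic-II} and \eqref{eq:set-theoretic-III} at once, because for $A\in\fh$ both $\Sigma(Av)-\sigma(A)\Sigma(v)$ and $s\odot s-\Sigma(\kappa(s,s))$ lie in $\odot^2 S'$ with vanishing Dirac current, hence in $\fD$, so that $[A,X_v]-X_{Av}=-\gamma^\varphi\big(\Sigma(Av)-\sigma(A)\Sigma(v)\big)\in\fh$ and $\gamma^\varphi(s,s)-X_{\kappa(s,s)}=\gamma^\varphi\big(s\odot s-\Sigma(\kappa(s,s))\big)\in\fh$.

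Next, for the $\fh$-equivariance \eqref{eq:Requiv} of $R$, note that the right-hand side of \eqref{eq:ssvJac} is assembled from the maps $\kappa$, $\beta^\varphi$, $\gamma^\varphi$ — all $\fh$-equivariant since $\fh$ fixes $\varphi$ and $S'$ — and from $v\mapsto(X_v\beta^\varphi)$, which is $\fh$-equivariant as a map $V\to\Hom(V\otimes S',S')$ because $\beta^\varphi$ is $\fh$-invariant there and $[A,X_v]\equiv X_{Av}$ modulo $\fh$ by \eqref{eq:set-theoretic-II}. So the right-hand side of \eqref{eq:ssvJac} is $\fh$-equivariant in its arguments; since $\kappa|_{\odot^2 S'}$ is equivariant and onto $V$ (local Homogeneity Theorem), polarizing and using that \eqref{eq:ssvJac} pins down $R$ on all of $\Lambda^2 V$ forces $R$ to be $\fh$-equivariant. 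The same $R$ is then the unique solution of \eqref{eq:svvJac} as well, which will be used below.

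The crux is the well-definedness \eqref{eq:set-theoretic-IV} of $[V,V]$, and this is where \eqref{eq:set-theoretic-I} enters. By linearity and the local Homogeneity Theorem it suffices to show that $Y(v,w):=[X_v,X_w]-X_{X_vw-X_wv}+R(v,w)$ lies in $\fh$ when $w=\kappa(s,s)$ with $s\in S'$. Into this expression one substitutes $X_v=\gamma^\varphi(\Sigma(v))$, the identity $X_{\kappa(s,s)}=\gamma^\varphi(s,s)-A_s$ with $A_s=\gamma^\varphi\big(s\odot s-\Sigma(\kappa(s,s))\big)\in\fh$ from the previous step, the equivariance relation $X_v\kappa(s,s)=2\kappa(\sigma(X_v)s,s)$, the formula \eqref{eq:ssvJac} for $R(v,\kappa(s,s))$, and the cocycle condition \eqref{eq:secondcocyclecondition} relating $\gamma^\varphi$, $\beta^\varphi$, $\kappa$; the hypothesis \eqref{eq:set-theoretic-I}, i.e.\ $\beta^\varphi(v,s)+\sigma(X_v)s\in S'$, is precisely what keeps the intermediate bispinors inside $\odot^2 S'$. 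A lengthy but mechanical rearrangement then identifies $Y(v,\kappa(s,s))$ with $\gamma^\varphi$ of an element of $\odot^2 S'$ whose Dirac current vanishes, that is, of $\fD$, so that $Y(v,\kappa(s,s))\in\gamma^\varphi(\fD)=\fh$. This bookkeeping is the most delicate part of the argument.

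Finally, the algebraic and differential Bianchi identities \eqref{eq:Bianchi-I}--\eqref{eq:Bianchi-II} are handled by the same device: set one of the three arguments equal to a Dirac current $\kappa(s,s)$ and eliminate every occurrence of $R$ using \eqref{eq:ssvJac}, \eqref{eq:svvJac} and the equivariance of $\kappa$ — so, for instance, $R(v,w)\kappa(s,s)=2\kappa(R(v,w)s,s)$ with $R(v,w)s$ given by \eqref{eq:svvJac}; what is left is an identity among $\kappa$, $\beta^\varphi$, $\gamma^\varphi$ and $X$ that follows from \eqref{eq:secondcocyclecondition} together with \eqref{eq:set-theoretic-II}--\eqref{eq:Requiv} already established, and since $V$ is spanned by Dirac currents and all terms are linear it holds in general. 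Equivalently, once \eqref{eq:set-theoretic-I}--\eqref{eq:Requiv} hold the only Jacobi identity left to verify for the proposed brackets is the $[VVV]$ one, whose defect is an $\fh$-equivariant cochain on $\Lambda^3 V$ that is then forced to vanish. I expect \eqref{eq:set-theoretic-IV} to be the real obstacle; everything after it reduces to bookkeeping.
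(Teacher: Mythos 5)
Your outline follows the paper's proof step for step: the treatment of \eqref{eq:set-theoretic-II} and \eqref{eq:set-theoretic-III} is identical (each left-hand side is $\gamma^\varphi$ of an explicit element of the Dirac kernel), your equivariance argument for \eqref{eq:Requiv} repackages the paper's direct computation (which reduces $(A\cdot R)(v,\kappa(s,s))w$ to $\kappa(\beta^{(A\cdot X)_v\cdot\varphi}(w,s),s)=0$, using that $(A\cdot X)_v\in\fh$ annihilates $\varphi$), and the reduction of the remaining identities to the case where one slot is a Dirac current is the same device the paper uses in Steps V and VI. The problem is that for the two identities that carry the actual content of the theorem you assert the conclusion instead of deriving it. For \eqref{eq:set-theoretic-IV}, ``a lengthy but mechanical rearrangement identifies $Y(v,\kappa(s,s))$ with $\gamma^\varphi$ of an element of $\fD$'' is not what happens and is not a proof: the paper expands $R(v,\kappa(s,s))$ via \eqref{eq:ssvJac} and the definition of $\gamma^\varphi$, and regroups the result into \eqref{eq:final-IIIA} and \eqref{eq:final-IIIB}, the first lying in $\fh$ by \eqref{eq:set-theoretic-II} and the second by \eqref{eq:set-theoretic-III} applied to the bispinor $s\odot\big(\beta^\varphi_v(s)+\sigma(X_v)s\big)$ --- which is where hypothesis \eqref{eq:set-theoretic-I} enters, to guarantee that this bispinor lies in $\odot^2 S'$ at all. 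You correctly flag this step as delicate but do not supply the regrouping.

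More seriously, you misjudge the differential Bianchi identity \eqref{eq:Bianchi-II} as ``bookkeeping'': it is the longest step of the proof and rests on a point you omit, namely that one must compute $(X_{\kappa(s,s)}\cdot R)(u,v)$ by choosing the bispinor representative of $\kappa(s,s)$ in the image of the section, so that $X_{\kappa(s,s)}=\gamma^\varphi(s,s)$ holds exactly as in \eqref{eq:crucial-assumption}; only then do the terms produced by skew-symmetrizing \eqref{eq:ssvJac-infinitesimal} in $u,v$ cancel against \eqref{eq:contribution-II} after substituting \eqref{eq:svvJac} for $R(u,v)s$. Your fallback argument --- that the defect of the $[VVV]$ Jacobi identity is an $\fh$-equivariant cochain on $\Lambda^3 V$ and is ``therefore forced to vanish'' --- is not valid: $\fh$-equivariance of a trilinear map does not imply it is zero, and nothing in your sketch supplies the cancellation. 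So the architecture of your argument is sound and matches the paper's, but Steps III and VI, which are the theorem, are missing.
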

\begin{proof}
We split the proof in six steps, one for each identity.
\vskip0.05cm\par\noindent
\underline{\it Step I}
For all $A\in\fh$, $v\in V$, we have
\begin{equation*}
\begin{aligned}
{}[A,X_v]-X_{Av}&=[A,\gamma^\varphi(\Sigma v)]-\gamma^\varphi(\Sigma Av)=\gamma^\varphi(A\Sigma v-\Sigma Av)
\end{aligned}
\end{equation*}
and $A\Sigma v-\Sigma Av\in  \mathfrak{D}$, since
$
(A\Sigma v)^{(1)}=A(\Sigma v)^{(1)}=Av=(\Sigma Av)^{(1)}
$ and $\Sigma:V\to\odot^2 S'$. Identity \eqref{eq:set-theoretic-II} follows then directly from $\fh=\fh_{(\varphi,S')}=\gamma^{\varphi}(\mathfrak{D})$. 
\vskip0.105cm\par\noindent
\underline{\it Step II} In a similar way
%\begin{equation*}
%\begin{aligned}
$\gamma^{\varphi}(\omega) - X_{\omega^{(1)}}
=\gamma^{\varphi}(\omega -\Sigma\omega^{(1)})
$
%\end{aligned}
%\end{equation*}
is an element of $\fh$ for all $\omega\in\odot^2 S'$, proving \eqref{eq:set-theoretic-III}. 
\vskip0.05cm\par\noindent
\underline{\it Step III}
We now establish \eqref{eq:set-theoretic-IV}. Using \eqref{eq:ssvJac} and the definition of $\gamma^\varphi$ we compute
\begin{equation*}
\begin{split}
  R(v,\kappa(s,s))w &=2\kappa\big((X_v\beta^{\varphi})(w,s),s\big)+
    2\gamma^\varphi\big(\beta^{\varphi}_{v}(s),s\big)w
		\\
		&=2\kappa\big(\sigma(X_v)(\beta^\varphi(w,s)),s\big)
		+\gamma^{\varphi}(s,s)X_v w
		-2\kappa\big(\beta^\varphi(w,\sigma(X_v)s),s\big)\\
		&\;\;\;+2\gamma^\varphi\big(\beta^{\varphi}_{v}(s),s\big)w\\
		&=-X_v\big(\gamma^{\varphi}(s,s)w\big)
	-2\kappa\big(\beta^\varphi(w,s),\sigma(X_v)s\big)
		-2\kappa\big(\beta^\varphi(w,\sigma(X_v)s),s\big)\\
		&\;\;\;+\gamma^{\varphi}(s,s)X_v w+2\gamma^\varphi\big(\beta^{\varphi}_{v}(s),s\big)w\\
	&=-X_v\big(\gamma^{\varphi}(s,s)w\big)+2\gamma^{\varphi}\big(s,\sigma(X_v)s\big)w+\gamma^{\varphi}(s,s)X_v w\\
	&\;\;\;+2\gamma^\varphi\big(\beta^{\varphi}_{v}(s),s\big)w\\
		&=[\gamma^{\varphi}(s,s),X_v]w+2\gamma^{\varphi}\big(s,\underbrace{(\beta^{\varphi}_{v}(s)+\sigma(X_v)s)}_{\text{element of}\;S'\;\text{(by}\;\eqref{eq:set-theoretic-I})}\!\big)w
	\end{split}
    \end{equation*}
for all $s\in S'$, $v,w\in V$. On the other hand
\begin{equation*}
\begin{aligned}
{}[X_v,X_{\kappa(s,s)}]&=[X_v,\gamma^\varphi(\Sigma\kappa(s,s))]
\end{aligned}
\end{equation*}
and
\begin{equation*}
\begin{aligned}
X_{X_{\kappa(s,s)} v}-X_{X_v \kappa(s,s)}&=X_{\gamma^{\varphi}(\Sigma\kappa(s,s))v}-X_{X_v \kappa(s,s)}-X_{\gamma^{\varphi}(s,s)v}+X_{\gamma^{\varphi}(s,s)v}\;,
\end{aligned}
\end{equation*}
for all $s\in S'$ and $v\in V$. 

We now sum up the three contributions to the identity \eqref{eq:set-theoretic-IV} and regroup the various terms into the sum of
\begin{equation}
\label{eq:final-IIIA}
\begin{split}
&{}[\gamma^{\varphi}(s,s)-\gamma^\varphi(\Sigma\kappa(s,s)),X_v]+X_{\gamma^{\varphi}(\Sigma\kappa(s,s))v}-X_{\gamma^{\varphi}(s,s)v}
\end{split}
\end{equation}
and
\begin{equation}
\label{eq:final-IIIB}
\begin{split}
&2\gamma^{\varphi}\big(s,\beta^{\varphi}_{v}(s)+\sigma(X_v)s\big)+X_{\gamma^{\varphi}(s,s)v}-X_{X_v \kappa(s,s)}\;,
\end{split}
\end{equation}
where $\gamma^{\varphi}(s,s)-\gamma^\varphi(\Sigma\kappa(s,s))\in\fh=\gamma^{\varphi}(\mathfrak{D})$.
The term \eqref{eq:final-IIIA} belongs to $\fh$ due to \eqref{eq:set-theoretic-II}, which we established in step I.
Finally, the Dirac current
\begin{equation*}
\kappa(s,\beta^{\varphi}_{v}(s)+\sigma(X_v)s)=-\tfrac{1}{2}\gamma^\varphi(s,s)v+\tfrac{1}{2}X_v(\kappa(s,s))\;,
\end{equation*}
so the term \eqref{eq:final-IIIB} is also in $\fh$, thanks to the identity \eqref{eq:set-theoretic-III} established in step II.
\vskip0.05cm\par\noindent
\underline{\it Step IV}
We prove that $R$ is $\fh$-equivariant. A direct computation using \eqref{eq:ssvJac} yields
\begin{equation*}
  \begin{split}
    \tfrac12 (A\cdot R)(v,\kappa(s,s))w &=\kappa\big((A\cdot X)_v\beta^{\varphi})(w,s),s\big)
		=\kappa\big(\beta^{(A\cdot X)_v\cdot\varphi}(w,s),s\big)
		=0\;,
		\end{split}
\end{equation*}
for all $A\in\fh$, $s\in S'$ and $v,w\in V$. Here we used that
$(A\cdot X)_v=[A,X_v]-X_{Av}$ is an element of $\fh$ by step I, hence it annihilates $\varphi$.
%In other words $A\cdot R=0$ for all $A\in\fh$.
\vskip0.05cm\par\noindent
\underline{\it Step V} It is sufficient to establish \eqref{eq:Bianchi-I} with $w=\kappa(s,s)$ for all $s\in S'$. 
Now
\begin{equation*}
\begin{aligned}
\tfrac12\big(R(v,\kappa(s,s))u+R(\kappa(s,s),u)v\big)&=\kappa((X_v\beta^{\varphi})(u,s),s)-\kappa((X_u\beta^{\varphi})(v,s),s)\\
	&\;\;\;-\kappa([\beta^\varphi_u, \beta^\varphi_v](s),s)
	\end{aligned}
	\end{equation*}
by equation \eqref{eq:ssvJac} and this term is also equal to $\tfrac12 R(v,u)\kappa(s,s)=\kappa(R(v,u)s,s)$ by \eqref{eq:svvJac}.
\vskip0.05cm\par\noindent
\underline{\it Step VI} 
The last step is the most involved and needs a  preliminary crucial observation: 
it is enough to establish \eqref{eq:Bianchi-II} when one of the elements of $V$ is of the form $\omega^{(1)}$ for some $\omega\in\odot^2 S'$ in the image of the section $\Sigma:V\to \odot^2 S'$. For simplicity of exposition, we denote $\omega=s\odot s$ with $s\in S'$, although it is really a sum of decomposable bispinors. Our assumption on $\omega$ reads then 
$
s\odot s=\omega=\Sigma\omega^{(1)}=\Sigma(\kappa(s,s))\;,
$
whence
\begin{equation}
\label{eq:crucial-assumption}
X_{\kappa(s,s)}=\gamma^{\varphi}\big(\Sigma(\kappa(s,s))\big)=\gamma^{\varphi}(s,s)\;.
\end{equation}
We will crucially use this property in the proof.

Now $A\cdot\beta^\varphi=\beta^{A\cdot\varphi}$ for all $A\in\fso(V)$ by $\fso(V)$-equivariance. A direct computation using this fact and \eqref{eq:ssvJac} yields
\begin{equation}
  \label{eq:ssvJac-infinitesimal}
  \begin{split}
    \tfrac12 (A\cdot R)(v,\kappa(s,s))w &=\kappa(\beta^{(A\cdot X_v\cdot\varphi)}(w,s),s)
		-\kappa(\beta^{(X_{Av}\cdot\varphi)}(w,s),s)\\
		&
	\;\;\;+\gamma^{A\cdot \varphi}(\beta^{\varphi}_{v}(s),s)w+\gamma^{\varphi}(\beta^{(A\cdot \varphi)}_{v}(s),s)w\;,
		\end{split}
    \end{equation}
		for all $A\in\fso(V)$. Applying \eqref{eq:ssvJac-infinitesimal} with $A=X_u$ and skew-symmetrizing in $u$ and $v$ says that the contribution
	\begin{equation}	
	\tfrac12 (X_u\cdot R)(v,\kappa(s,s))w-\tfrac12 (X_v\cdot R)(u,\kappa(s,s))w
	\end{equation}
		to the differential Bianchi identity is given by
		\begin{equation}
		  \label{eq:ssvJac-infinitesimal-skewsymmetrized}
\begin{aligned}
\kappa\big(\beta^{([X_u,X_v]\cdot\varphi)}(w,s),s\big)
		&-\kappa\big(\beta^{(X_{X_u v-X_v u}\cdot\varphi)}(w,s),s\big)
	+\gamma^{\varphi}\big(\beta^{(X_u\cdot \varphi)}_{v}(s),s\big)w\\
	&\!\!\!\!\!\!\!\!\!\!\!\!\!\!\!\!\!\!\!\!-\gamma^{\varphi}\big(\beta^{(X_v\cdot \varphi)}_{u}(s),s\big)w
+\gamma^{X_u\cdot \varphi}\big(\beta^{\varphi}_{v}(s),s\big)w
-\gamma^{X_v\cdot \varphi}\big(\beta^{\varphi}_{u}(s),s\big)w\;.
\end{aligned}
		\end{equation}
Using identity \eqref{eq:set-theoretic-IV}, established in step III, and the definition of a Lie pair, we directly see that the first two terms in \eqref{eq:ssvJac-infinitesimal-skewsymmetrized} are equal to
\begin{equation}
\label{eq:useful-I}
-\kappa\big(\beta^{(R(u,v)\cdot\varphi)}(w,s),s\big)=\tfrac12\gamma^{(R(u,v)\cdot\varphi)}(s,s) w\;.
\end{equation}	

We recall our crucial assumption \eqref{eq:crucial-assumption} and turn to compute the last contribution to the differential Bianchi identity:
\begin{equation}
\label{eq:contribution-II}
\begin{aligned}
\!\!\!\!\!\!\!\tfrac12(X_{\kappa(s,s)}\cdot R)(u,v) &=\tfrac12(\gamma^{\varphi}(s,s)\cdot R)(u,v) \\
&=-\tfrac12[R(u,v),\gamma^{\varphi}(s,s)]-\tfrac 12 R\big(\gamma^{\varphi}(s,s) u,v\big)-\tfrac 12 R\big(u,\gamma^{\varphi}(s,s)v\big)\\
&=-\tfrac12 \gamma^{( R(u,v)\cdot\varphi)}(s,s)-\gamma^{\varphi}\big(R(u,v) s,s\big)-\tfrac 12 R\big(\gamma^{\varphi}(s,s) u,v\big)\\
&\;\;\;\,-\tfrac 12 R\big(u,\gamma^{\varphi}(s,s)v\big)\;,
\end{aligned}
\end{equation}
where we used that $A\cdot\gamma^\varphi=\gamma^{A\cdot\varphi}$ for all $A\in\fso(V)$. Thanks to \eqref{eq:svvJac}
we may expand one of the terms in the next-to-last line:
\begin{equation}
\label{eq:useful-II}
\begin{aligned}
-\gamma^{\varphi}\big(R(u,v) s,s\big)&=-\gamma^{\varphi}\big(\beta^{(X_u\cdot\varphi)}(v,s),s\big)+\gamma^\varphi\big(\beta^{(X_v\cdot\varphi)}(u,s),s\big)\\
&\;\;\;\,-\gamma^{\varphi}\big([\beta^\varphi_u,\beta^\varphi_v](s),s\big)\;.
\end{aligned}
\end{equation}

Let us collect what we obtained so far: summing up \eqref{eq:ssvJac-infinitesimal-skewsymmetrized} and \eqref{eq:contribution-II}, and using \eqref{eq:useful-I} and \eqref{eq:useful-II}, we are left with
		\begin{equation}
		  \label{eq:last-term}
\begin{aligned}
	\gamma^{X_u\cdot \varphi}\big(\beta^{\varphi}_{v}(s),s\big)w&
-\gamma^{X_v\cdot \varphi}\big(\beta^{\varphi}_{u}(s),s\big)w-\gamma^{\varphi}\big([\beta^\varphi_u,\beta^\varphi_v](s),s\big)\\
&-\tfrac 12 R\big(\gamma^{\varphi}(s,s) u,v\big)w
-\tfrac 12 R\big(u,\gamma^{\varphi}(s,s)v\big)w\;.
\end{aligned}
		\end{equation}
Using the definition of $\gamma^{\varphi}$, equation \eqref{eq:ssvJac} and a direct computation, we see that the second line of 
\eqref{eq:last-term} is equal to
\begin{equation*}
  \begin{aligned}
R\big(\kappa(\beta^\varphi_u(s),s),v\big)w
-R\big(\kappa(\beta^\varphi_v(s),s),u\big)w
&=-\gamma^{X_u\cdot \varphi}\big(\beta^{\varphi}_{v}(s),s\big)w
+\gamma^{X_v\cdot \varphi}\big(\beta^{\varphi}_{u}(s),s\big)w\\
&\;\;\;\,+\gamma^{\varphi}\big([\beta^\varphi_u,\beta^\varphi_v](s),s\big)\;,
\end{aligned}
		\end{equation*}
so \eqref{eq:last-term} vanishes and the proof is completed.
\end{proof}
Motivated by this result, we give the following definition.
\begin{definition}
\label{def:main}
Let $(\varphi,S')$ be a Lie pair, that is, $S'$ is a subspace of $S$ with $\dim S'>16$, $\varphi\in\Lambda^4 V$ and 
$\fh=\fh_{(\varphi,S')}\subset \stab_{\fso(V)}(S')\cap\stab_{\fso(V)}(\varphi)$ (see Definition \ref{def:Lie pair} for details).
Let $\Sigma:V\to \odot^2 S'$ be any section associated with $S'$ and
set $X=\gamma^\varphi\circ\Sigma:V\to\fso(V)$ and $\alpha(v,w)=X_vw-W_w v$ for all $v,w\in V$. Then $(\varphi,S')$ is called an {\it abstract symbol} if 
\begin{align}
\label{eq:symbolII}
\beta^{\varphi}(v,s) + \sigma(X_v)s&\in S'\\
\label{eq:symbolIII}
d\varphi(v_0,\ldots,v_4) &= \sum_{i<j}(-1)^{i+j}
      \varphi(\alpha(v_i,v_j), v_0, \ldots, \hat v_i,\ldots,\hat
      v_j,\ldots,v_4) = 0\\
\label{eq:symbolIV}
    \tfrac12 R\big(v,\kappa(s,s)\big)w &=\kappa\big((X_v\beta^{\varphi})(w,s),s\big)+
    \gamma^\varphi\big(\beta^{\varphi}_{v}(s),s\big)w
		%\\
		%&=\kappa((X_v\beta^{\varphi})(w,s),s)
	%-\kappa(\beta^\varphi_v(s) , \beta^\varphi_w(s)) -
    %\kappa(\beta^\varphi_w \beta^\varphi_v(s),s)
		\\
		\label{eq:symbolV}
		R(v,w)s &= (X_v\beta^{\varphi})(w,s) - (X_w\beta^{\varphi})(v,s) +
  [\beta^\varphi_v,\beta^\varphi_w](s)
\end{align}
for some $R:\Lambda^2 V\to \fso(V)$ and all $v,w, v_0,\ldots, v_4\in V$, $s\in S'$.
\end{definition}
Combining Theorem \ref{thm:reconstruction}, Corollary \ref{corollary:geometric-symbol} and the discussion 
carried out in this section, we arrive at the following version of the Reconstruction Theorem. Therein 
\begin{equation*}
\label{eq:moduli-abstract-symbols}
\begin{aligned}
\mathcal{AS}&=\frac{\left\{\text{abstract symbols}\;(\varphi,S')\right\}}{\Spin(V)}
\end{aligned}
\end{equation*}
is the moduli space of abstract symbols.
\begin{theorem}[Reconstruction Theorem - Strong Version]
\label{thm:strong-version-reconstruction}
The map $\mathcal SB\longrightarrow\mathcal AS$ that sends a highly supersymmetric supergravity background to its geometric symbol is a $1:1$ correspondence, with image the space of abstract symbols.
\end{theorem}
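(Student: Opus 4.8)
The plan is to establish the claimed bijection $\mathcal{SB}\to\mathcal{AS}$ by factoring it through the already-known bijection $\mathcal{SB}\to\mathcal{FD}$ of the Reconstruction Theorem (Theorem~\ref{thm:reconstruction}) and then showing that $\mathcal{FD}$ is, in essence, the same as $\mathcal{AS}$. Concretely, I would first observe that sending a background $(M,g,F)$ to its geometric symbol $\symb(M,g,F)=(\varphi,S')$ coincides, up to the identifications of Proposition~\ref{prop:LiePair}, with the composite of $\mathcal{SB}\to\mathcal{FD}$ (transvection superalgebra) with the map $\mathcal{FD}\to\mathcal{AS}$ that reads off the symbol $(\varphi,S')$ of a tamed filtered subdeformation. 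Corollary~\ref{corollary:geometric-symbol} tells us this map is well-defined on $\mathcal{SB}$ (the geometric symbol is a local-isometry invariant) and injective (the background is determined by it). So the whole statement reduces to two things: (a) the geometric symbol of every highly supersymmetric background is an abstract symbol, i.e.\ the image lands in $\mathcal{AS}$; and (b) every abstract symbol arises this way, i.e.\ surjectivity onto $\mathcal{AS}$.

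For (a), let $\fg$ be the transvection superalgebra of $(M,g,F)$; it is a maximal tamed filtered subdeformation with symbol $(\varphi,S')$. By Proposition~\ref{prop:LiePair} this is a Lie pair with $\fh=\fh_{(\varphi,S')}$ and $X=\gamma^\varphi\circ\Sigma$ for a section $\Sigma$, which is exactly the data fixed in Definition~\ref{def:main}. The defining conditions of an abstract symbol are then read off directly from the structure of $\fg$: \eqref{eq:symbolII} is the set-theoretic constraint \eqref{eq:set-theoretic-I}; \eqref{eq:symbolIII} is $d\varphi=0$, which holds since $\varphi$ is the closed four-form coming from realizability (Definition~\ref{def:realizable}) and $\alpha(v,w)=X_vw-X_wv$ by \eqref{eq:KSAina}; and \eqref{eq:symbolIV}, \eqref{eq:symbolV} are the $[S'S'V]$ and $[S'VV]$ Jacobi identities \eqref{eq:ssvJac}, \eqref{eq:svvJac} of $\fg$, with $R$ the curvature given by Theorem~\ref{thm:reconstruction}(2). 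Hence $(\varphi,S')\in\mathcal{AS}$, and since isomorphisms of filtered subdeformations act on the symbol through $\Spin(V)$ by \eqref{eq:generaliso}, the map descends to moduli spaces.

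For (b), start from an abstract symbol $(\varphi,S')$ and its associated $\fh$, $X$, $\alpha$, and a choice of $R$ satisfying \eqref{eq:symbolIV}--\eqref{eq:symbolV}. Define a bracket on $\fg=\fh\oplus S'\oplus V$ by the explicit formulas displayed at the start of \S\ref{sec:class-probl-highly} (with $\rho(v,w):=[X_v,X_w]-X_{\alpha(v,w)}+R(v,w)$). The point is to verify that this is a genuine Lie superalgebra, i.e.\ that all ten components of the Jacobi identity hold and all right-hand sides lie in $\fa$: the five equivariant ones hold because $\fh\subset\stab_{\fso(V)}(S')\cap\stab_{\fso(V)}(\varphi)$, the three-odd one is the cocycle condition \eqref{eq:secondcocyclecondition}, and the remaining four — $[\fh VV]$, $[S'S'V]$, $[S'VV]$, $[VVV]$ — together with the set-theoretic conditions \eqref{eq:set-theoretic-II}--\eqref{eq:set-theoretic-IV} are exactly what Theorem~\ref{thm:main-reduction} delivers from the hypotheses \eqref{eq:symbolII}, \eqref{eq:symbolIV}, \eqref{eq:symbolV} (which are \eqref{eq:set-theoretic-I}, \eqref{eq:ssvJac}, \eqref{eq:svvJac}), while \eqref{eq:symbolIII} is the closedness of $\varphi$ needed for realizability. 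Thus $\fg$ is tamed; replacing it by a maximal tamed filtered subdeformation containing it (or invoking that a tamed $\fg$ with $\dim S'$ maximal is already maximal) lands us in $\mathcal{FD}$, and applying $\mathcal{FD}\to\mathcal{SB}$ produces a background whose geometric symbol is $(\varphi,S')$. The main obstacle is the bookkeeping in (b): one must check that the $R$ chosen in the abstract symbol is forced to be the unique curvature compatible with the Jacobi identities (so that the resulting $\fg$ genuinely lies in $\mathcal{FD}$ and the maximality/uniqueness needed for a clean bijection holds), and that the section-dependence of $X$ and the $\fh$-ambiguity in $X$ wash out correctly under $\Spin(V)$ — but both are handled by the uniqueness remarks following Definition~\ref{def:realizable} and Proposition~\ref{prop:LiePair}, combined with the observation (made just before Theorem~\ref{thm:main-reduction}) that the local Homogeneity Theorem forces $R$ to be determined by the Lie pair through \eqref{eq:ssvJac}.
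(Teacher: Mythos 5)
Your proposal is correct and follows essentially the same route as the paper: the paper offers no separate proof of Theorem~\ref{thm:strong-version-reconstruction} but simply asserts it as the combination of Theorem~\ref{thm:reconstruction}, Corollary~\ref{corollary:geometric-symbol}, the identification of conditions \eqref{eq:symbolII}--\eqref{eq:symbolV} with \eqref{eq:set-theoretic-I}, $d\varphi=0$, \eqref{eq:ssvJac} and \eqref{eq:svvJac}, and Theorem~\ref{thm:main-reduction} supplying the remaining Jacobi identities --- exactly the factorization through $\mathcal{FD}$ that you spell out. Your parenthetical worry about maximality in step (b) is the one point the paper also leaves implicit, and your resolution via the uniqueness statements of Proposition~\ref{prop:LiePair} and the local Homogeneity Theorem is the intended one.
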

The construction of supergravity backgrounds 
breaks then into three steps:
\begin{enumerate}
\item Describe $\Spin(V)$-orbits of Lie pairs $(\varphi,S')$ (and therefore the
  associated graded subalgebras $\fa=\fh_{(\varphi,S')}\oplus S'\oplus
  V$ of the Poincar\'e superalgebra);
	\item Check \eqref{eq:symbolII} and \eqref{eq:symbolIII} (roughly speaking, this gives the transvection superalgebra of the background at the 
	level of infinitesimal deformation);
\item See if there exists $R:
  \Lambda^2 V\to\fso(V)$ satisfying \eqref{eq:symbolV} and then check \eqref{eq:symbolIV} (at this stage we fully reconstructed the transvection superalgebra as a deformation).
\end{enumerate}
\begin{remark}
I have some evidence that \eqref{eq:symbolIV} is a consequence of the other identities, but I don't have a complete proof so far.
\end{remark}
\begin{remark}
If $\varphi\in\Lambda^4 V$ appears in a Lie pair $(\varphi,S')$ with $S'=S$ then $\varphi$ is decomposable, cf. \cite{FOFS2017}. It would be desirable to have an a priori understanding of which $\varphi\in\Lambda^4 V$ appear in Lie pairs at all. The variety of such four-vectors is $\SO(V)$-stable and in most likelihood properly contained in $\Lambda^4 V$.
\end{remark}
It is clear that constructing Lie pairs remains the most difficult step, since they are defined by a rather complicated system of coupled algebraic equations, quadratic on $\varphi$ and cubic on $S'$. To a certain extent, this can be regarded as the algebraic counterpart of the bosonic field equations   \eqref{eq:bosfieldeqs} for highly supersymmetric backgrounds.
We here propose a ``separation of variables'' technique to settle it:
\begin{itemize}
	\item For a given $\varphi\in\Lambda^4 V$, we consider the operator $\gamma^\varphi:\odot^2 S\to\fso(V)$ and determine the intersection $$\widetilde\fh=\operatorname{Im}\gamma^\varphi\cap\stab_{\fso(V)}(\varphi)$$ of its {\it full} image with the stabilizer of $\varphi$.
It is easy to see that this is a Lie subalgebra of $\fso(V)$. We also determine the $\widetilde\fh$-submodule $K=(\gamma^{\varphi})^{-1}\big(\stab_{\fso(V)}(\varphi)\big)$ of $\odot^2 S$;
\item We choose a presentation of $S$ adapted to $\widetilde \fh$ and
look for $\widetilde\fh$-submodules $S'$ of $S$. We identify those which satisfy the inclusion $\mathfrak D\subset K$  using $\widetilde\fh$-equivariance.
\end{itemize}
Note that the subspace $S'$ only enters at the last stage and that the inclusion there is a quadratic relation on spinors, and not cubic. A pair $(\varphi,S')$ so obtained is a Lie pair. Indeed
\begin{equation}
\label{eq:hinstab}
\fh=\gamma^{\varphi}(\mathfrak D)\subset\gamma^{\varphi}(K)\subset\stab_{\fso(V)}(\varphi)
\end{equation}
and $\fh\subset\stab_{\fso(V)}(S')$, since $S'$ is an $\widetilde\fh$-module by construction and $\fh\subset\widetilde\fh$ by \eqref{eq:hinstab}. 

In \S\ref{sec:7}, we work out an example where $V=\mathbb R^{1,1}\oplus\mathbb R^9$ splits into the ortohognal direct sum of $\mathbb R^{1,1}=\left\{\be_+,\be_-\right\}$ and $\mathbb R^9=\left\{\be_1,\ldots,\be_9\right\}$, and
$\varphi=\be_+\wedge\phi$ for some $\phi\in\Lambda^3\mathbb R^9$ of small rank. 
The relevant orbits are described in \S\ref{sec:6}.
\section{The $\mathrm{SO}_9(\mathbb R)$-orbits in $\Lambda^3 \mathbb R^9$ of subminimal rank}
\label{sec:6}
In this section, we set $G=\mathrm{SL}_9(\mathbb R)$ and let $G^\theta=\mathrm{SO}_9(\mathbb R)$ be 
the special orthogonal subgroup, i.e., the fixed point set of the Cartan involution $\theta:G\to G$ of $G$. We are interested in the stratification under the action of 
$G^\theta$ of small orbits of $G$ on $\Lambda^3 \mathbb R^9$.

The {\it support} of a trivector $\phi\in\Lambda^3\mathbb R^9$ is the unique minimal subspace $\mathbb E\subset \mathbb R^9$ such that $\phi\in\Lambda^3 \mathbb E$. 
Its dimension is called the {\it rank} of $\phi$ and it is a $G$-invariant, in particular it is one of the simplest $G^\theta$-invariants, together with the trivector's norm.

The trivectors of minimal (non-zero) rank are decomposable and form a $G$-orbit, which is stratified by the level sets of the norm  in a $1$-parameter family of $G^\theta$-orbits. The associated filtered subdeformations have been studied in \cite{FOFS2017}: 
\begin{proposition}
\label{prop:sugra-decomposable}
Let $\phi=\lambda \be_{123}$ for some $\lambda>0$ and $(M,g,F)$ a highly supersymmetric supergravity  background with $symb(M,g,F)=(\varphi=\be_+\wedge \phi,S')$ for some $S'\subset S$. Then $S'=S$  and $(M,g,F)$ is locally isometric to the 
Kowalski-Glikman background. 
\end{proposition}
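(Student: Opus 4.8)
The plan is to reduce the statement to the claim $S'=S$ and then quote the $S'=S$ case from \cite{FOFS2017}. By Corollary~\ref{corollary:geometric-symbol} the background is determined up to local isometry by its geometric symbol, which by hypothesis is $(\varphi,S')$ with $\varphi=\be_+\wedge\phi=\lambda\,\be_{+123}$. The tamed filtered subdeformation reconstructed from the pair $(\lambda\,\be_{+123},S)$ was computed in \cite{FOFS2017}: it is the transvection superalgebra of the Kowalski-Glikman pp-wave, the unique maximally supersymmetric background whose $4$-form has decomposable underlying four-vector with a null leg (cf.\ \cite{FOFP}). Hence, by the Reconstruction Theorem~\ref{thm:reconstruction}, the proposition follows once we prove that necessarily $S'=S$, i.e.\ once we rule out $16<\dim S'<32$.

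To this end I would run the ``separation of variables'' scheme of \S\ref{sec:class-probl-highly} for $\varphi=\lambda\,\be_{+123}$. Split $V=\langle\be_+,\be_-\rangle\oplus V_2\oplus V_3$ with $\be_\pm$ null, $V_2=\langle\be_1,\be_2,\be_3\rangle$ and $V_3=\langle\be_4,\dots,\be_9\rangle$, so that $\varphi=\lambda\,\be_+\wedge\be_1\wedge\be_2\wedge\be_3$. Using \eqref{definition-gamma-useful} one finds that, because $\varphi$ is decomposable with null leg $\be_+$, every element of $\operatorname{Im}\gamma^\varphi$ already annihilates $\varphi$; explicitly $\operatorname{Im}\gamma^\varphi=\widetilde\fh\cong\big(\fso(3)_{V_2}\oplus\fso(6)_{V_3}\big)\ltimes(V_2\oplus V_3)$, with $V_2\oplus V_3$ acting by null rotations, so that $\widetilde\fh=\operatorname{Im}\gamma^\varphi\cap\stab_{\fso(V)}(\varphi)=\operatorname{Im}\gamma^\varphi$ and $K=(\gamma^\varphi)^{-1}\big(\stab_{\fso(V)}(\varphi)\big)=\odot^2 S$. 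In particular the inclusion $\fD\subset K$ is vacuous and $(\varphi,S')$ is a Lie pair for \emph{every} $\widetilde\fh$-submodule $S'\subset S$ with $\dim S'>16$; as there are several such submodules, the Lie pair condition alone does not determine $S'$ and one must invoke the remaining clauses of Definition~\ref{def:main}.

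The decisive clause is \eqref{eq:symbolII}. I would first record the Clifford-algebraic facts that $\beta^\varphi_{\be_+}=0$ (since $\be_+$ is null and divides $\varphi$ in $\Cl(V)$), that $\operatorname{Im}\beta^\varphi_v\subseteq\ker(\be_+\cdot)$ for every $v\in V_2\cup V_3$ --- a $16$-dimensional subspace, with $\beta^\varphi_{\be_i}$ ($i=1,2,3$) and $\beta^\varphi_{\be_j}$ ($j\ge4$) equal to Clifford multiplication by $\be_+$ precomposed with an invertible operator --- and that $\beta^\varphi_{\be_-}$ respects the splitting $S=\ker(\be_+\cdot)\oplus\ker(\be_-\cdot)$. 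Writing $X_v=\gamma^\varphi(\Sigma v)\in\widetilde\fh$ for a section $\Sigma$, condition \eqref{eq:symbolII} requires each twisted operator $\beta^\varphi_v+\sigma(X_v)$ to preserve $S'$; using this for $v\in V_2$ together with $\dim S'>16$ and the action of $\fso(3)_{V_2}\oplus\fso(6)_{V_3}$ on $S'$ through the isotropy, one first shows that $S'\cap\ker(\be_+\cdot)$ is more than half-dimensional in $S'$, and then, feeding this back into \eqref{eq:symbolII} for $v=\be_-$ and $v\in V_3$ (and, if needed, into \eqref{eq:symbolV}), that in fact $\ker(\be_+\cdot)\subseteq S'$ and then $\ker(\be_-\cdot)\subseteq S'$, forcing $S'=S$. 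I expect this last part --- keeping track of how the operators $\beta^\varphi_v$ and $\sigma(\widetilde\fh)$ permute the isotypic pieces of $S$, reconciling the twist $\sigma(X_v)$ with the isotropy $\fh\subseteq\widetilde\fh$, and classifying the $\widetilde\fh$-submodules --- to be the main obstacle, although no single step is conceptually difficult. Once $S'=S$ is established, Theorem~\ref{thm:reconstruction} and Corollary~\ref{corollary:geometric-symbol} identify $(M,g,F)$, up to local isometry, with the Kowalski-Glikman background.
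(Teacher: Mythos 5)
Your reduction to the claim $S'=S$ is the right first move, but the way you then try to establish $S'=S$ leaves the decisive step unproven. Everything after ``To this end I would run the separation of variables scheme'' is a programme, not a proof: the explicit description of $\operatorname{Im}\gamma^\varphi$, the claim $K=(\gamma^\varphi)^{-1}\bigl(\stab_{\fso(V)}(\varphi)\bigr)=\odot^2 S$, the kernel/image structure of the operators $\beta^\varphi_v$, and above all the final argument that \eqref{eq:symbolII} forces $\ker(\be_+\cdot)\subseteq S'$ and then $S'=S$ are all asserted or deferred (``one first shows\dots'', ``I expect this last part to be the main obstacle''). As written, the case $16<\dim S'<32$ is not actually excluded, so the proposal has a genuine gap exactly where the content of the proposition lies.

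The paper closes this gap without any computation, by exploiting \emph{maximality}: the Reconstruction Theorem \ref{thm:reconstruction} identifies backgrounds with \emph{maximal} tamed filtered subdeformations, and the transvection superalgebra of the Kowalski--Glikman background is already known from \cite{FOFS2017} to be the tamed filtered subdeformation with symbol $(\varphi,S)$, i.e.\ with the full spinor module. A tamed subdeformation with the same $\varphi$ but symbol $(\varphi,S')$, $S'\subsetneq S$, embeds into it and hence cannot be maximal; so the transvection superalgebra of your background must have symbol $(\varphi,S)$, giving $S'=S$ and the identification with Kowalski--Glikman in one stroke. If you want to salvage your computational route, you would need to actually carry out the $\widetilde\fh$-module analysis of $S$ and verify that no proper submodule of dimension $>16$ satisfies \eqref{eq:symbolII}; but note that the maximality argument makes all of that unnecessary.
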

\begin{proof}
The transvection superalgebra of the Kowalski-Glikman background is the filtered subdeformation with symbol $(\varphi,S)$ \cite{FOFS2017}.
By maximality, the symbol $(\varphi,S')$ of the transvection superalgebra of our background has to coincide with  $(\varphi,S)$.
\end{proof}
The next step in the analysis of filtered deformations and supergravity backgrounds is the rank $5$ orbit, due to the following well-known result:
\begin{lemma}
\label{lem:rank5trivectors}
An indecomposable $\phi\in\Lambda^3\mathbb R^9$ has rank at least $5$. The rank $5$ trivectors constitute a 
 $G$-orbit, with the representative, e.g., $\phi=\be_{123}+\be_{145}$.
\end{lemma}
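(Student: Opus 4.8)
The plan is to prove the two assertions in Lemma~\ref{lem:rank5trivectors} separately: first the lower bound on the rank of an indecomposable trivector, then the classification of rank-$5$ trivectors up to $G=\mathrm{SL}_9(\mathbb R)$.

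For the lower bound, I would work entirely inside $\Lambda^3\mathbb E$ where $\mathbb E$ is the support of $\phi$, so that $\phi$ has full rank $n=\dim\mathbb E$. The cases $n=1,2$ are vacuous ($\Lambda^3$ is zero), and $n=3$ forces $\phi$ to be a nonzero multiple of $\be_1\wedge\be_2\wedge\be_3$, which is decomposable. The only real work is $n=4$: I would show every nonzero $\phi\in\Lambda^3\mathbb R^4$ is decomposable. The cleanest way is the Hodge-star (or contraction) isomorphism $\Lambda^3\mathbb R^4\cong\Lambda^1\mathbb R^4\cong(\mathbb R^4)^*$: a trivector $\phi$ corresponds to a covector $\ell$, and $\phi=\imath_{\ell}\,\dvol$ up to scale, which is manifestly $e_i\wedge e_j\wedge e_k$ for a suitable basis adapted to $\ker\ell$; alternatively, pick any $v$ with $\imath_v\phi\neq0$, note $\imath_v\phi\in\Lambda^2$ has rank $\le 2$ hence is decomposable, $\imath_v\phi=a\wedge b$, and then $\phi=v^*\wedge a\wedge b$ after checking the remaining components vanish by the full-support/full-rank hypothesis. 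Either route shows rank $4$ is impossible for an indecomposable trivector, so the minimal indecomposable rank is $5$.

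For the classification of rank-$5$ trivectors, I would restrict to $\Lambda^3\mathbb E$ with $\dim\mathbb E=5$ and classify $\mathrm{GL}(\mathbb E)=\mathrm{GL}_5(\mathbb R)$-orbits of trivectors with full support; these then embed into $\Lambda^3\mathbb R^9$ and their $G$-orbits are obtained by acting with the full $\mathrm{SL}_9$. The standard fact here is that $\Lambda^3\mathbb R^5\cong(\Lambda^2\mathbb R^5)^*\otimes\Lambda^5\mathbb R^5$, i.e., trivectors in dimension $5$ are dual to bivectors, and bivectors in $\mathbb R^5$ are classified by rank ($0,2,4$) under $\mathrm{GL}_5$. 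Full support of $\phi$ corresponds to the dual bivector being nondegenerate on the whole space, i.e., rank $4$, which is a single open dense $\mathrm{GL}_5$-orbit; transporting through the duality gives a single $\mathrm{GL}_5$-orbit of full-support trivectors in $\Lambda^3\mathbb R^5$. It then only remains to verify that the proposed representative $\phi=\be_{123}+\be_{145}$ indeed has full support in $\mathbb R^5=\langle\be_1,\dots,\be_5\rangle$ — one checks $\imath_v\phi\neq0$ for each basis vector, or equivalently that its dual bivector $\be_2\wedge\be_3+\be_4\wedge\be_5$ (up to the choice of volume form) has rank $4$ — and that it is indecomposable (it is a sum of two decomposables sharing only the line $\langle\be_1\rangle$, and a direct check shows $\phi\wedge\phi'$-type obstructions, or simply that no single line divides $\phi$). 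Since $\mathrm{SL}_9$ acts transitively on ordered $5$-frames in $\mathbb R^9$ up to the residual $\mathrm{GL}_5$-action on the span, transitivity on full-support rank-$5$ trivectors of $\Lambda^3\mathbb R^5$ lifts to transitivity of $G$ on rank-$5$ trivectors of $\Lambda^3\mathbb R^9$.

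The main obstacle is the rank-$5$ classification: establishing that there is exactly one full-rank $\mathrm{GL}_5$-orbit. I expect the duality $\Lambda^3\mathbb R^5\cong\Lambda^2(\mathbb R^5)^*$ (twisted by the determinant) to make this essentially immediate, reducing it to the classical normal form for alternating bilinear forms; the only care needed is to track how ``full support of the trivector'' translates into ``full rank of the dual bivector'' and to confirm the explicit representative sits in that orbit. The rank lower bound is comparatively routine once one invokes that bivectors of rank $\le 2$ are decomposable. No step should require lengthy computation beyond these standard linear-algebra normal forms.
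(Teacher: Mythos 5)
Your proof is correct, but note that the paper does not actually prove this lemma: it is stated as a ``well-known result'' with the orbit structure referenced to Martinet (and, implicitly, to the Vinberg--Elashvili classification cited later), so there is no in-paper argument to compare against. Your route is nevertheless exactly in the spirit of what the paper does next: the duality $\Lambda^3\mathbb E\to\Lambda^2\mathbb E^*$, $\phi\mapsto\imath_\phi\vol$, that you use to reduce rank-$5$ trivectors to bivectors is the very map the author uses in the proof of the following proposition to stratify the orbit under $\mathrm{SO}_9(\mathbb R)$; you are simply running it one level up, for $\GL_5$ instead of $\SO(\mathbb E)$, where the normal form for alternating forms gives a single full-rank orbit. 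Two small points of hygiene: in odd dimension $5$ a bivector cannot be ``nondegenerate on the whole space'' --- the correct statement is that full support of $\phi$ is equivalent to the dual bivector having the maximal possible rank $4$, which follows from the identity $\imath_\xi\phi\leftrightarrow\pm\,\xi\wedge B$ (so $\imath_\xi\phi=0$ for some $\xi\neq0$ iff $B$ is decomposable or zero); and when lifting from $\GL(\mathbb E)$ to $G=\SL_9(\mathbb R)$ you should say explicitly that the stabilizer of the $5$-plane $\mathbb E$ in $\SL_9$ surjects onto $\GL(\mathbb E)$, the determinant being absorbed on a complementary subspace --- with that remark the determinant twist in the duality is also harmless. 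With these touch-ups the argument is complete and entirely elementary.
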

 This is also a subminimal orbit, in the sense that its Zariski-closure consists of the orbit itself, the minimal orbit of the non-zero decomposable trivectors and the zero trivector, see for instance \cite[page 104]{Martinet1970}. We will therefore denote it by
$\mathcal O^{G}_{submin}$. 
\begin{proposition}[$G^\theta$-orbits of rank $5$ trivectors.]
\label{lem:rank5trivectors-orthogonal}
The subminimal orbit $\mathcal O^{G}_{submin}$ admits a stratification
\begin{equation*}
\label{eq:stratification}
\mathcal O^G_{submin}=\bigcup_{0<\lambda\leq\mu}\mathcal O^{G^\theta}_{\phi_{(\lambda,\mu)}}\;,
\end{equation*}
where $\mathcal O^{G^\theta}_{\phi_{(\lambda,\mu)}}$ is the $G^\theta$-orbit of the trivector $\phi_{(\lambda,\mu)}=\lambda \be_{123}+\mu \be_{145}$. The Lie algebra of the stabiliser $H^\theta$ of $\phi_{(\lambda,\mu)}$ in $G^\theta$ is 
$$
\mathfrak{h}^\theta=
\begin{cases}
\langle \be_{23},\be_{45}\rangle\oplus\mathfrak{so}(\mathbb E^\perp)\cong\mathfrak{u}(1)\oplus\mathfrak{u}(1)\oplus\mathfrak{so}(\mathbb E^\perp)\quad&\text{if}\;\lambda<\mu,\\
\langle \be_{23},\be_{45}, \be_{24}+\be_{35},\be_{25}-\be_{34}\rangle\oplus\mathfrak{so}(\mathbb E^\perp)\cong\mathfrak{u}(2)\oplus\mathfrak{so}(\mathbb E^\perp)\quad&\text{if}\;\lambda=\mu,
\end{cases}
$$
where $\mathbb E=\langle \be_1,\ldots,\be_5\rangle$ is the support of $\phi_{(\lambda,\mu)}$ and $\mathbb E^\perp=\langle \be_6,\ldots,\be_9\rangle$.
\end{proposition}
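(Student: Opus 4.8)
The plan is to analyze the two cases $\lambda<\mu$ and $\lambda=\mu$ separately, but in both cases using the same basic strategy: first identify the $G$-orbit under the action of the full group $\GL_5(\mathbb R)$ acting on the support $\mathbb E=\langle\be_1,\ldots,\be_5\rangle$, then cut down to the compact group $\SO_9(\mathbb R)$ and compute the isotropy Lie algebra directly. I would begin by recalling that, by Lemma \ref{lem:rank5trivectors}, every rank $5$ trivector is $G$-conjugate to $\phi_0=\be_{123}+\be_{145}$, and that the rescalings act transitively on norms; the key point to establish is that within $\mathcal O^G_{submin}$ the $G^\theta$-orbits are parametrized precisely by pairs $(\lambda,\mu)$ with $0<\lambda\le\mu$. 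For this I would use the canonical form for a rank $5$ trivector with a given induced metric on the support: writing $\phi=\be_1\wedge\omega$ with $\omega\in\Lambda^2\mathbb E'$ a symplectic form of rank $4$ on a complement $\mathbb E'=\langle\be_2,\ldots,\be_5\rangle$ of $\be_1$ in $\mathbb E$, and then diagonalizing $\omega$ with respect to the Euclidean metric on $\mathbb E'$ via $\SO(\mathbb E')$, one gets $\omega=\lambda\,\be_2\wedge\be_3+\mu\,\be_4\wedge\be_5$ with $0<\lambda\le\mu$ the (positive) singular values. The remaining freedom is the choice of $\be_1$ inside $\mathbb E$, which only rescales $\phi$; normalizing $\be_1$ to be a unit vector and absorbing its length into $(\lambda,\mu)$ gives the claimed normal form, and the pair $(\lambda,\mu)$ is a complete invariant because the singular values of $\omega$ are intrinsic once the unit direction $\be_1$ is pinned down.

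Next I would compute $\mathfrak h^\theta=\stab_{\mathfrak{so}_9(\mathbb R)}(\phi_{(\lambda,\mu)})$. Since the support $\mathbb E$ is intrinsically attached to $\phi$, any $A\in\mathfrak{so}_9(\mathbb R)$ fixing $\phi$ must preserve $\mathbb E$ and hence $\mathbb E^\perp$, so $\mathfrak h^\theta$ splits as $\stab_{\mathfrak{so}(\mathbb E)}(\phi)\oplus\mathfrak{so}(\mathbb E^\perp)$, and the $\mathfrak{so}(\mathbb E^\perp)\cong\mathfrak{so}_4(\mathbb R)$ summand is immediate. For the $\mathfrak{so}(\mathbb E)=\mathfrak{so}_5(\mathbb R)$ part I would write $\phi_{(\lambda,\mu)}=\be_1\wedge(\lambda\be_{23}+\mu\be_{45})$ and impose $A\cdot\phi=0$ in coordinates. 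The condition forces $A\be_1$ to be a multiple of $\be_1$ (hence, by skew-symmetry, $A\be_1=0$, so $A$ acts on $\langle\be_2,\ldots,\be_5\rangle$), and then $A\cdot(\lambda\be_{23}+\mu\be_{45})=0$ inside $\mathfrak{so}_4(\mathbb R)$. This last equation is a linear-algebra computation: it says $A$ commutes with the skew-symmetric endomorphism $J=\lambda(\be_2^*\otimes\be_3-\be_3^*\otimes\be_2)+\mu(\cdots)$. When $\lambda<\mu$ the $\pm i\lambda$ and $\pm i\mu$ eigenspaces are distinct, so the centralizer is $\langle\be_{23}\rangle\oplus\langle\be_{45}\rangle\cong\mathfrak u(1)\oplus\mathfrak u(1)$; when $\lambda=\mu$ the endomorphism $J$ is $\lambda$ times a complex structure on $\langle\be_2,\ldots,\be_5\rangle\cong\mathbb C^2$ and its centralizer in $\mathfrak{so}_4$ is the full $\mathfrak u(2)$, spanned by $\be_{23},\be_{45},\be_{24}+\be_{35},\be_{25}-\be_{34}$. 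I would verify by direct substitution that these four two-forms indeed annihilate $\phi_{(\lambda,\mu)}$ when $\lambda=\mu$ and that the first two always do.

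The main obstacle I anticipate is the bookkeeping in separating the two strata cleanly: one must check that the degeneration $\lambda=\mu$ really does produce a strictly larger stabilizer (jumping from dimension $2+\dim\mathfrak{so}_4=8$ to $4+\dim\mathfrak{so}_4=10$ inside $\mathfrak{so}_9$) and that no further special values of $(\lambda,\mu)$ cause additional jumps — this follows from the classification of centralizers of a single skew endomorphism of $\mathbb R^4$, which depends only on whether the two singular values coincide. A secondary subtlety is making sure the orbit-through-$\phi_{(\lambda,\mu)}$ description is an honest stratification, i.e., that distinct $(\lambda,\mu)$ with $0<\lambda\le\mu$ give disjoint $G^\theta$-orbits; this is exactly the statement that $(\lambda^2+\mu^2,\lambda^2\mu^2)$, or equivalently the pair of singular values, is a complete $G^\theta$-invariant of $\phi$ once we know the $\GL$-orbit, which is what the normal-form argument in the first paragraph establishes. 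Everything else is a routine but slightly lengthy coordinate computation, which I would not grind through in full here.
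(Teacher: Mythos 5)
Your argument is correct, and it reaches the normal form by a genuinely different route from the paper. The paper first reduces, as you do, to classifying trivectors supported on $\mathbb E$ under $\mathrm{O}(\mathbb E)$, but then invokes the $\mathrm{SO}(\mathbb E)$-equivariant duality $\Lambda^3\mathbb E\to\Lambda^2\mathbb E^*\cong\fso(\mathbb E)$, $\phi\mapsto\imath_\phi\operatorname{vol}$, so that the orbit classification becomes the classification of adjoint orbits of the compact group $\mathrm{SO}(5)$: conjugacy of Cartan subalgebras plus the $B_2$ Weyl group (which contains all sign changes and the swap) yields the unique representative with $0\le\lambda\le\mu$ and, as a bonus, shows that $\mathrm{SO}(\mathbb E)$- and $\mathrm{O}(\mathbb E)$-orbits coincide. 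Your factorization $\phi=\be_1\wedge\omega$ with $\mathbb R\be_1=\{w:\ w\wedge\phi=0\}$ intrinsic and $\omega$ classified by its singular values is essentially the dual picture, and it works; the one point to be careful about is that under $\mathrm{SO}(\mathbb E')\cong\mathrm{SO}(4)$ alone the sign of the Pfaffian of $\omega$ is an invariant ($D_2$ Weyl group), so to force both $\lambda,\mu>0$ you must use either the reflection $\be_1\mapsto-\be_1$ or an orientation-reversing element of $\mathrm{O}(\mathbb E')$ — both of which are available because a determinant $-1$ transformation of $\mathbb E$ extends to $G^\theta=\mathrm{SO}_9(\mathbb R)$ by compensating on $\mathbb E^\perp$. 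The paper's $B_2$ argument absorbs this automatically, which is what it buys you. For the stabilizer, the paper simply omits the computation; your reduction to the centralizer of the skew endomorphism $J=\lambda\be_{23}+\mu\be_{45}$ in $\fso(4)$ (after observing that $A$ must preserve $\mathbb E$, kill the kernel line $\mathbb R\be_1$, and hence act on $\mathbb E'$), with the dichotomy $\mathfrak u(1)\oplus\mathfrak u(1)$ versus $\mathfrak u(2)$ according to whether the eigenvalues $\pm i\lambda,\pm i\mu$ separate, is a clean and correct way to supply it.
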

\begin{proof}
Let $\phi\in\Lambda^3 \mathbb R^9$ be a rank $5$ trivector. 
 Then $g\cdot\phi$ has support $\mathbb E=\langle e_1,\ldots,e_5\rangle$
for some $g\in G^\theta$, and two trivectors with support $\mathbb E$ are in the same $G^\theta$-orbit if and only if they are in the same $\mathrm{O}(\mathbb E)$-orbit. Therefore, it is enough to describe the $\mathrm{O}(\mathbb E)$-orbits of trivectors $\phi\in\Lambda^3 \mathbb E$ of rank $5$. (We will see that they coincide with $\mathrm{SO}(\mathbb E)$-orbits.)

We fix a volume element $\operatorname{vol}\in\Lambda^5\mathbb E^*$ and consider the bijection
$$
\Lambda^3 \mathbb E\to\Lambda^2\mathbb E^*\;,\qquad\phi\mapsto \imath_{\phi}\operatorname{vol}\;,
$$
which is an isomorphism of $\mathrm{SO}(\mathbb E)$-modules. Now $\Lambda^2\mathbb E^*\cong\mathfrak{so}(\mathbb E)$,
so the $\mathrm{SO}(\mathbb E)$-orbits on $\Lambda^3 \mathbb E$ are in bijective correspondence with the adjoint orbits. 
The group $\mathrm{SO}(\mathbb E)$ is compact, hence any 
adjoint orbit has a representative in the Cartan subalgebra 
$$
\mathfrak{t}=\left\{\left(\begin{array}{c|cc|cc}
0 & & & & \\
\hline
& 0 & \mu & &  \\
&-\mu & 0 & &  \\
\hline
& & & 0 & \lambda   \\
& & & -\lambda & 0 
\end{array}
\right)\mid\lambda,\mu\in\mathbb R
\right\}$$ 
of $\mathfrak{so}(\mathbb E)$. Using the Weyl group of $\mathrm{SO}(\mathbb E)$, we may uniquely arrange for $0\leq\lambda\leq\mu$.
The corresponding trivector is $\phi=\lambda \be_{123}+\mu \be_{145}$ and has rank $5$ if and only if $\lambda\neq 0$. 

The last claim follows from a direct computation, which we omit.
\end{proof}
\begin{remark}
A similar analysis for higher rank orbits is possible but it is more involved, as the connection with the adjoint orbits of a compact Lie group is not available in general.
A possible strategy is outlined here:
\begin{enumerate}
	\item Let $\mathcal O^G_\phi\cong G/H$ be the $G$-orbit with representative $\phi\in\Lambda^3\mathbb R^9$ of rank $6\leq k\leq 9$, where 
	$H$ is the stabiliser of $\phi$ in $G$. We then have a stratification
	\begin{equation}
\label{eq:stratificationII}
\mathcal O^G_{\phi}=\bigcup_{i}\mathcal O^{G^\theta}_{\phi_i}
\end{equation}
under the action of $G^\theta$, which is parametrised by the double cosets in $G^\theta \backslash G/H$;
\item The $H$-equivariant map
$G^\theta g\mapsto g^t\cdot g$
identifies $G^\theta \backslash G$ with the space $Sym_9^+$ of positive-definite symmetric matrices with unit determinant.
(Surjectivity follows from Sylvester's law of inertia, the right action of $H$ on $Sym_9^+$ is by congruence.) Hence 
	\begin{equation}
	\label{eq:H-equivariantidentification}
G^\theta \backslash G/H\cong Sym_9^+/H\;
\end{equation}
parametrises the stratification \eqref{eq:stratificationII};
\item Tipically we have a non-trivial Levi decomposition 
\begin{equation}
H=L\ltimes U
\end{equation}
of $H$ with reductive subgroup $L$ and unipotent radical $U$.
To compute $Sym_9^+/H$, we may first consider $Sym_9^+/U$ and then the residual action of $L\cong H/U$ on it.
\end{enumerate}
The orbits $\mathcal O^G_\phi$ of trivectors of rank at most $9$ have been determined in \cite{VE} (upon complexification). 
The description of those of rank $9$ is extremely involved but those of rank $\leq 8$ are automatically nilpotent (in the sense of Vinberg's theory of $\theta$-groups) and relatively few: upon complexification, there are $13$ orbits $\mathcal O^G_\phi$ of rank $8$ and the orbits of rank $\leq 7$ are given by:
\begin{align*}
 \begin{array}{|c|c|c|c|c|c|} \hline
\text{class}  &\text{representative}\;\phi & \text{rank} & \text{reductive part}\;\mathfrak l\;\text{of}\;\fh & \dim\mathfrak l & \dim\fh\\ \hline\hline
 90 &\be_{123}+\be_{147}+\be_{257}+\be_{367}+\be_{456} & 7 & G_2\oplus A_1 & 17 & 31\\ \hline
  93 &\be_{125}+\be_{137}+\be_{247}+\be_{346} & 7 & 3A_1\oplus \CC & 10 & 32\\ \hline
	 94 & \be_{127}+\be_{134}+\be_{256} & 7 & 3A_1\oplus 2\CC & 11 & 35\\ \hline
		95 &	  \be_{125}+\be_{136}+\be_{147}+\be_{234} & 7 & A_2\oplus A_1\oplus \CC & 12 & 38\\ \hline
	96	&		\be_{123}+\be_{456} & 6 & 3A_2 & 24 & 42\\ \hline
				97	&							\be_{124}+\be_{135}+\be_{236} & 6 & 2A_2\oplus\CC & 17 & 43\\ \hline
				99	&	\be_{123}+\be_{145}+\be_{167} & 7 & C_3\oplus A_1\oplus \CC & 25 & 45\\ \hline
				100		&									\be_{123}+\be_{145} & 5 & A_3\oplus C_2\oplus\CC & 26 & 50\\ \hline
					101		&										\be_{123} & 3 & A_5\oplus A_2 & 43 & 61\\ \hline
 \end{array}
 \end{align*}
where a couple of misprints in \cite{VE} have been corrected. Some orbits (but not all) have been investigated in \cite{GH} in the context of pp-waves.
%The trivector of Michelson's background \cite{M} is in class $90$.

By previous steps, the stratification \eqref{eq:stratificationII} of an orbit $\mathcal O^G_\phi$ is parametrised
by unit volume scalar products up to $H$-equivalence. The study of the associated supergravity backgrounds will be considered in future work,
we now work out our example. We will omit most of the actual details of this calculation.
\end{remark}
\section{Example}
\label{sec:7}
The proof of the following lemma is straightforward.
\begin{lemma}
\label{lemma:invariant-4form}
Let $\varphi\in\Lambda^4 V$ be of the form $\varphi=\be_+\wedge\phi$ for some non-zero $\phi\in\Lambda^3 \mathbb R^9$. Then
$$
\stab_{\fso(V)}(\varphi)=\stab_{\fso(\mathbb R^9)}(\phi)\ltimes (\be_+\wedge \mathbb R^9)\;,
$$
in particular $\stab_{\fso(V)}(\varphi)\subset\stab_{\fso(V)}(\be_+)$.
\end{lemma}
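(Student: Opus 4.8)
The statement is a purely Lie-theoretic computation about the stabilizer in $\fso(V)$ of a four-vector of the factorized form $\varphi = \be_+ \wedge \phi$, where $V = \RR^{1,1} \oplus \RR^9$ with $\RR^{1,1} = \langle \be_+, \be_- \rangle$ a hyperbolic plane and $\be_+$ null. The plan is to work directly with the action of $\fso(V)$ on $\Lambda^4 V$, writing a general $A \in \fso(V)$ in block form relative to the decomposition $V = \RR\be_+ \oplus \RR\be_- \oplus \RR^9$, and to impose $A \cdot \varphi = 0$ componentwise against the induced weight-type decomposition of $\Lambda^4 V$ under the action of the one-parameter group generated by $\be_+ \wedge \be_-$ (equivalently, grade $V$ by the $\be_+$-null-direction filtration).

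First I would fix notation: write $A(\be_+) = a\,\be_+ + b\,\be_- + u$, $A(\be_-) = -b\,\be_+ - a\,\be_- + w$ and $A|_{\RR^9}$ some endomorphism, with the skew-symmetry constraints relating the off-diagonal blocks (so, e.g., $u, w \in \RR^9$ are determined by how $A$ pairs $\be_\pm$ with $\RR^9$, and the $\be_+ \leftrightarrow \be_-$ block is the single scalar $b$ after using $\eta(\be_+,\be_-) \neq 0$). Then $A \cdot \varphi = A(\be_+) \wedge \phi + \be_+ \wedge (A \cdot \phi)$, where $A \cdot \phi$ means the derivation action of $A$ on $\Lambda^3$ using the full map $A$ on $V$ (not just its $\RR^9$-block). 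Expanding,
\begin{equation*}
A \cdot \varphi = (a\,\be_+ + b\,\be_- + u) \wedge \phi + \be_+ \wedge (A|_{\RR^9}\text{-part of } \phi) + \be_+ \wedge (\text{terms from } A(\RR^9) \text{ hitting } \be_\pm).
\end{equation*}
The key observation is that the summand $b\,\be_- \wedge \phi$ is the only term containing $\be_-$ wedged with an element of $\Lambda^3\RR^9$, and since $\phi \neq 0$ this forces $b = 0$. Similarly the summand $u \wedge \phi$ lives in $\Lambda^4 \RR^9$ and is the only contribution there, so $u \wedge \phi = 0$; because $\phi \neq 0$, decomposability considerations (or: $u \wedge \phi = 0$ in $\Lambda^4\RR^9$ with $\phi$ of rank $\geq 1$) show that either $u = 0$ or $u$ lies in the support of $\phi$ — but a closer look using that $u$ arises as part of a skew map forces $u = 0$ as well (the component of $A$ sending $\RR^9 \to \RR\be_+$ is unconstrained, whereas $\RR^9 \to \RR\be_-$ is what $u$ controls and it must vanish). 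What survives is: $A(\be_+) = a\,\be_+$, the $\RR^9$-block of $A$ preserves $\phi$, i.e., lies in $\stab_{\fso(\RR^9)}(\phi)$, and the remaining free block is the map $\RR^9 \to \RR\be_+$ together with the induced $\be_+ \wedge \RR^9$ piece; one checks that the scalar $a$ is then also forced to vanish by the $\be_+ \wedge \Lambda^3\RR^9$ component pairing with $a\,\be_+ \wedge \phi + \be_+ \wedge (a\text{-contribution})$, or is absorbed. Assembling, $\stab_{\fso(V)}(\varphi) = \stab_{\fso(\RR^9)}(\phi) \ltimes (\be_+ \wedge \RR^9)$, with the semidirect structure coming from the fact that $\be_+ \wedge \RR^9 \subset \fso(V)$ is an abelian ideal (two such maps compose to zero since $\be_+$ is null) normalized by the $\RR^9$-block.

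The final inclusion $\stab_{\fso(V)}(\varphi) \subset \stab_{\fso(V)}(\be_+)$ is then immediate: every element of $\stab_{\fso(\RR^9)}(\phi)$ kills $\be_+$ outright (it acts only on $\RR^9$), and every element of $\be_+ \wedge \RR^9$, being of the form $X \mapsto \eta(v, X)\be_+ - \eta(\be_+, X)v$ with $\be_+$ null so $\eta(\be_+,\be_+) = 0$, sends $\be_+ \mapsto \eta(v,\be_+)\be_+ = 0$ when $v \in \RR^9$ (orthogonal to $\be_+$). The main obstacle I anticipate is purely bookkeeping: being careful about the two distinct ``off-diagonal'' blocks $\RR^9 \to \RR\be_+$ and $\RR^9 \to \RR\be_-$ (one is free, one is killed), and correctly tracking which wedge components of $A \cdot \varphi$ in $\Lambda^4 V$ each piece of $A$ contributes to — the null character of $\be_+$ is what makes the analysis clean, so the argument is short once the block decomposition is set up correctly.
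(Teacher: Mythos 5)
Your overall strategy --- block-decomposing $A\in\fso(V)$ relative to $V=\RR\be_+\oplus\RR\be_-\oplus\RR^9$ and matching components of $A\cdot\varphi$ against the decomposition of $\Lambda^4V$ by $\be_\pm$-content --- is the natural one (the paper declares the proof ``straightforward'' and omits it), and your treatment of the coefficient $b$, of the free block $\RR^9\to\RR\be_+$, of the abelian ideal $\be_+\wedge\RR^9$, and of the final inclusion into $\stab_{\fso(V)}(\be_+)$ is fine. But there is a genuine gap where you force $u=0$. From the $\Lambda^4\RR^9$-component you correctly extract only $u\wedge\phi=0$, and this does \emph{not} imply $u=0$: for $\phi=\be_{123}$ (or the paper's $\phi=\be_{123}+\be_{145}$) the vector $u=\be_1$ satisfies $u\wedge\phi=0$. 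Your parenthetical ``a closer look using that $u$ arises as part of a skew map forces $u=0$'' is an assertion, not an argument. The missing ingredient is the $\be_+\wedge\be_-\wedge\Lambda^2\RR^9$-component of $A\cdot\varphi$: by skew-symmetry the block $\RR^9\to\RR\be_-$ of $A$ is dual to $u$, and its contribution through $\be_+\wedge(A\cdot\phi)$ is $\pm\,\be_+\wedge\be_-\wedge\imath_u\phi$, so the stabilizer condition also yields $\imath_u\phi=0$. Together, $u\wedge\phi=0$ and $\imath_u\phi=0$ force $\phi\in\big(u\wedge\Lambda^2(u^\perp)\big)\cap\Lambda^3(u^\perp)=0$ unless $u=0$. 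This is precisely the step that rules out, say, $A=-\be_-\wedge\be_1$ for $\phi=\be_{123}$, which your written argument does not exclude.

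A second, smaller gap: you leave the fate of the scalar $a$ undecided (``forced to vanish \dots or is absorbed''). It is not absorbed. The $\be_+\wedge\Lambda^3\RR^9$-component gives $a\phi+B\cdot\phi=0$, where $B$ is the $\fso(\RR^9)$-block of $A$; since $\RR^9$ is definite, $B$ acts skew-symmetrically on $\Lambda^3\RR^9$ with its induced definite inner product, so pairing with $\phi$ gives $a\|\phi\|^2=0$, hence $a=0$ and $B\in\stab_{\fso(\RR^9)}(\phi)$. (Definiteness is genuinely used here; otherwise the boost $\be_+\wedge\be_-$ could survive for a ``null'' $\phi$.) With these two components supplied, your proof closes.
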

%\begin{remark}
%It follows that any highly supersymmetric background $(M=K_{\bar 0}/\fh,g,F)$ with a geometric symbol $(S',\varphi)$ with $\varphi$ as in the previous Lemma \ref{lemma:invariant-4form} admits a non-trivial  null vector field that is {\it invariant} under $K_{\bar 0}$. The class of supergravity backgrounds with such symbols includes {\it all} known highly supersymmetric pp-waves, either symmetric or not. Of course, in the symmetric case, the canonical connection coincides with the Levi-Civita connection and our $K_{\bar 0}$-invariant vector field  is just the defining parallel null vector field of the pp-wave.
%\end{remark}
We consider the case where $\phi$ has rank $5$ with maximum stabilizer, i.e., $\phi=\be_{123}+\be_{145}$, up to positive multiples. It is well known that the field equations    \eqref{eq:bosfieldeqs} are invariant under
a homothety that rescales both the metric and the 4-form and that the associated transvection superalgebras
are not isomorphic as filtered subdeformations. However, they are isomorphic if we simply allow for $g\in\CSpin(V)$ in equation \eqref{eq:generaliso}, so we may indeed restrict to $\phi=\be_{123}+\be_{145}$. We recall that $\mathbb R^9=\mathbb E\oplus\mathbb E^\perp$, where $\mathbb E=\langle \be_1,\ldots,\be_5\rangle$ is the support of $\phi$ and $\mathbb E^\perp=\langle \be_6,\ldots,\be_9\rangle$. 

Now $\stab_{\fso(V)}(\varphi)=\big(\mathfrak{u}(2)\oplus\mathfrak{so}(\mathbb E^\perp)\big)\ltimes (\be_+\wedge \mathbb R^9)$
according to Proposition \ref{lem:rank5trivectors-orthogonal} and the image $\operatorname{Im}\gamma^\varphi$ is a $36$-dimensional subspace of $\fso(V)$ satisfying
\begin{align*}
\widetilde\fh&=\operatorname{Im}\gamma^\varphi\cap\stab_{\fso(V)}(\varphi)=\big(\mathfrak{u}(1)\oplus\mathfrak{so}(\mathbb E^\perp)\big)\ltimes (\be_+\wedge \mathbb R^9)\;,%\\
%&\cong \big(\mathfrak{u}(1)\oplus 2\mathfrak{su}(2)\big)\ltimes (\be_+\wedge \mathbb R^9)
\end{align*}
with $\mathfrak{u}(1)=\mathbb R(\be_{23}+\be_{45})$. The kernel $\Ker\gamma^\varphi$ is $492$-dimensional.

We use the isomorphism of algebras $\Cl(V)\cong \Cl(\mathbb R^{9})\otimes\Cl(\mathbb R^{1,1}) \cong2\mathbb R(16)
\otimes\mathbb R(2)$ (we recall that $\mathbb R^9$ is negative definite for us) to write
$$S\cong\mathbb R^{16}\otimes\mathbb R^2\cong S_+\oplus S_-$$ 
with $S_{\pm}\cong\mathbb R^{16}$ and define the Gamma matrices as block matrices with square blocks of order $16$:
$$
\Gamma_+=\sqrt 2\begin{pmatrix} 0 & \Id\\ 0 & 0\end{pmatrix}\;,\quad \Gamma_-=-\sqrt 2\begin{pmatrix} 0 & 0\\ \Id & 0\end{pmatrix}\;,\quad 
\Gamma_i=\begin{pmatrix} \gamma_i & 0\\ 0 & -\gamma_i\end{pmatrix}\quad(i=1,\ldots,9)\;.
$$
The Gamma matrices $\gamma_i\in\Cl(\mathbb R^{9})$ will be described via a quaternionic formalism.

First of all $\Cl(\mathbb R^5)\cong 2\mathbb H(2)$ and an isomorphism is given by the quaternionic matrices
\begin{align*}
A_1=\begin{pmatrix}
1 & 0 \\
0 &-1
\end{pmatrix}\;,\;
A_2=\begin{pmatrix}
0 & 1 \\
1 & 0
\end{pmatrix}\;,\;
A_3=\begin{pmatrix}
0 & L_i \\
-L_i & 0
\end{pmatrix}\;,\;
A_4=\begin{pmatrix}
0 & L_j \\
-L_j & 0
\end{pmatrix}\;,\;
A_5=\begin{pmatrix}
0 & L_k \\
-L_k & 0
\end{pmatrix}\;,
\end{align*}
where $L_q:\mathbb H\to\mathbb H$ is left multiplication by $q\in\mathbb H$.
We set $\mathbb R^{16}\cong \mathbb H^2\otimes\mathbb R^2$
%$\begin{pmatrix}\mathbb H^2 \\ \mathbb H^2 \end{pmatrix}$
and define block matrices with square blocks of order $8$ by
\begin{align*}
\gamma_i=\begin{pmatrix} A_i & 0\\ 0 & -A_i\end{pmatrix}
\;,\;
\gamma_6&=\begin{pmatrix} 0 & \Id\\ \Id & 0\end{pmatrix}
\;,\;
\gamma_7=\begin{pmatrix} 0 & R_i\\ -R_i & 0\end{pmatrix}
\;,\;
\gamma_8=\begin{pmatrix} 0 & R_j\\ -R_j & 0\end{pmatrix}
\;,\;
\gamma_9=\begin{pmatrix} 0 & R_k\\ -R_k & 0\end{pmatrix}
\end{align*}
where $i=1,\ldots, 5$. Here $R_q:\mathbb H\to\mathbb H$ is right multiplication and we used the same symbol for its natural action on $\mathbb H^2$.

The subspaces $S_{\pm}\cong \mathbb H^4$ are isotropic w.r.t. the canonical symplectic form, which dually pairs them via the standard inner product on $\mathbb H^4$. Thanks to this and the above Gamma matrices, one easily recovers the Dirac current; here we will just mention 
that $\kappa(S_\pm,S_\pm)=\mathbb R\be_\pm$ and $\kappa(S_+,S_-)=\mathbb R^9$.
\begin{lemma}
The spinorial action of $\widetilde\fh$ is given by
\begin{align}
\be_+\wedge \mathbb R^9&=\left\langle\begin{pmatrix} 0 & \gamma_i \\ 0 & 0\end{pmatrix}\mid i=1,\ldots,9\right\rangle\;,\\
\mathfrak{so}(\mathbb E^\perp)&=\left\langle\begin{pmatrix} \gamma_i\gamma_j & 0 \\ 0& \gamma_i\gamma_j \end{pmatrix}\mid 
6\leq i<j\leq 9\right\rangle\;,\\
\mathfrak{u}(1)&=\left\langle
\begin{pmatrix} \gamma_2\gamma_3+\gamma_4\gamma_5 & 0 \\ 0& \gamma_2\gamma_3+\gamma_4\gamma_5\end{pmatrix}
\right\rangle\;,
\end{align}
where $\gamma_2\gamma_3+\gamma_4\gamma_5=-2{\tiny \left(\begin{array}{c|c}
 \begin{array}{c|c}
  L_i & 0 \\ \hline
  0 & 0
  \end{array}
  & 0 \\
\hline
  0 &
  \begin{array}{c|c}
  L_i & 0 \\ \hline
  0 & 0
  \end{array}
\end{array}\right)}$ as a quaternionic matrix of $\mathbb H^4$. In particular the subspaces of $S$ given by
\begin{align}
S'_1&=S_+\oplus \mathbb H\oplus (0)\oplus\mathbb H\oplus (0)\\
S'_2&=S_+\oplus (0)\oplus \mathbb H\oplus (0)\oplus \mathbb H
\end{align}
are $\widetilde\fh$-stable.
\end{lemma}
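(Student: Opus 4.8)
The plan is to compute the image under the spinor representation $\sigma$ of the three families of generators of $\widetilde\fh=\big(\mathfrak u(1)\oplus\fso(\mathbb E^\perp)\big)\ltimes(\be_+\wedge\mathbb R^9)$ determined above (with $\mathfrak u(1)=\mathbb R(\be_{23}+\be_{45})$ and $\fso(\mathbb E^\perp)=\langle\be_{ij}\mid 6\le i<j\le 9\rangle$), using the explicit $\Gamma$-matrices, and then to read off the $\widetilde\fh$-stability of $S_1'$ and $S_2'$ from the resulting block shapes. Since everything in the statement is a span, the overall normalization of $\sigma$ plays no role.

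Under the identification $\fso(V)\cong\Lambda^2 V$ one has $\sigma(v\wedge w)=\tfrac14(\Gamma_v\Gamma_w-\Gamma_w\Gamma_v)$, which reduces to $\tfrac12\Gamma_v\Gamma_w$ whenever $v\perp w$; as $\be_+$ is orthogonal to $\mathbb R^9$ and the $\be_i$ are mutually orthogonal, the first step is then a direct substitution of the stated block forms. From $\Gamma_+=\sqrt2\left(\begin{smallmatrix}0&\Id\\0&0\end{smallmatrix}\right)$ and $\Gamma_i=\left(\begin{smallmatrix}\gamma_i&0\\0&-\gamma_i\end{smallmatrix}\right)$ one gets $\Gamma_+\Gamma_i=-\sqrt2\left(\begin{smallmatrix}0&\gamma_i\\0&0\end{smallmatrix}\right)$ and $\Gamma_i\Gamma_j=\left(\begin{smallmatrix}\gamma_i\gamma_j&0\\0&\gamma_i\gamma_j\end{smallmatrix}\right)$, which yield the three displayed formulas up to a nonzero scalar. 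The counts $\dim(\be_+\wedge\mathbb R^9)=9$, $\dim\fso(\mathbb E^\perp)=\binom{4}{2}=6$ and $\dim\mathfrak u(1)=1$, together with the linear independence of $\gamma_1,\dots,\gamma_9$ and of the six products $\gamma_i\gamma_j$ ($6\le i<j\le 9$) inside $\Cl(\mathbb R^9)$, then confirm that these matrices genuinely span the images of the three summands.

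Next I would make $\gamma_2\gamma_3+\gamma_4\gamma_5$ explicit. Using $\gamma_i=\left(\begin{smallmatrix}A_i&0\\0&-A_i\end{smallmatrix}\right)$ for $i\le 5$, this reduces to the $2\times 2$-quaternionic identity $A_2A_3=\left(\begin{smallmatrix}-L_i&0\\0&L_i\end{smallmatrix}\right)$, $A_4A_5=\left(\begin{smallmatrix}-L_i&0\\0&-L_i\end{smallmatrix}\right)$, hence $A_2A_3+A_4A_5=\left(\begin{smallmatrix}-2L_i&0\\0&0\end{smallmatrix}\right)$, which is a one-line computation with the stated $A_i$ together with $L_jL_k=L_{jk}=L_i$. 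Re-inserting the outer block structure coming from the $\otimes\,\mathbb R^2$ factor then gives precisely the claimed $4\times 4$-quaternionic form $-2\,\mathrm{diag}\big(L_i,0,L_i,0\big)$.

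Finally, for the stability claim I would decompose $S_\pm\cong\mathbb R^{16}\cong\mathbb H^2\otimes\mathbb R^2$ into its four quaternionic lines $Q_1\oplus Q_2\oplus Q_3\oplus Q_4$, labelled so that $S_1'=S_+\oplus Q_1\oplus Q_3$ and $S_2'=S_+\oplus Q_2\oplus Q_4$, with $S_+\subset S_1'\cap S_2'$, and check the three families in turn: $\sigma(\be_+\wedge\be_i)$ maps all of $S$ into $S_+$ and hence preserves both $S_j'$; each $\sigma(\be_i\wedge\be_j)$ with $6\le i<j\le 9$ is a product of two off-diagonal matrices with $R$-type blocks, hence block-diagonal with entries a scalar multiple of some $R_q$ (using $R_pR_q=R_{qp}$), and since componentwise right multiplication preserves each quaternionic line of $\mathbb H^2$ it preserves every $Q_a$; and $\sigma(\be_{23}+\be_{45})$ acts as $-2L_i$ on $Q_1,Q_3$ and as $0$ on $Q_2,Q_4$ by the previous step, so it too preserves every $Q_a$. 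The $\widetilde\fh$-stability of $S_1'$ and $S_2'$ follows at once. The main point requiring care throughout is purely bookkeeping — the normalization of $\sigma$ and the identification $\fso(V)\cong\Lambda^2V$, the nested block structure of $\mathbb R^{16}\cong\mathbb H^2\otimes\mathbb R^2$, and the left-versus-right multiplication conventions on $\mathbb H$ together with the attendant signs — but no conceptual obstacle arises.
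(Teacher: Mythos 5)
Your computation is correct and is precisely the ``straightforward'' verification the paper has in mind (the paper omits the proof of this lemma entirely, only remarking that most details of the calculation are omitted): the block products $\Gamma_+\Gamma_i$, $\Gamma_i\Gamma_j$, the quaternionic identity $A_2A_3+A_4A_5=\bigl(\begin{smallmatrix}-2L_i&0\\0&0\end{smallmatrix}\bigr)$, and the stability argument via the four quaternionic lines of $S_-\cong\mathbb H^4$ all check out. No gaps.
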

Now $K=(\gamma^{\varphi})^{-1}\big(\stab_{\fso(V)}(\varphi)\big)$ clearly contains $\Ker\gamma^\varphi$%=\Lambda^1 V\oplus\big(\Ker\gamma^\varphi\cap(\Lambda^2V\oplus\Lambda^5V)\big)$ 
and since $\Lambda^1 V\subset\Ker\gamma^\varphi$ we may write 
$
K=\Ker\gamma^\varphi\oplus C
$
for some $C\subset\Lambda^2V\oplus\Lambda^5 V$. Clearly $\dim C=\dim\widetilde\fh=16$ and it turns out that 
\begin{align*}
&\be_{23}+\be_{45}\;,\;\;2\be_{13}+\be_{36789}\;,\;\;-2\be_{12}-\be_{26789}\;,\;\;2\be_{15}+\be_{56789}\;,\;\;-2\be_{14}-\be_{46789}\;,\\
&\be_{45789}+\be_{23789}\;,\;\;\be_{45689}+\be_{23689}\;,\;\;\be_{45679}+\be_{23679}\;,\;\;\be_{45678}+\be_{23678}\;\;\;\text{and}\\
&\be_-\wedge\big\{-2\be_{1}+\be_{6789},\;\be_{4589}+\be_{2389},\;\be_{4579}+\be_{2379},\;\be_{4578}+\be_{2378},\;\be_{4569}+\be_{2369}\\
&\be_{4568}+\be_{2368},\;\be_{4567}+\be_{2367}\big\}
\end{align*}
are generators of $C$. Using %\eqref{definition-gamma-useful} and 
this decomposition one may check that $\odot^2 S'\subset K$ in both cases -- in particular $\mathfrak D\subset K$ -- and that $S'$ cannot be enlarged
(otherwise one constructs $s\odot t\in\odot^2 S'$ with vanishing Dirac current and such that $\gamma^\varphi(s,t)\notin\widetilde\fh$).
%$\gamma^{\varphi}(\odot^2 S')\subset \widetilde\fh$, in particular $\gamma^{\varphi}(\mathfrak D)\subset \widetilde\fh$, i.e.,
\begin{corollary}
The pairs $(\varphi,S'_1)$ and $(\varphi,S'_2)$ are Lie pairs, and they are maximal.
\end{corollary}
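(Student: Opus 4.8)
The plan is to obtain the Lie pair property for free from the ``separation of variables'' machinery of \S\ref{sec:class-probl-highly}, and then to prove maximality by showing directly that adjoining any further spinor to $S'_i$ destroys the defining inclusion $\mathfrak D\subset K$.

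For the Lie pair part, recall that $S'_1$ and $S'_2$ are $\widetilde\fh$-stable by the Lemma above, so by the discussion around \eqref{eq:hinstab} all that remains is the quadratic inclusion $\odot^2 S'_i\subset K$, where $K=(\gamma^\varphi)^{-1}\big(\stab_{\fso(V)}(\varphi)\big)=\Ker\gamma^\varphi\oplus C$ and $C$ is spanned by the sixteen two- and five-vectors listed above. This is the only place the explicit quaternionic model of the Gamma matrices is needed: one expands $\gamma^\varphi(s,t)$ with $s,t$ running over a basis of $S'_i$ and verifies that each value lies in $\widetilde\fh=\operatorname{Im}\gamma^\varphi\cap\stab_{\fso(V)}(\varphi)$; by $\widetilde\fh$-equivariance it suffices to treat one representative of each $\widetilde\fh$-isotypic block of $\odot^2 S'_i$. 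Granting $\odot^2 S'_i\subset K$, the Dirac kernel of $S'_i$ satisfies $\mathfrak D\subset\odot^2 S'_i\subset K$, hence $\fh:=\fh_{(\varphi,S'_i)}=\gamma^\varphi(\mathfrak D)\subset\gamma^\varphi(K)=\widetilde\fh$. Therefore $\fh\subset\stab_{\fso(V)}(\varphi)$ and $\fh\subset\widetilde\fh\subset\stab_{\fso(V)}(S'_i)$, which are exactly the two conditions \eqref{eq:Lie-pair}, so $(\varphi,S'_i)$ is a Lie pair for $i=1,2$.

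For maximality, I would argue by contradiction: suppose $(\varphi,S'')$ is a Lie pair with $S'_i\subsetneq S''$ and pick $t\in S''\setminus S'_i$. Since $S_+\subset S'_1\cap S'_2$ and the $S_-$-components of $S'_1$ and $S'_2$ are complementary in $S_-$, after subtracting its $S'_i$-component (which still lies in $S''$) and using $\widetilde\fh$-equivariance — in particular the action of $\mathfrak{so}(\mathbb E^\perp)\oplus\mathfrak u(1)$ and of the abelian radical $\be_+\wedge\mathbb R^9$ — we may normalise $t$ to a representative lying in $S_-\setminus S'_i$. As $(\varphi,S'')$ is a Lie pair, $\gamma^\varphi(\omega)\in\stab_{\fso(V)}(\varphi)$, equivalently $\omega\in K$, for every $\omega\in\odot^2 S''$ with vanishing Dirac current. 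Now the linear map $s\mapsto\kappa(s,t)$ from $S_+$ into $\mathbb R^9\subset V$ has kernel of dimension at least $16-9=7$, and on each such normalised $t$ a direct computation produces $s$ in this kernel with $\gamma^\varphi(s,t)\notin\widetilde\fh$, hence $\gamma^\varphi(s,t)\notin\stab_{\fso(V)}(\varphi)$. Then $s\odot t$ has vanishing Dirac current, so lies in the Dirac kernel of $S''$, yet $s\odot t\notin K$ — contradiction. Hence $S'_i$ admits no proper Lie pair enlargement.

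Everything formal here is routine: the passage from $\odot^2 S'_i\subset K$ to the Lie pair axioms via \eqref{eq:hinstab}, and the reduction of maximality to producing a single obstructing bispinor for each candidate direction. The genuine work is two finite Clifford-algebra computations in $\mathbb H^2\otimes\mathbb R^2$, namely the inclusion $\odot^2 S'_i\subset K$ and, for each $\widetilde\fh$-direction outside $S'_i$, the existence of $s\in S_+$ with $\kappa(s,t)=0$ and $\gamma^\varphi(s,t)$ not stabilising $\varphi$. I expect the latter to be the main obstacle: the several isotypic cases must all be checked, and although $\widetilde\fh$-equivariance collapses each to a single representative, it is exactly here that the construction could fail were the required ``bad'' bispinor to be absent.
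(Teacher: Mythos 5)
Your proposal is correct and follows essentially the same route as the paper: establish the quadratic inclusion $\odot^2 S'_i\subset K$ by explicit computation in the quaternionic model, deduce the Lie pair property via \eqref{eq:hinstab} together with the $\widetilde\fh$-stability of $S'_i$, and prove maximality by exhibiting, for any putative enlargement, a bispinor $s\odot t$ with vanishing Dirac current and $\gamma^\varphi(s,t)\notin\widetilde\fh$. Both you and the paper defer the same two finite Clifford-algebra verifications, so there is no substantive difference in approach.
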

The Dirac kernel $\mathfrak D$ has dimension $\dim\mathfrak D=\dim \odot^2 S'-\dim V=289$ in both cases
but the two spaces are different; indeed we have
\begin{align}
\fh=\gamma^{\varphi}(\mathfrak D)=
\begin{cases}
\widetilde\fh=\big(\mathfrak{u}(1)\oplus\mathfrak{so}(\mathbb E^\perp)\big)\ltimes (\be_+\wedge \mathbb R^9)\;\;\text{if}\;\;S'=S'_1,\\
\mathfrak{u}(1)\ltimes (\be_+\wedge \mathbb E^\top)\;\;\text{if}\;\;S'=S'_2,
\end{cases}
\end{align}
where $\mathbb E^\top=\langle\be_2,\be_3,\be_4,\be_5\rangle\subset \mathbb E$. In the first case
$$
X(v)=\gamma^{\varphi}(\Sigma v)\subset \gamma^{\varphi}(\odot^2 S')\subset\widetilde \fh=\fh
$$
for any section $\Sigma:V\to\odot^2 S'$, so we may arrange $X=0$ without loss of generality. It turns out that  $\gamma^{\varphi}(\odot^2 S')\subset\fh$ in the second case too, so $X=0$. In particular $d\varphi=0$.

Using Gamma matrices, it is straightforward to see 
$$\beta_{\be_+}=0\;,\qquad\beta_{\be_i}=\begin{pmatrix} 0 & *\\ 0 & 0\end{pmatrix}\quad(i=1,\ldots,9)\;,$$ 
as endomorphisms of $S=S_+\oplus S_-$, and that $\beta_{\be_-}$ preserves the decomposition of $S$ into eight copies of $\mathbb H$.
It follows that $\beta_v(S')\subset S'$ for all $v\in V$ in both cases.

We are left with the identities involving the curvature tensor $R:\Lambda^2 V\to\fso(V)$. To solve them, we
decompose 
%$$V=\mathbb R\be_+\oplus\mathbb R\be_-\oplus\mathbb R\be_1\oplus\langle\be_2,\be_3\rangle\oplus\langle\be_4,\be_5\rangle\oplus\mathbb E^\perp$$
$$V=\mathbb R\be_+\oplus\mathbb R\be_-\oplus\mathbb R\be_1\oplus\mathbb E^\top\oplus\mathbb E^\perp$$
and accordingly write $v=v_++v_-+v_1+v_{\top}+v_\perp$ 
for any $v\in V$. The general strategy to solve \eqref{eq:symbolV}
is analogous to \cite[\S 4.3]{FOFS2017}, although a tad more involved: we give the final results and briefly comment
on how we derived them. In the first and the second case, respectively, we get:
\begin{equation}
\label{eq:curvature}
\begin{aligned}
	R(v,w)&=-\eta(\be_+,v_-)\big(\tfrac 49 w_1\wedge \be_+ +\tfrac{1}{36}w_\top\wedge\be_+ +\tfrac{1}{9}w_\perp\wedge\be_+\big)\\
	&\;\;\;+\eta(\be_+,w_-)\big(\tfrac 49 v_1\wedge \be_+ +\tfrac{1}{36}v_\top\wedge\be_+ +\tfrac{1}{9}v_\perp\wedge\be_+\big)\;,\\
	R(v,w)&=-\tfrac14\eta(\be_+,v_-)w_\top\wedge\be_+ +\tfrac14\eta(\be_+,w_-)v_\top\wedge\be_+\;,
	\end{aligned}
	\end{equation}
	for all $v,w\in V$. A subtle point to obtain equations \eqref{eq:curvature} is that
	the R.H.S. of \eqref{eq:symbolV} makes sense for all $s\in S$ but the action in there is {\it not} that of an element of $\fso(V)$. Upon restriction to $S'$, one may re-absorb to $\Lambda^2 V$ the contributions coming from the elements in
	$\End(S)\cong\bigoplus_{0\leq p\leq 5}\Lambda^p V$
	with $p\neq 2$, and this is how we arrived at \eqref{eq:curvature}. Checking equation \eqref{eq:symbolIV} is then a direct matter, if tedious, of Clifford identities, as in e.g. \cite[Theorem 16]{FOFS2017}.

We recovered two symmetric pp-waves backgrounds that were discovered in \cite{GH}, one
indecomposable and the other decomposable. The proof of the existence comes at the same time with
the construction of their transvection superalgebras -- they are the filtered deformations
of $\fa=\fh\oplus S'\oplus V$ determined by $\varphi$ -- in particular it comes with the knowledge that 
$N=24$ is the amount of Killing spinors  preserved.
\begin{acknowledgement}
A.S. is grateful to G. Carnovale, F. Esposito and J. Figueroa-O'Farrill for useful discussions.
The research of A. S. is supported by 
the project ``Supergravity backgrounds and Invariant theory'' at the University of Padova and partly supported
by project BIRD179758/17 ``Stratifications in algebraic groups, spherical varieties, Kac Moody algebras
and Kac Moody groups'' and project DOR1717189/17 ``Algebraic, geometric and combinatorial properties of conjugacy classes''.
\end{acknowledgement}
% BibTeX users please use
% \bibliographystyle{}
% \bibliography{}
%

\end{document}